\documentclass{article}
\usepackage[dvipsnames]{xcolor}

\usepackage{dashbox}

\usepackage[dvipsnames]{xcolor}

\makeatletter
\@ifpackageloaded{hyperref}{
}{
  \usepackage{hyperref}
}
\makeatother

\hypersetup{colorlinks, citecolor=OliveGreen}

\usepackage{local-macros}
\reporttrue%

\usepackage{booktabs}
\usepackage{subcaption}
\usepackage{ifxetex}
\usepackage{microtype}
\usepackage{titling}

\ifxetex%
  \usepackage[no-math]{fontspec}
  \usepackage{mathspec}
  \usepackage{xltxtra}
  \usepackage{xunicode}
  \usepackage{titling}
  \usepackage{titlesec}
  \usepackage{abstract}

  \IfFontExistsTF{Minion Pro}{
    \setmainfont{Minion Pro}
  }{
  }

  \IfFontExistsTF{Cronos Pro}{
    \setsansfont{Cronos Pro}
  }{
  }

  \IfFontExistsTF{PragmataPro}{
    \setmonofont[Scale=0.9]{PragmataPro}
    \setmathsfont(Greek,Latin,Digits)[Scale=0.9]{PragmataPro}
    \setmathtt[Scale=0.9]{PragmataPro}
    \setmathrm[Scale=0.9]{PragmataPro}
    \setmathcal[Scale=0.9, StylisticSet=07]{PragmataPro}
    \renewcommand\ISET{\mathbf{SET}}
    \renewcommand\IFIN{\mathbf{FIN}}
    \renewcommand\IVar[1]{\mathbf{Var}_{#1}}
    \renewcommand\IVal{\mathbf{Val}}
    \renewcommand\ITm[1]{\mathbf{Prog}_{#1}}
    \renewcommand\ITl[1]{\mathbf{Tl}_{#1}}
    \renewcommand\TYPE{\mathbf{Type}}
    \renewcommand\LOC{\mathbf{Loc}}
    \renewcommand\ILam[1]{\mathbf{\lambda}\Parens*{#1}}
    \renewcommand\IKLam[1]{\mathbf{\lambda}_{\CLK}\Parens*{#1}}
    \renewcommand\ClkTopos{\mathbf{S}_{\CLK}}
    
    \renewcommand\CloKFun[1]{\mathfrak{Fun}_{\CLK}\Parens*{#1}}
    \renewcommand\Conns[1]{\mathfrak{Conn}\Parens*{#1}}
  }{
  }

  \IfFontExistsTF{UnifrakturMaguntia}{
    \setmathfrak[Scale=1.1]{UnifrakturMaguntia}
  }{
  }

  \titleformat*{\section}{\Large\bfseries\sffamily}
  \titleformat*{\subsection}{\large\bfseries\sffamily}
  \titleformat*{\subsubsection}{\itshape\bfseries\sffamily}
  \titleformat*{\paragraph}{\itshape\bfseries\sffamily}

\else
  \usepackage[tt=false]{libertine}
  \usepackage[libertine]{newtxmath}
\fi

\allowdisplaybreaks%

\usepackage[section]{placeins}
\makeatletter
\AtBeginDocument{%
  \expandafter\renewcommand\expandafter\subsection\expandafter{%
    \expandafter\@fb@secFB\subsection%
  }%
}
\makeatother

\theoremstyle{plain}
\newtheorem{theorem}{Theorem}
\newtheorem{lemma}[theorem]{Lemma}
\newtheorem{corollary}[theorem]{Corollary}
\newtheorem{remark}[theorem]{Remark}

\theoremstyle{definition}
\newtheorem{definition}[theorem]{Definition}
\newtheorem{notation}[theorem]{Notation}

\jmsdraftfalse%

\begin{document}

\title{Guarded Computational Type Theory}
\author{
  Jonathan Sterling\thanks{\texttt{jmsterli@cs.cmu.edu}}\\
  {\small Carnegie Mellon University}
  \and
  Robert Harper\thanks{\texttt{rwh@cs.cmu.edu}}\\
  {\small Carnegie Mellon University}
}

\maketitle

\begin{abstract}
  Nakano's \emph{later} modality can be used to specify and define
recursive functions which are causal or synchronous; in concert with
a notion of clock variable, it is possible to also capture the
broader class of productive (co)programs. Until now, it has been
difficult to combine these constructs with dependent types in a way
that preserves the operational meaning of type theory and admits a
hierarchy of universes $\CT{\TyUniv{i}}$.

We present an operational account of guarded dependent type theory
with clocks called \ClockCTT{}, featuring a novel clock intersection
connective $\CT{\ClkIsect{k}{A}}$ that enjoys the clock irrelevance
principle, as well as a predicative hierarchy of universes
$\CT{\TyUniv{i}}$ which does not require any indexing in clock
contexts.
\ClockCTT{} is simultaneously a programming language with a rich
specification logic, as well as a computational metalanguage that
can be used to develop semantics of other languages and logics.

\end{abstract}

\section{Introduction}

In a functional programming language, every definable function is
continuous in the following sense: each finite quantity of output is
induced by some finite quantity of input. To make this more precise,
if we consider the case of stream transformers
$\Of{F}{\mathbb{S}\to\mathbb{S}}$, we can see that finite prefixes of
the output depend only on finite prefixes of the input:
\begin{equation}\label{eq:stream-continuity}
  \forall\Of{\alpha}{\mathbb{S}}.\
  \forall\Of{i}{\Nat}.\
  \exists\Of{n}{\Nat}.\
  \forall\Of{\beta}{\mathbb{S}}.\
  \EqPrefix{n}{\alpha}{\beta}
  \Rightarrow
  \IsEq{{F(\alpha)}_i}{{F(\beta)}_i}
\end{equation}

From a programming perspective, this can be rephrased in terms of
\emph{reads} and \emph{writes}: for each write, the program is
permitted to perform a finite but unbounded number of
reads.

\paragraph{Causality} Another possible class of functionals are the
ones that can be implemented by a program which performs at most one
read for every write. These are called the \emph{causal} functionals,
and in the case of stream transformers, they are characterized by the
following causality principle:
\begin{equation}\label{eq:stream-causality}
  \forall\Of{\alpha}{\mathbb{S}}.\
  \forall\Of{i}{\Nat}.\
  \forall\Of{\beta}{\mathbb{S}}.\
  \EqPrefix{i}{\alpha}{\beta}
  \Rightarrow
  \IsEq{{F(\alpha)}_i}{{F(\beta)}_i}
\end{equation}

In other words, causal programs are the ones whose reads and writes
proceed in lock-step. While we can surely carve out this class of
functionals using predicates like~(\ref{eq:stream-causality}) above,
it is actually possible to define a new notion of stream
$\mathbb{S}_{\blacktriangleright}$ such that all functionals
$\Of{F}{\mathbb{S}_{\blacktriangleright}\to\mathbb{S}_{\blacktriangleright}}$
are \emph{automatically} causal in the sense
of~(\ref{eq:stream-causality}). This kind of stream is called a
``guarded stream'', and we will use the term ``sequence'' to refer to
ordinary streams.

Whereas ordinary streams or sequences are usually formed as the
greatest solution to the isomorphism
$\IsIso{\mathbb{S}}{\Nat\times\mathbb{S}}$, the guarded streams are
formed using a special ``later modality'' $\blacktriangleright$ due to
\citeauthor{nakano:2000},\footnote{The notation $\bullet$ was
  originally used in~\citet{nakano:2000}.} solving the isomorphism
$\IsIso{\mathbb{S}_{\blacktriangleright}}{\Nat\times\ObjLater{}{\mathbb{S}_{\blacktriangleright}}}$. Modalities
of this kind usually enjoy at least the following principles:
\begin{mathpar}
  A\to\ObjLater{}{A}
  \and
  \IsIso{\ObjLater{}{\Parens{A\times{}B}}}{\Parens{\ObjLater{}{A}\times\ObjLater{}{B}}}
  \and
  \ObjLater{}{\Parens{A\to{}B}}\to\Parens{\ObjLater{}{A}\to\ObjLater{}{B}}
  \and
  \Parens{\ObjLater{}{A}\to{}A}\to A
\end{mathpar}

The ratio of reads and writes specified in the type of a stream
transformer can be modulated by adjusting the number of later
modalities in the input and the output to the function.

\paragraph{Nakano's modality in semantics}
What is remarkable about \citeauthor{nakano:2000}'s later modality is
that fixed points for functions $\Of{F}{\ObjLater{}{A}\to{}A}$ always
exist, without placing any restriction on $F$ (such as monotonicity or
positivity). Applied within a type-theoretic metalanguage, then, the
later modality induces solutions to recursive domain equations which
are not set-theoretically interpretable, such as the following classic
definition of semantic types for a programming language with mutable
store~\citep{appel-mellies-richards-vouillon:2007,birkedal-mogelberg-schwinghammer-stovring:2011}:
\[
  \IsIso{
    \TYPE
  }{
    \Parens*{
      \LOC
      \xrightarrow{\mathit{fin}}
      \ObjLater{}{\TYPE}
    }
    \to
    \Pow{
      \IVal
    }
  }
\]

The later modality captures and internalizes the basic features of
less abstract techniques like step-indexing, enabling more streamlined
definitions and proofs that elide the bureaucratic performance of
explicit indexing and monotonicity obligations. Today, modalities of
this kind are of the essence for modern program logics like
\textbf{\textsf{Iris}}~\citep{jung-et-al:2015}.

\paragraph{Programming applications}

The fact that functions $\Of{F}{\ObjLater{}{A}\to{}A}$ always have
fixed points has beneficial consequences for the practice of (total)
functional programming on infinite data. In particular, clumsy
syntactic guardedness conditions which ensure productivity (such as
those used in Coq~\citep{coq:reference-manual},
Agda~\citep{norell:2009} and Idris~\citep{brady:2013}) can be replaced
with type structure, enabling more compositional styles of
programming.\footnote{A very closely related idea, sized types, has
  been deployed in the Agda proof assistant for exactly this
  purpose~\citep{vezzosi:2015}.}

However, the later modality is too restrictive to be used on its own,
because it rules out the functions which are not causal; but
acausal functions on infinite data are perfectly sensible, and are
very common in the real world. Consider, for instance, the function
which drops every second element from a stream! To define this function, one
would need a way to delete the modality; but without suitable
restrictions, such an elimination principle would trivialize the
modality and render it useless.

To resolve this problem, \citeauthor{atkey-mcbride:2013} have
introduced a notion of abstract clock $\kappa$ to represent ``time
streams'' together with universal quantification $\forall\kappa$ over
clocks, replacing Nakano's modality with a clock-indexed family of
modalities $\ObjLater{\kappa}$~\citep{atkey-mcbride:2013}.

Defining the type of $\kappa$-guarded streams as the solution to the
equation
$\IsEq{\mathbb{S}_\kappa}{\Nat\times\ObjLater{\kappa}{\mathbb{S}_\kappa}}$,
it is possible to define the acausal function that drops every
other element of a stream, with type
$\Parens{\forall\kappa.\ \mathbb{S}_\kappa}\to\Parens{\forall\kappa.\
  \mathbb{S}_\kappa}$. The reason that this is possible is that their
calculus exhibits the isomorphism
$\IsIso{\Parens{\forall\kappa.\
    \ObjLater{\kappa}{A}}}{\Parens{\forall\kappa.\ A}}$, as well as a
\emph{clock irrelevance} principle:
$\IsEq{\Parens{\forall\kappa.\ A}}{A}$ assuming that $\kappa$ is not
free in $A$; we summarize the constructs of this calculus in
Figure~\ref{fig:atkey-mcbride-rules}.

\newcommand\AtMcOf[4]{{#1};{#2}\vdash{#3}:{#4}}

\begin{figure}
  \begin{mathpar}
    \Infer{
      \AtMcOf{\Delta}{\Gamma}{\Lambda\kappa.\ e}{\forall\kappa.\ A}
    }{
      \AtMcOf{\Delta,\kappa}{\Gamma}{e}{A}
    }
    \and
    \Infer{
      \AtMcOf{\Delta}{\Gamma}{e[\kappa']}{A[\kappa\hookrightarrow\kappa']}
    }{
      \AtMcOf{\Delta}{\Gamma}{e}{\forall\kappa.\ A}
      \\\\
      \Member{\kappa'}{\Delta}
      \\
      \NotIn{\kappa'}{\FreeClocks{A}}
    }
    \and
    \Infer{
      \AtMcOf{\Delta}{\Gamma}{\mathtt{pure} (e)}{\ObjLater{\kappa}{A}}
    }{
      \AtMcOf{\Delta}{\Gamma}{e}{A}
    }
    \and
    \Infer{
      \AtMcOf{\Delta}{\Gamma}{f\circledast{}e}{\ObjLater{\kappa}{B}}
    }{
      \AtMcOf{\Delta}{\Gamma}{f}{\ObjLater{\kappa}{\Parens{A\to{}B}}}
      \\\\
      \AtMcOf{\Delta}{\Gamma}{e}{\ObjLater{\kappa}{A}}
    }
    \and
    \Infer{
      \AtMcOf{\Delta}{\Gamma}{\mathtt{force} (e)}{\forall\kappa.\ A}
    }{
      \AtMcOf{\Delta}{\Gamma}{e}{\forall\kappa.\ \ObjLater{\kappa}{A}}
    }
    \and
    \Infer{
      \AtMcOf{\Delta}{\Gamma}{\mathtt{fix} (f)}{A}
    }{
      \AtMcOf{\Delta}{\Gamma}{f}{\ObjLater{\kappa}{A}\to{}A}
    }
    \\
    \IsEq{\Parens{\forall\kappa.\ A}}{A}\quad(\NotIn{\kappa}{\FreeClocks{A}})
    \and
    \IsEq{\forall\kappa.\ A\times{}B}{(\forall\kappa.\ A)\times(\forall\kappa.\ B)}
  \end{mathpar}
  \caption{Selection of rules from~\citet{atkey-mcbride:2013}.}\label{fig:atkey-mcbride-rules}
\end{figure}

\subsection{Dependent type theory and guarded recursion}

It has been surprisingly difficult to cleanly extend the account of
guarded recursion with clocks to a full-spectrum dependently typed
programming language which enjoys any combination of the following
properties:

\begin{enumerate}
\item \emph{Computational canonicity:} any closed element of type
  $\TyBool$ computes to either $\Tt$ or $\Ff$.
\item \emph{Simple universes:} a single predicative and cumulative
  hierarchy of universes $\TyUniv{i}$ closed under base types,
  dependent function types, dependent pair types, lower universes,
  \textbf{later modalities} and \textbf{clock quantifiers}.
\item \emph{Clock irrelevance:} if $k$ is not mentioned in $A$
  and $A$ is a type, then $\forall k.\ A$ is equal to
  $A$.\footnote{Depending on the specific type theory, it may be desirable to
  realize this principle either as an isomorphism or as a definitional
  equality.}
\end{enumerate}

However, a dependent type theory with support for guarded recursion
and clocks is desirable for multiple reasons; here, we have focused on
causality as a useful construct for developing types qua behavioral
specifications on program behavior, but there is also the potential to
use such a dependent type theory as a computational metalanguage for
developing and proving the semantics of other languages and logics,
vaporizing the highly-bureaucratic step-indexed Kripke Logical
Relations which usually must be employed.

The latter perspective is elaborated in the context of guarded
dependent type theory without clocks in~\citet{paviotti:2015} as well
as~\citet{bizjak-birkedal-miculan:2014}, and we anticipate that the
addition of clocks will enable further developments along these lines.

\subsection{Guarded Computational Type Theory}

We contribute a new extensional and behavioral dependent type
theory \ClockCTT{} (pronounced ``Guarded Computational Type Theory'')
for guarded recursion and clocks in the Nuprl
tradition~\citep{allen-et-al:2006}, enjoying the following
characteristics:
\begin{enumerate}
\item Operational semantics and an immediate canonicity result at base
  types.

\item A clock-indexed later modality $\CT{\CttLater{k}{A}}$ which
  requires no special syntax for introduction or destruction.

\item A decomposition of the clock quantifier
  from~\citet{bizjak-mogelberg:2017} into
    a parametric part $\CT{\ClkIsect{k}{A}}$ and a non-parametric part
    $\CT{\ClkProd{k}{A}}$. The former is an intersection, and enjoys the
    crucial clock irrelevance principle; the latter is the cartesian product of
    a clock-indexed family of sets (right adjoint to weakening).

\item A guarded fixed point combinator which can be assigned the type
  $\CT{(\CttLater{k}{A}\to A)\to A}$.

\item A predicative hierarchy of universes $\CT{\TyUniv{i}}$ closed
  under all the connectives, free of indexing by clock contexts.
\end{enumerate}

Our operational account and canonicity result
(Theorem~\ref{thm:canonicity}) means that \ClockCTT{} can be regarded
simultaneously as a programming language with a rich specification
logic, \emph{and} as a computational metalanguage for developing
operational and denotational semantics of other languages and logics.

\subsubsection*{Coq formalization and synthetic approach}
Using the Coq proof assistant, we have formalized the fragment of our
type theory that contains universes, dependent function and pair
types, booleans, the later modality, and the two clock quantifiers (intersection and product);
the full Coq development is available
in~\citet{sterling-harper:2018:coq}. Throughout this paper, theorems
and rules will be related to their Coq analogues using a reference
like \CoqRef{Module.theorem\_name}.

\ifreport%
The principal difference between our informal presentation and the Coq
formalization is that in the formalization of the formal term
language, we use De Bruijn indices for both variables and clock names,
whereas here we use concrete names for readability. This simplified
the lemmas that we needed to prove about syntax, and about the
elaboration of formal terms into programs.
\fi%

We have used Coq's type theory as a proxy for the internal language of
the presheaf topos that we develop herein, axiomatizing in Coq
whatever objects and principles come not from the standard type
theoretic constructions, but are instead imported into the system via
forcing. The entire construction of \ClockCTT{}, then, is carried out
within the internal language of the topos, an anti-bureaucratic
measure which has made an otherwise daunting formalization effort
feasible.

The idea of developing operational models of programming languages
within the internal language of a topos is not new; see for
instance~\citet{staton:2007}, \citet{bizjak-birkedal-miculan:2014} and
\citet{paviotti:2015}. However, we believe that ours is the first
instance of this technique being applied toward the development of
semantics for a full-spectrum dependent type theory.


\subsection*{Acknowledgments}
  We are thankful to Carlo Angiuli, Lars Birkedal, Ale\v{s} Bizjak, Jonas Frey,
  Daniel Gratzer, Adrien Guatto, Pieter Hofstra, Bas Spitters, Sam Staton, and
  Joseph Tassarotti for helpful discussions on the semantics of guarded
  recursion, clock names and universe hierarchies. Thanks to David Christiansen
  for his comments on a draft of this paper.

  The authors gratefully acknowledge the support of the Air Force
  Office of Scientific Research through MURI grant
  FA9550-15-1-0053. Any opinions, findings and conclusions
  or recommendations expressed in this material are those of the
  authors and do not necessarily reflect the views of the AFOSR.%

\section{Programming in \ClockCTT{}}\label{sec:programming}
Following the \emph{computational meaning-theoretic} tradition
initiated by \citet{martin-lof:1979}, and developed further in the
Nuprl project~\citep{allen-et-al:2006}, we build Guarded Computational
Type theory (\ClockCTT{}) on the basis of an untyped programming
language, whose syntax is summarized in Figure~\ref{fig:syntax}.

In this paper, we distinguish between the syntax of \CT{\texttt{formal
    terms}} and the language of \AT{\textbf{\textsf{programs}}}; formal terms
are used by clients of a formalism for type theory, whereas programs
are the things which are actually endowed with operational
meaning. For many languages, the difference between formal terms and
programs is not so great, but for us the difference is essential; to
avoid confusion, we distinguish between these levels using colors.

\begin{figure*}[t]
  \begin{minipage}[c]{1.0\linewidth}
    \[
      \begin{array}{rrlr}
        \GrmLine{clocks}{k}{k}
        \\
        \GrmLine{terms}{M,A}{
           x,
           \Lam{x}{M},
           \Lam{k}{M},
           M\ N,
           M\ k,
           \Pair{M}{N},
           \Fst{M},
           \Snd{M}
        }
        \\
        &&
        \FmtProds{
           \Fix{x}{M},
           \bigstar,\Tt,\Ff,\If{M}{N}{O}
        }
        \\
        &&
        \FmtProds{
           \Ze,\Su{M},\IfZe{M}{N}{x}{O},
           \Sup{M}{x}{N},
           \WRec{M}{x}{y}{z}{N}
        }
        \\
        &&
        \FmtProds{
           \DProd{x}{A}{B},
           \DSum{x}{A}{B},
           \WTy{x}{A}{B},
           \TyEqu{A}{M}{N}
        }
        \\
        &&
        \FmtProds{
           \ClkProd{k}{A},
           \ClkIsect{k}{A},
           \CttLater{k}{A}
        }
        \\
        &&
        \FmtProds{
           \TyVoid,\TyUnit,
           \TyBool,\TyNat,\TyUniv{i}
        }
        \\
        \GrmLine{clock contexts}{\Delta}{\cdot, {\Delta,k}}
        \\
        \GrmLine{variable contexts}{\Psi}{\cdot, {\Psi,x}}
        \\
        \GrmLine{typing contexts}{\Gamma}{\cdot, {\Gamma,x:A}}
      \end{array}
    \]
  \end{minipage}
  \caption{The syntax of formal terms in Guarded Computational Type
    Theory (\ClockCTT). Formal terms $\CT{M}$ are identified up to
    renamings of their bound variables; by convention, bound variables
    are always assumed fresh.}\label{fig:syntax}
\end{figure*}

\paragraph{Formal Terms} The grammar includes operators for both terms and
types, which are not distinguished syntactically in any way. Typehood, equality
and type membership are \emph{semantic} properties which will be imposed after
we propound the meaning explanation in Section~\ref{sec:meaning-explanation}.
We include syntax for dependent function types $\CT{\DProd{x}{A}{B}}$,
dependent pair types $\CT{\DSum{x}{A}{B}}$, wellordering types
$\CT{\WTy{x}{A}{B}}$, extensional equality types $\CT{\TyEqu{A}{M}{N}}$,
clock-indexed later modalities $\CT{\CttLater{k}{A}}$, clock product types
$\CT{\ClkProd{k}{A}}$, clock intersection types $\CT{\ClkIsect{k}{A}}$,
booleans, natural numbers, and a countable hierarchy of type universes
$\CT{\TyUniv{i}}$. We define the following derived forms for non-dependent
function and pair types:
\begin{mathpar}
  \Define{
    \CT{A\to{} B}
  }{
    \CT{\DProd{x}{A}{B}}
  }
  \and
  \Define{
    \CT{A\times{} B}
  }{
    \CT{\DSum{x}{A}{B}}
  }
\end{mathpar}

\paragraph{Forming fixed points and primitive recursors}

General fixed points can be programmed in \ClockCTT{} exactly as in
the untyped $\lambda$-calculus, but in order to simplify our
metatheorems we have provided a primitive fixed point operator
$\CT{\Fix{x}{M}}$. This can, for instance, be used to realize the induction
principle for the natural numbers.

When a function has type $\CT{\CttLater{\kappa}{A}\to{A}}$, its
\emph{guarded} fixed point always exists and has type
$\CT{A}$. Because \ClockCTT{} is dependently typed, it is very easy
for us to write a program that computes the type of guarded streams of
bits relative to a clock $\CT{k}$, using the fixed point operator in
concert with the later modality; and using the clock intersection
type, we can transform this into the type of infinite sequences of
bits:
\begin{align*}
  \CT{\Stream}
  &\in
  \CT{
    \ClkProd{k}{\TyUniv{i}}
  }
  \\
  \ADefine{
    \CT{\Stream}
  }{
    \CT{\Lam{k}{\Fix{A}{\TyBool\times\CttLater{k}{A}}}}
  }
  \\[6pt]
  \CT{\Sequence}
  &\in
  \CT{\TyUniv{i}}
  \\
  \ADefine{
    \CT{\Sequence{}}
  }{
    \CT{\ClkIsect{k}{\Stream\ k}}
  }
\end{align*}

We will see in Section~\ref{sec:revisiting-streams} that these
expressions are indeed types in \ClockCTT{}.


\section{Mathematical Meaning Explanation}\label{sec:construction}

In the type-theoretic tradition of Martin-L\"of, formal language is
endowed with computational meaning through what is called a ``meaning
explanation''; this style of definition, which was first deployed by
Martin-L\"of in his seminal paper \emph{Constructive Mathematics and
  Computer Programming}~\citep{martin-lof:1979}, is closely related to
PER semantics and the method of computability. This computational
perspective was developed to its fullest extent in Nuprl's
\CTT{}~\citep{constable:1986,allen-et-al:2006},
which adds a theory of computational congruence to the picture,
together with many new connectives including intersections, unions,
subset comprehensions, quotients and image types.

A meaning explanation provides a semantics for types as specifications
of the execution behavior of untyped programs. As such, the judgments
of type theory express the compliance of a program with a
specification, which can be of arbitrary quantifier complexity, and
will not generally be decidable.  Any implementation of type theory
involves, in one form or another, a formal system for deriving correct
judgments that is, by definition, recursively enumerable and often
decidable.

To achieve various properties that are desirable of a formal system
(sometimes including decidability), programs are often decorated with
type information that is not needed during execution. The meaning
explanation is, then, lifted to the formalism along an erasure map
$\Sem{-}$ that removes these decorations.

A similar, but more elaborate transformation of syntax (from \CT{formal
  terms} to \AT{programs}) is used here to facilitate the meaning
explanation for guarded type theory in terms of the settings of a
collection of clocks. During the verification of a program
specification, the value of a clock may change (for instance,
underneath the later modality); the most direct way to express this is
to explicitly formulate the meaning explanation using a Kripke or
presheaf-style semantics: a ``possible world'' consists of a
collection of clocks and their settings, and we require specifications
to account for the expansion of the world with new clocks and the
alterations of their settings.

Doing so tends to clutter the meaning explanation by distributing the
conditioning on clocks throughout the semantics, and disrupts a basic
principle of type theory in the Martin-L\"of tradition, which is that
types should do little more than internalize the structures which are
already present in the judgmental base.

An alternative, which we adopt here, is to formulate the semantics in
a presheaf topos $\ClkTopos$ which accounts all at once for clocks and
the passage of time, so that the specifications given by types are
implicitly conditioned on them. This conditioning, which is implicit
when viewed from inside the topos, can be externalized and made
explicit using the Kripke-Joyal forcing semantics of
$\ClkTopos$~\citep{maclane-moerdijk:1992}.

To ensure that programs evolve appropriately along the
transitions between clock worlds simultaneously with their
specifications, we introduce a kind of ``higher-order abstract
syntax'' which links clocks in programs directly to their meaning in
the presheaf topos, as elements of the presheaf of clocks
$\Of{\ClkObj}{\ClkTopos}$. The passage to this new kind of syntax at
the interface between the formalism and the semantics is managed by an
elaboration function $\Sem{-}$.

\subsection{The semantic universe $\ClkTopos$}\label{sec:semantic-universe}
We will develop our semantic universe as a presheaf topos called
$\ClkTopos$ over a category of clock contexts and clock context
morphisms. We will require the following things to exist in
$\ClkTopos$:
\begin{enumerate}
\item An object $\Of{\ClkObj}{\ClkTopos}$ of \emph{clock names}.
\item A family of logical modalities
  $\Later{{\kappa}}{{\phi}}$ for clock names
  $\Of{\kappa}{\ClkObj}$ and predicates ${\phi}$ in $\ClkTopos$.
\end{enumerate}

When we define $\ClkTopos$, we will arrange for the following
principles to hold in its internal logic:
\begin{align*}
  &\exists\Of{\kappa}{\ClkObj}.\ \top
  \IfReport{\tag{Theorem~\ref{thm:local-clock}}}{}
  \\
  &\forall\Of{\phi}{\Omega^\ClkObj}.\
  \Parens*{\ClkForall{\kappa}{\Later{\kappa}{\phi(\kappa)}}}
  \Rightarrow
  \ClkForall{\kappa}{\phi(\kappa)}
  \IfReport{\tag{Theorem~\ref{thm:forcing}}}{}
  \\
  &
  \ClkForall{\kappa}{
    \forall\Of{\phi}{\Omega}.\
    \phi\Rightarrow\Later{\kappa}{\phi}
  }
  \IfReport{\tag{Theorem~\ref{thm:later-unit}}}{}
  \\
  &\ClkForall{\kappa}{
    \forall\Of{\phi,\psi}{\Omega}.\
    \IsEq{
      \Later{\kappa}{
        \Parens{\phi\land\psi}
      }
    }{
      \Parens*{
        \Later{\kappa}{\phi}
        \land
        \Later{\kappa}{\psi}
      }
    }
  }
  \IfReport{\tag{Theorem~\ref{thm:later-cartesian}}}{}
  \\
  &\ClkForall{\kappa}{
    \forall\Of{\phi,\psi}{\Omega}.\
    \IsEq{
      \Later{\kappa}{
        \Parens{\phi\Rightarrow\psi}
      }
    }{
      \Parens*{
        \Later{\kappa}{\phi}
        \Rightarrow
        \Later{\kappa}{\psi}
      }
    }
  }
  \IfReport{\tag{Theorem~\ref{thm:later-commutes-with-implication}}}{}
  \\
  &\ClkForall{\kappa}{
    \forall\Of{\phi}{\Omega}.\
    \Parens*{\Later{\kappa}{\phi}\Rightarrow\phi}
    \Rightarrow
    \phi
  }
  \IfReport{\tag{Theorem~\ref{thm:loeb-induction}}}{}
\end{align*}

\ifreport%
We require one additional axiom to hold for any object
$\Of{Y}{\ClkTopos}$ which is \emph{total} and inhabited in a sense
that we will define
(Definitions~\ref{def:totality},~\ref{def:inhabitedness}), analogous
to the notion
from~\citet{birkedal-mogelberg-schwinghammer-stovring:2011}:
\else
We require one additional axiom to hold for any object
$\Of{Y}{\ClkTopos}$ which is \emph{total} and inhabited in a sense
analogous
to the notion
from~\citet{birkedal-mogelberg-schwinghammer-stovring:2011}:
\fi
\[
  \forall\Of{\kappa}{\ClkObj}.\
  \forall\Of{\phi}{\Omega^Y}.\
  \Later{\kappa}{
    \Parens*{
      \exists\Of{y}{Y}.\
      \phi(y)
    }
  }
  \Rightarrow
  \exists\Of{y}{Y}.\
  \Later{\kappa}{
    \phi(y)
  }
  \IfReport{\tag{Theorem~\ref{thm:total-yank-existential}}}{}
\]

To construct $\ClkTopos$ as a topos of presheaves, first define
$\Of{\FINPlus}{\CAT}$ as the free category with strictly associative
binary products generated by a single object; explicitly, objects of
$\FINPlus$ are $\IsEq{U}{\bullet^n}$ for $n>0$. A map
$\Of{f}{\bullet^n\to\bullet^m}$ is a vector of projections, but can
dually be regarded as a function between finite sets
$\Nat_{<m}\to \Nat_{<n}$.

Observe that the opposite category ${\OpCat{\FINPlus}}$ is a skeleton
of the category of non-empty finite sets and all functions between
them.
$\FINPlus$ is also a full subcategory of $\Of{\FIN}{\CAT}$, the free
strict cartesian category generated by a single object (whose opposite
is likewise a skeleton of the category of finite sets and all maps
between them).

\ifreport%
\begin{remark}\label{rem:sheaves}
  The category of presheaves ${\Psh{\FINPlus}}$ is equivalent to
  the sheaf subcategory of ${\Psh{\FIN}}$ under the coverage
  generated by singleton families of
  epimorphisms~\citep{staton:2007}. This sheaf subcategory is
  completely analogous to the Schanuel topos (i.e.\ the category of
  nominal sets), except that names are subject to
  identification/contraction. When names are used to represent clocks,
  this phenomenon has been referred to as
  ``synchronization'' by~\citet{bizjak-mogelberg:2015}.
\end{remark}
\fi%

Define the presheaf of clock names
$\Of{\mathcal{N}}{\Psh{\FINPlus}}$ as the representable functor
$\Yoneda\Parens*{\bullet^1}$. Next, define a functor
$\Of{\CLK[-]}{\FINPlus\to\POS}$ (with $\POS$ the category of partially
ordered sets) which will interpret assignments of \emph{times} to
clock names:
\begin{align*}
  \AOf{\CLK[-]}{\FINPlus\to\POS}
  \\
  \ADefine{
    \CLK\Squares{\Of{U}{\FINPlus}}
  }{
    \omega^{\mathcal{N}(U)}
  }
  \\
  \ADefine{
    \CLK
    \Squares{\Of{f}{V\to U}}
    \Parens{\Of{\partial_V}{\omega^{\mathcal{N}(V)}}}
  }{
    \Parens{\Of{\kappa}{\mathcal{N}(U)}}
    \mapsto
    \partial_V\Parens*{f^*\kappa}
  }
\end{align*}
Thinking of elements of $\FINPlus$ as signifying finite and non-empty
cardinalities of clock names, the action of $\CLK[-]$ on objects takes
such a cardinality $\Of{U}{\FINPlus}$ to the $U$-fold product of the
poset $\omega$, ordered pointwise: in other words, it assigns the
amount of ``time left'' to each clock.

Finally, using the covariant Grothendieck
  construction~\citep{crole:1993} we can build the total category
$\Define{\HIBox{\Of{\CLK}{\CAT}}}{\int^{\FINPlus}\CLK[-]}$ in the
following way. Objects are pairs
$(\Of{U}{\FINPlus},\Of{\partial_U}{\CLK[U]})$, i.e.\ collections of
clock names together with an assignment; morphisms
$\Of{f}{(V,\partial_V)\to(U,\partial_U)}$ are
${\FINPlus}$-morphisms $\Of{f}{V\to U}$ such that
$\IsLEQ{\CLK[f](\partial_V)}{\partial_U}$ in ${\CLK[U]}$.
At this time it will be helpful to impose some notation: we will write
$\Of{\ell}{\CLK\to\FINPlus}$ for the induced projection functor, and
we will use boldface letters ${\mathbf{U}},{\mathbf{V}}$
to range over objects
$\HIBox{\Of{{(U,\partial_U)},{(V,\partial_V)}}{\CLK}}$.

\paragraph{The semantic universe $\ClkTopos$}
Finally, we define our semantic universe as the presheaf topos
$\Define{\ClkTopos}{\Psh{\CLK}}$. This ``topos of clocks'' defined
above inherits a rich internal logic which corresponds to a
combination of cartesian/structural nominal logic\footnote{That is,
  the logic of \emph{nominal substitution
    sets}~\citep{staton:2007,gabbay-hofmann:2008}.} and guarded
recursion.

The topos $\ClkTopos$ is related to the models considered
by~\citet{bizjak-mogelberg:2015}, except that rather than constructing
a family of presheaf toposes fibered over clock contexts, we combine
clock contexts with time assignments into a single base category, and
take the topos of presheaves over that; our topos is nearly identical
to the presheaf category considered independently
in~\citet{bizjak-mogelberg:2017}.

One minor difference between our model and those of
\citeauthor{bizjak-mogelberg:2017} is that in order to close the
internal logic of $\ClkTopos$ under the clock irrelevance axiom
described above, we decided to rule out empty clock contexts; this
condition is equivalent to taking a sheaf subtopos of the presheaves
over \emph{all} clock contexts.

\paragraph{The object of clock names}

We need to exhibit an object in the presheaf topos ${\ClkTopos}$ whose
elements are the ``available'' clock \emph{names} (without regard to
their time assignments). First observe that the representable object
$\mathcal{N}$ plays exactly this role in the category
$\Psh{\FINPlus}$: at clock context $\bullet^n$ it consists in the set
of morphisms $\bullet^n\to\bullet^1$, which has cardinality
$n$. However, this object resides in the wrong topos, since we need to
define an object $\Of{\ClkObj}{\ClkTopos}$.
To achieve this, we use the reindexing functor
$\Of{\ell^*}{\Psh{\FINPlus}\to\ClkTopos}$ induced by precomposing the
projection $\Of{\ell}{\ClkTopos\to\FINPlus}$, defining
$\Define{\ClkObj}{\ell^*\mathcal{N}}$.

\paragraph{Notations and morphisms}
We write $\mathbf{U}[\kappa\mapsto n]$ to mean
$(U, \partial_U[\kappa\mapsto n])$, where
$\partial_U[\kappa\mapsto n]$ means the adjustment to $\partial_U$
which replaces $\partial_U(\kappa)$ with $n$. Finally, for the map
that increments the time assigned to a clock, we write
$\Of{[\kappa\pluseq1]}{\mathbf{U}\to\mathbf{U}[\kappa\mapsto\partial_U(\kappa)+1]}$.

\paragraph{Defining the ${\Later{\kappa}{}}$ modalities}

We define the ${\Later{\kappa}{}}$ modalities by their forcing clause
in the Kripke-Joyal semantics of $\ClkTopos$:\footnote{Usually the
  forcing clauses should be taken as theorems rather than as
  definitions. However, in a Grothendieck topos, it is possible to
  define a subobject by its forcing clause: the result is well-defined
  when the definition is monotone (and also local, in the case of
  sheaf toposes).}
\[
  \DefineJdg{
    \Forces{\mathbf{U}}{\Later{\kappa}{\phi(\alpha)}}
  }{
    \begin{cases}
      \top &{\normalcolor\textbf{if}\ \ \IsEq{\partial_U(\kappa)}{0}}
      \\
      \Forces{
        \mathbf{U}[\kappa\mapsto n]
      }{
        \phi({[\kappa\pluseq1]}^* \alpha)
      } &{\normalcolor\textbf{if} \ \ \IsEq{\partial_U(\kappa)}{n+1}}
    \end{cases}
  }
\]

By a similar definition, it is possible to define an analogous
operator in the internal type theory of $\ClkTopos$, i.e.\ a fibered
endofunctor
$\Of{\ObjLater{}}{\Slice{\ClkTopos}{X\times\ClkObj}\to\Slice{\ClkTopos}{X\times\ClkObj}}$;
however, we have only needed the logical modality in our construction.

All the other forcing clauses are completely standard; for a reference
on Kripke-Joyal forcing, see~\citet{maclane-moerdijk:1992}.

\subsection{Programming language and operational semantics}

In Section~\ref{sec:programming} (Figure~\ref{fig:syntax}) we gave a
grammar for the \CT{formal terms} of \ClockCTT{}; however, in our
semantics, we employ a second notion of syntax which is constructed as
an inductive definition internal to $\ClkTopos$; this is the language
of \AT{programs}, and differs from the syntax of formal terms in two
respects:
\begin{enumerate}
\item Clocks in programs are imported directly from the metatheoretic
  object of clocks $\Of{\ClkObj}{\ClkTopos}$; so the family of
  operators $\AT{\CttLater{\kappa}{-}}$ is indexed in
  $\Of{\kappa}{\ClkObj}$ in exactly the same way that
  $\AT{\ITyUniv{i}}$ is indexed in $\Of{i}{\Nat}$.
\item The binding of clocks (such as in the clock intersection
  operator) is represented using the exponential
  $\Of{-^\ClkObj}{\ClkTopos\to\ClkTopos}$.\footnote{While this
    construction cannot be called ``ordinary syntax'', it is an
    inductive definition that can be built up explicitly using the
    fact that $\ClkTopos$ models indexed
    W-types~\citep{moerdijk:2000}.}
\end{enumerate}

\begin{remark}[Generalized Syntax]
  The idea of using the exponential of the metalanguage in the syntax
  of a programming language is not new.
  Infinitary notions of program syntax can be traced back as far as
  Brouwer's $\digamma$-inference in the justification of the Bar
  Principle~\citep{brouwer:1981}, and have more recently been
  developed in Nuprl semantics~\citep{rahli-bickford-constable:2017}, as well as in
  the context of higher-order focusing~\cite{zeilberger:2009}.
\end{remark}

We will define the inductive family $\ITm{n}$ of \emph{programs
  with $n$ free variables} in $\ClkTopos$ using an internal inductive
definition, summarized in Figure~\ref{fig:programs}.

\ifreport%
\begin{figure*}
  \input{figures/programs}
  \caption{The inductive definition of the programs with $n$ free
    variables $\Of{\ITm{n}}{\ClkTopos}$.}\label{fig:programs}
\end{figure*}
\begin{figure*}
  \input{figures/opsem}
  \caption{Structural operational semantics of closed \ClockCTT{} programs.}\label{fig:opsem}
\end{figure*}
\else%
\begin{figure*}
  \input{figures/programs-opsem-short}
  \caption{An illustrative fragment of the inductive definition of the
    programs with $n$ free variables $\Of{\ITm{n}}{\ClkTopos}$,
    and their operational semantics.}\label{fig:programs}
\end{figure*}
\fi%

\ifreport

  \paragraph{Substitution structure}
  Writing $\ISET$ to mean the
  internal category of small sets in $\ClkTopos$, observe that
  $\IVar{-}$ can be regarded as an internal functor from $\IFIN$ to
  $\ISET$, where $\IFIN$ is the internal category of finite cardinals
  and all functions between them. We can equip $\ITm{-}$ with the
  structure of a relative monad on
  $\Of{\IVar{-}}{\IFIN\to\ISET}$~\citep{altenkirch-chapman-uustalu:2010}.

  The unit of the relative monad is the injection of variables
  $\AT{\Var{(-)}}$; its Kleisli extension implements substitutions
  $\IsITm{\Subst{\gamma}{M}}{n}$ for $\IsITm{M}{m}$ and
  $\Of{\AT{\gamma}}{\ITm{n}^{\IVar{m}}}$. We omit the definition of the
  Kleisli extension because it is completely standard.

\else
  Our syntax forms a substitution algebra, and we write
  $\Of{\AT{\Var{(-)}}}{\IVar{n}\to\ITm{n}}$ for the injection of variables into
  terms; we write $\IsITm{\Subst{\gamma}{M}}{n}$ for the action of the
  substitution $\Of{\AT{\gamma}}{\ITm{n}^{\IVar{m}}}$ in $\IsITm{M}{m}$.
\fi

\paragraph{Internal operational semantics}
Programs are endowed with operational meaning through the definition of a transition system,
\ifreport%
summarized in Figure~\ref{fig:opsem}.
\else%
an illustrative fragment of which we present in Figure~\ref{fig:programs}.
\fi%
This defines predicates $\Of{\HIBox{\Val{-}}}{\Pow{\ITm{0}}}$ and
$\Of{\HIBox{\Step{-}{-}}}{\Pow{\ITm{0}\times\ITm{0}}}$ in
$\ClkTopos$. Write $\Of{\IVal}{\ClkTopos}$ for the subobject
$\MkSet{\IsITm{M}{0}\mid\Val{M}}$.

Write $\HIBox{\StepStar{-}{-}}$ for the reflexive-transitive closure
of $\HIBox{\Step{-}{-}}$.  We now define approximation and
computational equivalence judgments
$\Of{\HIBox{\ClosedApprox{-}{-}},\HIBox{\ClosedSq{-}{-}}}{\Pow{\ITm{0}\times\ITm{0}}}$
respectively for closed programs as follows:
\begin{align*}
  \ADefine{
    \HIBox{\ClosedApprox{M_0}{M_1}}
  }{
    \forall\Of{\AT{M_v}}{\IVal}.\
    \HIBox{\StepStar{M_0}{M_v}}
    \Rightarrow
    \HIBox{\StepStar{M_1}{M_v}}
  }
  \\
  \ADefine{
    \HIBox{\ClosedSq{M_0}{M_1}}
  }{
    \HIBox{\ClosedApprox{M_0}{M_1}}
    \land
    \HIBox{\ClosedApprox{M_1}{M_0}}
  }
\end{align*}

The latter is extended to a computational equivalence judgment for
open programs
$\Of{\HIBox{\OpenSq{n}{-}{-}}}{\Pow{\ITm{n}\times\ITm{n}}}$ by
quantifying over total substitutions.
\begin{align*}
  \ADefine{
    \HIBox{\OpenSq{n}{M_0}{M_1}}
  }{
    \forall\Of{\AT{\gamma}}{\ITm{0}^n}.\
    \HIBox{
      \ClosedSq{
        \Subst{\gamma}{M_0}
      }{
        \Subst{\gamma}{M_1}
      }
    }
  }
\end{align*}

It would be desirable to extend this relation to a theory of
computational congruence, as pioneered by~\citet{howe:1989}; however,
for our immediate purposes it has sufficed to require types only to
respect the approximation relation defined above.

\begin{definition}[Computational PERs]\label{def:cper}
  A partial equivalence relation is a binary relation which is both
  symmetric and transitive. Such a relation $\mathcal{R}$ on $\ITm{0}$
  is called \emph{computational} when it respects approximation in the
  following sense: if
  $\Member{\Parens*{\AT{M_0}, \AT{M_1}}}{\mathcal{R}}$ and
  $\ClosedApprox{M_0}{M_0'}$, then
  $\Member{\Parens*{\AT{M_0'}, \AT{M_1}}}{\mathcal{R}}$.
\end{definition}

\paragraph{Telescopes}

To capture the syntax of contexts and we define the inductive family
$\ITl{n}$ of \emph{telescopes of length $n$} as follows:
\begin{mathparpagebreakable}
  \Infer{
    \IsITl{\cdot}{0}
  }{
  }
  \and
  \Infer{
    \IsITl{\Gamma. A}{n+1}
  }{
    \IsITl{\Gamma}{n}
    \\
    \IsITm{A}{n}
  }
\end{mathparpagebreakable}

\paragraph{Elaborating terms}
We now sketch the elaboration of the \CT{program terms} of
Section~\ref{sec:programming} into \AT{programs}; approximately, a
term $\CT{M}$ with free formal clock variables $\CT{\Delta}$ and free
term variables $\CT{\Psi}$ will be elaborated to a morphism
$\Of{\SemTm{\Delta}{\Psi}{M}}{\ClkObj^{\Dom{\CT{\Delta}}}\to\ITm{\Dom{\CT{\Psi}}}}$.

\begin{notation}
  When $\CT{\Psi}$ is a list, we write $\Dom{\CT{\Psi}}$ for its
  length, and we write $\CT{\Psi}[\CT{x}]$ for the index
  $i<\Dom{\CT{\Psi}}$ of the element $\CT{x}$ in $\CT{\Psi}$,
  presupposing $\CT{\Psi}\ni\CT{x}$.
\end{notation}

\ifreport%
\begin{align*}
  \SemTm{\Delta}{\Psi}{x}\varrho
  &= \AT{\Var{\CT{\Psi}[\CT{x}]}}
  \\
  \SemTm{\Delta}{\Psi}{\Lam{x}{M}}\varrho
  &=
  \AT{
    \ILam{
      \SemTm{\varrho}{\Psi,x}{M}\varrho
    }
  }
  \\
  \SemTm{\Delta}{\Psi}{\Lam{k}{M}}\varrho
  &=
  \AT{
    \IKLam{
      \MetaLam{\kappa}{
        \SemTm{\Delta,k}{\Psi}{M}(\varrho,\kappa)
      }
    }
  }
  \\
  \SemTm{\Delta}{\Psi}{M_0\ M_1}\varrho
  &=
  \AT{
    \Parens*{
      \SemTm{\Delta}{\Psi}{M_0}\varrho
    }
    \Parens*{
      \SemTm{\Delta}{\Psi}{M_1}\varrho
    }
  }
  \\
  \SemTm{\Delta}{\Psi}{M\ k}\varrho
  &=
  \AT{
    \Parens{
      \SemTm{\Delta}{\Psi}{M}\varrho
    }%
    \Parens{
      \rho_{\Delta[\CT{k}]}
    }
  }
  \\
  \SemTm{\Delta}{\Psi}{\Pair{M_0}{M_1}}\varrho
  &=
  \AT{
    \IPair{
      \SemTm{\Delta}{\Psi}{M_0}\varrho
    }{
      \SemTm{\Delta}{\Psi}{M_1}\varrho
    }
  }
  \\
  \SemTm{\Delta}{\Psi}{\Fst{M}}\varrho
  &=
  \AT{
    \IFst{
      \Parens*{\SemTm{\Delta}{\Psi}{M}{\varrho}}
    }
  }
  \\
  \SemTm{\Delta}{\Psi}{\Snd{M}}\varrho
  &=
  \AT{
    \ISnd{
      \Parens*{\SemTm{\Delta}{\Psi}{M}{\varrho}}
    }
  }
  \\
  \SemTm{\Delta}{\Psi}{\Sup{M}{x}{N}}\varrho
  &=
  \AT{
    \ISup{
      \SemTm{\Delta}{\Psi}{M}\varrho
    }{
      \SemTm{\Delta}{\Psi,x}{N}\varrho
    }
  }
  \\
  \SemTm{\Delta}{\Psi}{\WRec{M}{x}{y}{z}{N}}\varrho
  &=
  \AT{
    \IWRec{
      \SemTm{\Delta}{\Psi}{M}\varrho
    }{
      \SemTm{\Delta}{\Psi,x,y,z}{N}\varrho
    }
  }
  \\
  \SemTm{\Delta}{\Psi}{\bigstar}\varrho
  &= \AT{\bigstar}
  \\
  \SemTm{\Delta}{\Psi}{\Tt}\varrho
  &= \AT{\ITt}
  \\
  \SemTm{\Delta}{\Psi}{\Ff}\varrho
  &= \AT{\IFf}
  \\
  \SemTm{\Delta}{\Psi}{\If{M_b}{M_t}{M_f}}\varrho
  &=
  \AT{
    \IIf{
      \SemTm{\Delta}{\Psi}{M_b}\varrho
    }{
      \SemTm{\Delta}{\Psi}{M_t}\varrho
    }{
      \SemTm{\Delta}{\Psi}{M_f}\varrho
    }
  }
  \\
  \SemTm{\Delta}{\Psi}{\Ze}\varrho
  &=\AT{\IZe}
  \\
  \SemTm{\Delta}{\Psi}{\Su{M}}\varrho
  &=
  \AT{\ISu{\SemTm{\Delta}{\Psi}{M}\varrho}}
  \\
  \SemTm{\Delta}{\Psi}{\IfZe{M_n}{M_z}{x}{M_s}}\varrho
  &=
  \AT{
    \IIfZe{
      \SemTm{\Delta}{\Psi}{M_n}\varrho
    }{
      \SemTm{\Delta}{\Psi}{M_z}\varrho
    }{
      \SemTm{\Delta}{\Psi,x}{M_s}\varrho
    }
  }
  \\
  \SemTm{\Delta}{\Psi}{\DProd{x}{A}{B}}\varrho
  &=
  \AT{
    \IPi{
      \SemTm{\Delta}{\Psi}{A}\varrho
    }{
      \SemTm{\Delta}{\Psi,x}{B}\varrho
    }
  }
  \\
  \SemTm{\Delta}{\Psi}{\DSum{x}{A}{B}}\varrho
  &=
  \AT{
    \ISg{
      \SemTm{\Delta}{\Psi}{A}\varrho
    }{
      \SemTm{\Delta}{\Psi,x}{B}\varrho
    }
  }
  \\
  \SemTm{\Delta}{\Psi}{\WTy{x}{A}{B}}\varrho
  &=
  \AT{
    \IWTy{
      \SemTm{\Delta}{\Psi}{A}\varrho
    }{
      \SemTm{\Delta}{\Psi,x}{B}\varrho
    }
  }
  \\
  \SemTm{\Delta}{\Psi}{\TyEqu{A}{M_0}{M_1}}\varrho
  &=
  \AT{
    \ITyEqu{
      \SemTm{\Delta}{\Psi}{A}\varrho
    }{
      \SemTm{\Delta}{\Psi}{M_0}\varrho
    }{
      \SemTm{\Delta}{\Psi}{M_1}\varrho
    }
  }
  \\
  \SemTm{\Delta}{\Psi}{\CttLater{k}{A}}\varrho
  &=
  \AT{
    \CttLater{
      \varrho_{\Delta[k]}
    }{
      \SemTm{\Delta}{\Psi}{A}\varrho
    }
  }
  \\
  \SemTm{\Delta}{\Psi}{\ClkProd{k}{A}}\varrho
  &=
  \AT{
    \IClkProd{
      \Parens*{
        \MetaLam{\kappa}{
          \SemTm{\Delta,k}{\Psi}{A}(\varrho,\kappa)
        }
      }
    }
  }
  \\
  \SemTm{\Delta}{\Psi}{\ClkIsect{k}{A}}\varrho
  &=
  \AT{
    \IClkIsect{
      \Parens*{
        \MetaLam{\kappa}{
          \SemTm{\Delta,k}{\Psi}{A}(\varrho,\kappa)
        }
      }
    }
  }
  \\
  \SemTm{\Delta}{\Psi}{\TyVoid}\varrho
  &= \AT{\ITyVoid}
  \\
  \SemTm{\Delta}{\Psi}{\TyUnit}\varrho
  &= \AT{\ITyUnit}
  \\
  \SemTm{\Delta}{\Psi}{\TyBool}\varrho
  &= \AT{\ITyBool}
  \\
  \SemTm{\Delta}{\Psi}{\TyNat}\varrho
  &= \AT{\ITyNat}
  \\
  \SemTm{\Delta}{\Psi}{\TyUniv{i}}\varrho
  &= \AT{\ITyUniv{i}}
\end{align*}

 \else%
We present here only a few of the most illustrative cases; the
remainder of the elaboration can be found in
technical report, and in our Coq
formalization~\citep{sterling-harper:2018:coq}.
\begin{align*}
  \SemTm{\Delta}{\Psi}{x}\varrho
  &= \AT{\Var{\CT{\Psi}[\CT{x}]}}
  \\
  \SemTm{\Delta}{\Psi}{\Lam{x}{M}}\varrho
  &=
  \AT{
    \ILam{
      \SemTm{\varrho}{\Psi,x}{M}\varrho
    }
  }
  \\
  \SemTm{\Delta}{\Psi}{\Lam{k}{M}}\varrho
  &=
  \AT{
    \IKLam{
      \MetaLam{\kappa}{
        \SemTm{\Delta,k}{\Psi}{M}(\varrho,\kappa)
      }
    }
  }
  \\
  \SemTm{\Delta}{\Psi}{M\ k}\varrho
  &=
  \AT{
    \Parens{
      \SemTm{\Delta}{\Psi}{M}\varrho
    }%
    \Parens{
      \rho_{\Delta[\CT{k}]}
    }
  }
  \\
  \SemTm{\Delta}{\Psi}{\CttLater{k}{A}}\varrho
  &=
  \AT{
    \CttLater{
      \varrho_{\Delta[\CT{k}]}
    }{
      \SemTm{\Delta}{\Psi}{A}\varrho
    }
  }
  \\
  \SemTm{\Delta}{\Psi}{\ClkProd{k}{A}}\varrho
  &=
  \AT{
    \IClkProd{
      \Parens*{
        \MetaLam{\kappa}{
          \SemTm{\Delta,k}{\Psi}{A}(\varrho,\kappa)
        }
      }
    }
  }
  \\
  \SemTm{\Delta}{\Psi}{\ClkIsect{k}{A}}\varrho
  &=
  \AT{
    \IClkIsect{
      \Parens*{
        \MetaLam{\kappa}{
          \SemTm{\Delta,k}{\Psi}{A}(\varrho,\kappa)
        }
      }
    }
  }
\end{align*}
\fi

\paragraph{Elaborating contexts}

Next, we elaborate contexts $\CT{\Gamma}$ with free formal clock variables
$\CT{\Delta}$ as morphisms
$\Of{\SemTl{\Delta}{\Gamma}}{\ClkObj^{\Dom{\CT{\Delta}}}\to\ITl{\Dom{\CT{\Gamma}}}}$,
writing $\pi(\CT{\Gamma})$ for the sequence $\CT{\vec{x_i}}$ when
$\IsEq{\CT{\Gamma}}{\overrightarrow{\CT{x_i:A_i}}}$.
\begin{align*}
  \SemTl{\Delta}{\cdot}\varrho
  &=
  \AT{\cdot}
  \\
  \SemTl{\Delta}{\Gamma,x:A}\varrho
  &=
  \AT{
    \Parens*{\SemTl{\Delta}{\Gamma}\varrho}.
    \Parens*{\SemTm{\Delta}{\pi(\Gamma)}{A}\varrho}
  }
\end{align*}

To save space, we may write $\Sem{M}$ or $\Sem{\Gamma}$ for the
elaboration of a term or a context respectively, when the
parameters are obvious.

\subsection{Full type system hierarchy}

At a high level, a \emph{type system} in the sense of
\citet{allen:1987:thesis} is an object which distinguishes some
programs as types, and specifies what programs will be the elements of
those types, and when they will be considered equal.
Writing $\Rel{X}$ for $\Pow{X\times{}X}$, we define a \emph{candidate
  type system} to be a relation
$\Of{\TS}{\Pow{\ITm{0}\times\Rel{\ITm{0}}}}$ in $\ClkTopos$. We will
write $\PTSObj$ for the collection of such candidate type systems,
i.e.\
$\Define{\HIBox{\Of{\PTSObj}{\ClkTopos}}}{\Pow{\ITm{0}\times\Rel{\ITm{0}}}}$.

Let us now define notation for some assertions about candidate type
systems $\Of{\TS}{\PTSObj}$:
\begin{align*}
  \ADefineJdg{
    \WithTS{\TS}{\SimType{A}{B}}
  }{
    \exists\Of{\mathcal{A}}{\Rel{\ITm{0}}}.\
    \Member{(\AT{A},\mathcal{A})}{\TS}
    \land
    \Member{(\AT{B},\mathcal{A})}{\TS}
  }
  \\
  \ADefineJdg{
    \WithTS{\TS}{\SimMem{M_0}{M_1}{A}}
  }{
    \exists\Of{\mathcal{A}}{\Rel{\ITm{0}}}.\
    \Member{(\AT{A},\mathcal{A})}{\TS}
    \land
    \Member{(\AT{M_1}, \AT{M_2})}{\mathcal{A}}
  }
\end{align*}

A candidate type system $\Of{\TS}{\PTSObj}$ can have the following
characteristics:
\begin{enumerate}
\item It is called \emph{extensional} if it is the graph of a partial
  function $\ITm{0}\rightharpoonup\Rel{\ITm{0}}$.
\item It is called \emph{computational PER-valued} if whenever
  $\Member{\Parens*{\AT{A},\mathcal{A}}}{\tau}$, the relation
  $\mathcal{A}$ is a computational PER (see
  Definition~\ref{def:cper}).
\item It is called \emph{type-computational} when, if
  $\Member{\Parens*{\AT{A},\mathcal{A}}}{\tau}$ and
  $\ClosedApprox{A}{A'}$, then also
  $\Member{\Parens*{\AT{A'},\mathcal{A}}}{\tau}$.
\end{enumerate}

Finally a candidate type system is called a \emph{type system} if it is
extensional, computational PER-valued, and type-computational. We
write $\Of{\TSObj}{\ClkTopos}$ for the collection of such type
systems.

\paragraph{Sequents and functionality}

Next, we briefly sketch the meaning of type functionality sequents
$\HypSimType{\Gamma}{A_0}{A_1}$ and functionality sequents
$\HypSimMem{\Gamma}{M_0}{M_1}{A}$ using a simple notion of
functionality derived from \citet{martin-lof:1979}, with respect to
any candidate type system $\Of{\TS}{\PTSObj}$.

When $\IsITl{\Gamma}{n}$ is a telescope and
$\Of{\AT{\gamma_0},\AT{\gamma_1}}{\ITm{0}^n}$ are sequences of programs, we define
similarity of instantiations
$\SimEnv{\gamma_0}{\gamma_1}{\Gamma}$ by
recursion on
$\Gamma$. $\SimEnv{\cdot}{\cdot}{\cdot}$ is
true, and
$\SimEnv{\gamma_0. M_0}{\gamma_1. M_1}{\Gamma. A}$
is true when both
$\SimEnv{\gamma_0}{\gamma_1}{\Gamma}$ and
$\SimMem{\Subst{\gamma_0}{M_0}}{\Subst{\gamma_1}{M_1}}{\Subst{\gamma_0}{A}}$
are true.

Open type similarity $\HypSimType{\Gamma}{A_0}{A_1}$ is true
when for all instantiations $\SimEnv{\gamma_0}{\gamma_1}{\Gamma}$, we
have
$\SimType{\Subst{\gamma_0}{A_0}}{\Subst{\gamma_1}{A_1}}$. Likewise,
open member smilarity $\HypSimMem{\Gamma}{M_0}{M_1}{A}$ is
true when for all such instantiations, we have
$\SimMem{\Subst{\gamma_0}{M_0}}{\Subst{\gamma_1}{M_1}}{\Subst{\gamma_0}{A}}$.

Finally, context validity $\IsCtx{\Gamma}$ is
given by recursion on $\Gamma$ using open type similarity in the
inductive case.

\subsection{Closure under type formers other than universes}

Next, we will show how to \emph{close} a candidate type system under
the type formers of \ClockCTT{}, namely booleans, natural numbers,
dependent functions types, dependent pair types, equality types, later
modalities, clock intersection types and universes.

The simplest way to carry out this construction, as pioneered
by~\citet{crary:1998} and formalized by~\citet{anand-rahli:2014}, is
to use an inductive definition of a closure operator
$\Of{\TSClose{-}}{\PTSObj\to\PTSObj}$ on candidate type
systems. However, this method does not immediately extend to the type
systems that we consider in this paper, because it is not clear how to
fit the clause for the \emph{later modality} into the usual schemata
for inductive definitions based on strictly positive signatures.

Therefore, as advocated by~\citet{allen:1987:thesis}, we will build up
our closure operator manually by taking the least fixed point of a
monotone operator on candidate type systems; this construction can be
carried out in any topos, because the Knaster-Tarski theorem
guarantees a least fixed point for any monotone operator on a complete
lattice~\citep{davey-priestly:1990}.

First, we define some notation for closing relations and type systems
under evaluation to canonical form:
\begin{align*}
  \ValClo{-}&:\Rel{\ITm{0}}\to\Rel{\ITm{0}}
  \\
  \ADefine{
    \ValClo{\mathcal{A}}
  }{
    \SetCompr{
      \Parens*{\AT{M_0}, \AT{M_1}}
    }{
      \exists\Of{\AT{M_0^v}, \AT{M_1^v}}{\IVal}.\
      \StepStar{M_i}{M_i^v}
      \land
      \Member{\Parens*{\AT{M_0^v},\AT{M_1^v}}}{\mathcal{A}}
    }
  }
  \\[6pt]
  \ValClo{-}&:\PTSObj\to\PTSObj
  \\
  \ADefine{
    \ValClo{\tau}
  }{
    \SetCompr{
      \Parens*{\AT{A},\mathcal{A}}
    }{
      \exists\Of{\AT{A_v}}{\IVal}.\
      \StepStar{A}{A_v}
      \land
      \Member{\Parens*{\AT{A_v},\mathcal{A}}}{\tau}
    }
  }
\end{align*}

In Figure~\ref{fig:monotone-operator}, for an initial candidate type
system $\Of{\sigma}{\PTSObj}$, we define an endomorphism on candidate
type systems $\Of{\TSFun{\sigma}}{\PTSObj\to\PTSObj}$ which extends a
type system with all the non-universe connectives of \ClockCTT{}.

\ifreport%
\begin{sidewaysfigure}
  \begin{minipage}[c]{1.0\linewidth}
    \input{figures/monotone-operator.tex}
  \end{minipage}

  \caption{A monotone operator on candidate type systems.}\label{fig:monotone-operator}
\end{sidewaysfigure}
\else%
\begin{figure*}
  \begin{minipage}[c]{1.0\linewidth}
    \input{figures/monotone-operator-short.tex}
  \end{minipage}

  \caption{A monotone operator on candidate type systems; for the sake
    of space, we elide the interpretations of the standard connectives.}\label{fig:monotone-operator}
\end{figure*}
\fi%

\ifreport%

A few remarks on our style of definition are in order. First, observe that we
have \emph{not} required that $\AT{A}$ be a type in order for
$\AT{\CttLater{\kappa}{A}}$ to be a type: we only require that this premise
obtain \emph{later}. This is crucial for the interaction of the later modality
with the dependent product and function types.

Moreover, we have chosen a negative definition of dependent pair and function
types, based on projections and application rather than on pairing and
abstraction. This choice appears to likewise be forced for the same reason.

Finally, in the type-functionality clauses for dependent pair and
function types, we require the family of relations $\mathcal{B}$ to be
not only functional in $\mathcal{A}$ in the obvious sense, but also in
a ``criss-crossed'' sense: for
$\Member{\Parens*{\AT{M_0}, \AT{M_1}}}{\mathcal{A}}$ we additionally
require
$\Member{\Parens*{\AT{\Subst{M_0}{B}},\mathcal{B}(\AT{M_1})}}{\tau}$
and
$\Member{\Parens*{\AT{\Subst{M_1}{B}},\mathcal{B}(\AT{M_0})}}{\tau}$. Ultimately
this is redundant in case $\mathcal{A}$ is symmetric and $\tau$ is
extensional; however, we found that building these extra instances
into the definition made it simpler to prove that the closure of a
type system is both extensional and CPER-valued under suitable
conditions.
\fi%

\begin{theorem}[\CoqRef{Closure.Clo.monotonicity}]
  For any candidate type system $\Of{\sigma}{\PTSObj}$, the function
  $\Of{\TSFun{\sigma}}{\PTSObj\to\PTSObj}$ is monotone.
\end{theorem}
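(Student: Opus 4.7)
The plan is to decompose $\TSFun{\sigma}$ into its constituents and observe that each is monotone in $\tau$ by an entirely mechanical inspection of positivity. Since $\TSFun{\sigma}(\tau) = \sigma \cup \ValClo{(\CloVoid{\tau} \cup \cdots \cup \CloKFun{\tau})}$ and the constant term $\sigma$, set-theoretic union, and the value-closure operator $\ValClo{-}$ are all monotone in their arguments (the last because its defining clause, $\exists \AT{A_v}.\ \StepStar{A}{A_v} \land \Member{(\AT{A_v}, \mathcal{A})}{\tau}$, is positive in $\tau$), it suffices to show that each individual clause $\CloX{-}$ is monotone as a map $\PTSObj \to \PTSObj$. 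Given $\tau \subseteq \tau'$, monotonicity of $\TSFun{\sigma}$ will then follow by taking unions and applying $\ValClo{-}$ on both sides.

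For each clause, the proof proceeds by inspection: all occurrences of the parameter $\tau$ are positive, i.e., appear only under membership assertions $\Member{(\AT{A}, \mathcal{A})}{\tau}$ which are themselves used to \emph{restrict} (not to exclude) the set of admissible pairs $(\AT{A}, \mathcal{X})$. The base clauses $\CloVoid, \CloUnit, \CloBool, \CloNat$ do not reference $\tau$ at all and so are trivially monotone. The structural clauses $\CloProd, \CloFun, \CloW, \CloEq$ reference $\tau$ only through an existential assertion naming a relation $\mathcal{A}$ (and a family $\mathcal{B}$) together with positive membership premises, including the ``criss-crossed'' functionality conditions on $\mathcal{B}$; enlarging $\tau$ preserves all such witnesses. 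The clauses $\CloIsect, \CloKFun$ place these positive occurrences under a universal quantifier over $\Of{\kappa}{\ClkObj}$, which also preserves monotone implications.

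The only clause requiring any separate argument is $\CloLtr$, in which $\tau$ appears under the later modality $\Later{\kappa}{-}$. The main (minor) obstacle is therefore establishing that $\Later{\kappa}{-}$ is monotone on propositions in $\ClkTopos$: if $\phi \Rightarrow \psi$ then $\Later{\kappa}{\phi} \Rightarrow \Later{\kappa}{\psi}$. This follows either directly from the forcing clause (in both cases $\partial_U(\kappa) = 0$ and $\partial_U(\kappa) = n+1$, the implication propagates in the obvious way after restriction along $[\kappa\pluseq 1]$) or from the axiom $\IsEq{\Later{\kappa}{(\phi \Rightarrow \psi)}}{(\Later{\kappa}{\phi} \Rightarrow \Later{\kappa}{\psi})}$ combined with the unit $\phi \Rightarrow \Later{\kappa}{\phi}$ of the modality, both of which were listed as requirements on $\ClkTopos$ earlier in the excerpt. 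Assuming $\tau \subseteq \tau'$, we obtain $\Member{(\AT{A}, \mathcal{A})}{\tau} \Rightarrow \Member{(\AT{A}, \mathcal{A})}{\tau'}$, and applying $\Later{\kappa}{-}$ to both sides gives monotonicity of $\CloLtr{-}$. Chaining these clause-by-clause monotonicity facts through the outer union and $\ValClo{-}$ completes the argument.
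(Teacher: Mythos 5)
Your proposal is correct and takes essentially the same approach as the paper, whose proof is simply ``by case on the type closure clauses, which are themselves each monotone''; you have elaborated exactly that case analysis. You also correctly isolate the only non-routine step---monotonicity of $\Later{\kappa}{-}$---which the paper records separately (Corollary~\ref{cor:later-monotone}, derived from commutation with conjunction rather than with implication, but the difference is immaterial).
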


\begin{proof}
  By case on the type closure clauses above, which are themselves each
  monotone.
\end{proof}

\begin{corollary}[\CoqRef{Closure.Clo.t}, \CoqRef{Closure.Clo.roll}]
  By the Knaster-Tarski theorem, the function $\TSFun{\sigma}$ has a
  least fixed point $\mu\Parens*{\TSFun{\sigma}}$.
\end{corollary}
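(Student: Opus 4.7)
The plan is direct: apply the Knaster–Tarski fixed point theorem in the internal language of $\ClkTopos$. The object of candidate type systems $\PTSObj = \Pow{\ITm{0}\times\Rel{\ITm{0}}}$ is, like every powerobject in a topos, a complete lattice under subobject inclusion: arbitrary meets and joins exist in $\ClkTopos$ and are given by intersection and union of subobjects. By the preceding theorem, $\TSFun{\sigma}$ is a monotone endofunction on this lattice.

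Therefore, the Knaster–Tarski theorem — whose proof uses only the existence of arbitrary meets and monotonicity, and so is constructively valid in the internal logic of any topos — supplies a least fixed point, concretely
\[
  \mu\Parens*{\TSFun{\sigma}} \;=\; \bigcap\SetCompr{\tau}{\TSFun{\sigma}(\tau)\subseteq\tau}.
\]
The standard verification proceeds in two steps. First, for every pre-fixed point $\tau$ we have $\mu\Parens*{\TSFun{\sigma}} \subseteq \tau$, so by monotonicity $\TSFun{\sigma}(\mu\Parens*{\TSFun{\sigma}}) \subseteq \TSFun{\sigma}(\tau) \subseteq \tau$; intersecting over all such $\tau$ yields $\TSFun{\sigma}(\mu\Parens*{\TSFun{\sigma}}) \subseteq \mu\Parens*{\TSFun{\sigma}}$. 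Second, applying monotonicity again shows $\TSFun{\sigma}(\mu\Parens*{\TSFun{\sigma}})$ is itself a pre-fixed point, whence $\mu\Parens*{\TSFun{\sigma}} \subseteq \TSFun{\sigma}(\mu\Parens*{\TSFun{\sigma}})$.

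There is no real obstacle here: all the mathematical content has already been packaged into the definition of $\TSFun{\sigma}$ and the preceding monotonicity result, and the only thing to check is that the Knaster–Tarski argument — which is purely order-theoretic and uses only intersection — goes through in the internal logic of the presheaf topos $\ClkTopos$, which it does because toposes have all small limits of subobjects.
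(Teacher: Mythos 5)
Your proposal is correct and matches the paper's argument exactly: the paper also obtains the least fixed point by applying the Knaster--Tarski theorem to the monotone operator $\TSFun{\sigma}$ on the internal complete lattice $\PTSObj$ of candidate type systems, noting that this construction is valid in any topos. The explicit intersection-of-pre-fixed-points verification you supply is the standard content of Knaster--Tarski, which the paper leaves implicit.
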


We will write $\Of{\TSClose{-}}{\PTSObj\to\PTSObj}$ for the operator
that takes $\Of{\sigma}{\PTSObj}$ to the fixed point
$\mu\Parens*{\TSFun{\sigma}}$.

\ifreport%
\begin{lemma}[\CoqRef{Closure.Clo.extensionality}]\label{thm:clo-extensionality}
  For any $\Of{\sigma}{\PTSObj}$ an extensional candidate type system
  which contains only types that evaluate to universes, the closure
  $\TSClose{\sigma}$ is extensional.
\end{lemma}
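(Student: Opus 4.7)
My plan is to apply Park's fixed-point induction to the presentation $\TSClose{\sigma} = \mu\Parens*{\TSFun{\sigma}}$. I would define
\[
  X \triangleq \SetCompr{(\AT{A}, \mathcal{A})}{\forall \mathcal{A}'.\ \Member{(\AT{A}, \mathcal{A}')}{\TSClose{\sigma}} \Rightarrow \mathcal{A} = \mathcal{A}'}
\]
and reduce the goal $\TSClose{\sigma} \subseteq X$ to the pre-fixed-point condition $\TSFun{\sigma}(X) \subseteq X$. The remainder is then a case analysis on how a pair $(\AT{A},\mathcal{A})$ enters $\TSFun{\sigma}(X) = \sigma \cup \ValClo{\Conns{X}}$, in each case exploiting (a) determinism of $\HIBox{\Step{-}{-}}$, (b) disjointness of the canonical type formers $\ITyUniv{i}, \IPi{-}{-}, \ISg{-}{-}, \IWTy{-}{-}, \ITyEqu{-}{-}{-}, \CttLater{-}{-}, \IClkIsect{-}, \IClkProd{-}, \ITyVoid, \ITyUnit, \ITyBool, \ITyNat$ to pin down the clause from which any competing $\mathcal{A}'$ in $\TSClose{\sigma}$ must come, and then (c) the defining condition of $X$ as an inductive hypothesis to equate the relations used to build $\mathcal{A}$ and $\mathcal{A}'$.

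For $(\AT{A},\mathcal{A}) \in \sigma$, the hypothesis on $\sigma$ gives $\StepStar{A}{\ITyUniv{i}}$ for some $i$, so by disjointness of value forms no connective clause in $\Conns{X}$ could have produced a pair with this head; the only competitor also comes from $\sigma$, and extensionality of $\sigma$ finishes the case. For the nondependent base cases ($\CloVoid, \CloUnit, \CloBool, \CloNat$), the relation is pinned down by the canonical form alone. For the family cases ($\CloProd, \CloFun, \CloW, \CloEq$), $A$ evaluates to (say) $\IPi{A'}{B'}$ with data $(\mathcal{A}', \mathcal{B}')$; any competing $(\AT{A}, \mathcal{A}'') \in \TSClose{\sigma}$ must factor through the same clause with data $(\mathcal{A}''_0, \mathcal{B}'')$ where $\Member{(\AT{A'}, \mathcal{A}''_0)}{\TSClose{\sigma}}$. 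The defining condition of $X$ applied at $(\AT{A'}, \mathcal{A}')$ forces $\mathcal{A}' = \mathcal{A}''_0$, and a symmetric application on each fibre equates $\mathcal{B}'$ with $\mathcal{B}''$, so the two outer relations coincide.

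I expect the real difficulty to be the $\CloLtr$ case, with $\CloIsect$ and $\CloKFun$ as structurally easier variants. When $A \StepStar \CttLater{\kappa}{B}$, the witness $\mathcal{B}^*$ with $\Later{\kappa}{\Member{(\AT{B},\mathcal{B}^*)}{X}}$ is only available underneath $\Later{\kappa}{-}$, so uniqueness cannot be argued pointwise. A competitor factors through the same clause with a $\mathcal{B}^{**}$ satisfying $\Later{\kappa}{\Member{(\AT{B},\mathcal{B}^{**})}{\TSClose{\sigma}}}$; my plan is to push the defining condition of $X$ inside the later modality using Theorems~\ref{thm:later-cartesian} and~\ref{thm:later-commutes-with-implication} to derive $\Later{\kappa}{\mathcal{B}^* = \mathcal{B}^{**}}$. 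This is exactly the right granularity, because the resulting relations $\mathcal{A}$ and $\mathcal{A}'$ inspect their witnesses only under $\Later{\kappa}{-}$, so $\Later{\kappa}{\mathcal{B}^* = \mathcal{B}^{**}}$ suffices to equate them as subsets of $\ITm{0}\times\ITm{0}$. The clock-quantifier cases go through identically with commutations for $\forall\Of{\kappa}{\ClkObj}$ in place of $\Later{\kappa}{-}$, and require no new axiom.
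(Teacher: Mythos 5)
Your proposal is correct and is exactly the paper's argument: the paper's one-line proof ``by the universal property of the closure operator'' is precisely the Park-induction step $\TSFun{\sigma}(X)\subseteq X\Rightarrow\mu\Parens*{\TSFun{\sigma}}\subseteq X$ that you carry out, and your handling of the $\CloLtr{}$ case by pushing the uniqueness hypothesis under $\Later{\kappa}{-}$ via Theorems~\ref{thm:later-cartesian} and~\ref{thm:later-commutes-with-implication} is the right way to discharge the one genuinely non-routine case. You have simply supplied the details the paper elides.
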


\begin{proof}
  By the universal property of the closure operator.
\end{proof}

\begin{lemma}[\CoqRef{Closure.Clo.cext\_per}, \CoqRef{Closure.Clo.cext\_computational}]\label{thm:eval-clo-per}
  If the relation $\Of{\mathcal{A}}{\Rel{\ITm{0}}}$ is a PER, then
  $\ValClo{\mathcal{A}}$ is a computational PER.%
\end{lemma}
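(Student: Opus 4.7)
The plan is to check each of the three defining properties of a computational PER (Definition~\ref{def:cper}) for $\ValClo{\mathcal{A}}$ directly from its definition, which says that $(\AT{M_0},\AT{M_1}) \in \ValClo{\mathcal{A}}$ precisely when each $M_i$ evaluates to some value $M_i^v$ with $(\AT{M_0^v},\AT{M_1^v}) \in \mathcal{A}$. All three properties follow by shuffling the witness values and appealing either to the corresponding property of $\mathcal{A}$, or to a standard determinacy property of the operational semantics defined in Figure~\ref{fig:programs}.

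For symmetry, suppose $(\AT{M_0},\AT{M_1}) \in \ValClo{\mathcal{A}}$ via witnesses $\AT{M_0^v}, \AT{M_1^v}$. Since $\mathcal{A}$ is symmetric, $(\AT{M_1^v},\AT{M_0^v}) \in \mathcal{A}$, and the same witnesses (swapped) show $(\AT{M_1},\AT{M_0}) \in \ValClo{\mathcal{A}}$. For transitivity, assume $(\AT{M_0},\AT{M_1})$ and $(\AT{M_1},\AT{M_2})$ lie in $\ValClo{\mathcal{A}}$, giving values $\AT{M_0^v}, \AT{M_1^v}, \AT{M_1^{v'}}, \AT{M_2^v}$ with $M_1 \StepStar M_1^v$ and $M_1 \StepStar M_1^{v'}$. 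Because the step relation is deterministic and values are stable under $\HIBox{\Step{-}{-}}$, both reduction sequences must terminate at the same canonical form, so $\AT{M_1^v} = \AT{M_1^{v'}}$; then transitivity of $\mathcal{A}$ yields $(\AT{M_0^v},\AT{M_2^v}) \in \mathcal{A}$, and the composed witnesses place $(\AT{M_0},\AT{M_2})$ in $\ValClo{\mathcal{A}}$.

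For computationality, suppose $(\AT{M_0},\AT{M_1}) \in \ValClo{\mathcal{A}}$ with witnesses $\AT{M_0^v}, \AT{M_1^v}$, and suppose $\ClosedApprox{M_0}{M_0'}$. Unfolding the definition of $\HIBox{\ClosedApprox{-}{-}}$, from $\StepStar{M_0}{M_0^v}$ with $\AT{M_0^v} \in \IVal$ we immediately obtain $\StepStar{M_0'}{M_0^v}$, so the pair $(\AT{M_0'},\AT{M_1})$ is still witnessed by the same values and lies in $\ValClo{\mathcal{A}}$.

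The only nonroutine ingredient is the determinacy-of-evaluation fact used in the transitivity step, namely that whenever $M \StepStar M^v$ and $M \StepStar M^{v'}$ with both $M^v, M^{v'}$ values, one has $\AT{M^v} = \AT{M^{v'}}$. This is a standard consequence of the small-step rules in Figure~\ref{fig:programs}: the $\HIBox{\Step{-}{-}}$ relation is a partial function (by case analysis on the rules), and no value is the source of any step. I expect the Coq proof to discharge determinacy as a separate lemma and then reuse it here; apart from this, the argument is purely a matter of packaging and unpackaging the existentials in the definition of $\ValClo{\mathcal{A}}$.
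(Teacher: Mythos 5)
Your proof is correct and takes the same route as the paper, whose entire argument is the one-line remark ``by the determinacy of evaluation''; you have simply spelled out the routine PER checks and correctly isolated determinacy as the only nontrivial ingredient, needed exactly where you use it (identifying the two canonical forms of $\AT{M_1}$ in the transitivity step).
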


\begin{proof}
  By the determinacy of evaluation.
\end{proof}

\begin{lemma}[\CoqRef{Closure.Clo.cper\_valued}]\label{thm:clo-cper-valued}
  If $\Of{\sigma}{\PTSObj}$ is CPER-valued, extensional and contains
  only types that evaluate to universes, then its closure
  $\TSClose{\sigma}$ is CPER-valued.
\end{lemma}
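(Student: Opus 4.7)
The plan is to proceed by the induction principle afforded by the Knaster-Tarski characterization of $\TSClose{\sigma}=\mu\Parens*{\TSFun{\sigma}}$: it suffices to exhibit a $\TSFun{\sigma}$-prefixed point that is CPER-valued, which in turn follows if we show that the subobject of CPER-valued candidate type systems is preserved by $\TSFun{\sigma}$. So the real work is to fix a candidate $\Of{\tau}{\PTSObj}$ satisfying $\tau\subseteq\TSClose{\sigma}$ and assume inductively that $\tau$ is CPER-valued (and extensional, by Lemma~\ref{thm:clo-extensionality}), and then verify that every $(\AT{A},\mathcal{A})\in\TSFun{\sigma}(\tau)=\sigma\cup\ValClo{\Conns{\tau}}$ carries a computational PER $\mathcal{A}$.

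The $\sigma$ disjunct is handled by hypothesis. For the $\ValClo{\Conns{\tau}}$ disjunct, note that $\ValClo{-}$ on type systems does not alter the relations, only the set of type-codes: so I reduce to showing that each of the connective-clauses $\CloVoid{\tau},\CloUnit{\tau},\CloBool{\tau},\CloNat{\tau},\CloProd{\tau},\CloFun{\tau},\CloW{\tau},\CloEq{\tau},\CloLtr{\tau},\CloIsect{\tau},\CloKFun{\tau}$ associates each type with a CPER. I split into cases: $\CloVoid{\tau}$ pairs each type with $\varnothing$, which is trivially a CPER; the clauses $\CloUnit{\tau},\CloBool{\tau},\CloNat{\tau},\CloW{\tau},\CloEq{\tau}$ each pair types with a relation of the shape $\ValClo{\mathcal{X}}$ where $\mathcal{X}$ is easily a PER (for $\CloNat{\tau}$ and $\CloW{\tau}$, we use least fixed points of monotone operators that preserve the PER property), so Lemma~\ref{thm:eval-clo-per} applies. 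For $\CloIsect{\tau}$ and $\CloKFun{\tau}$, symmetry, transitivity and computational respect transfer directly from the pointwise CPERs $\mathcal{A}(\kappa)$ via universal instantiation over $\Of{\kappa}{\ClkObj}$; for $\CloLtr{\tau}$, I push the inductively-assumed CPER axioms for $\mathcal{A}$ through the $\Later{\kappa}{-}$ modality, using Theorem~\ref{thm:later-commutes-with-implication} to move implications underneath $\Later{\kappa}{-}$.

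The main obstacle is the dependent cases $\CloProd{\tau}$ and $\CloFun{\tau}$, where the relation is defined on projection-pairs (resp.\ application-pairs) and the membership condition on second projections lies in the $\AT{\IFst{M_0}}$-instance of the family $\mathcal{B}$. To prove symmetry, given $\Member{\Parens*{\AT{M_0},\AT{M_1}}}{\mathcal{X}}$ I invert using symmetry of $\mathcal{A}$ on first projections, but then I obtain membership in $\mathcal{B}(\AT{\IFst{M_0}})$ rather than in $\mathcal{B}(\AT{\IFst{M_1}})$. To bridge these, I appeal to the ``criss-crossed'' functionality condition built into $\TSFun{\sigma}$, which supplies both $\Member{\Parens*{\AT{\Subst{M_0}{B}},\mathcal{B}(\AT{M_1})}}{\tau}$ and $\Member{\Parens*{\AT{\Subst{M_1}{B}},\mathcal{B}(\AT{M_0})}}{\tau}$ whenever $\Member{\Parens*{\AT{M_0},\AT{M_1}}}{\mathcal{A}}$; then extensionality of $\tau$ (Lemma~\ref{thm:clo-extensionality}) forces $\mathcal{B}(\AT{\IFst{M_0}})$ and $\mathcal{B}(\AT{\IFst{M_1}})$ to coincide. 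Transitivity uses the same trick combined with transitivity of $\mathcal{A}$ and the IH on each $\mathcal{B}(-)$. For $\CloFun{\tau}$ the argument is analogous, replacing projections with applications and passing arguments in the symmetric/transitive pattern through $\mathcal{A}$ before invoking the criss-crossed clause.

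Finally, computational respect in each case follows from the observation that $\ClosedApprox{-}{-}$ is a congruence with respect to the term constructors appearing in the membership conditions (projections, applications, successor, $\ISup$, etc.), so any approximation $\ClosedApprox{M_0}{M_0'}$ induces the required structural approximations on $\AT{\IFst{M_0}}$, $\AT{M_0(N_0)}$, etc., which can then be absorbed by the inductively assumed computational respect of the underlying PERs $\mathcal{A}$ and $\mathcal{B}(-)$. For the $\ValClo$-wrapped cases and for $\CloLtr{\tau}$, computational respect is already built into the definition. Putting the cases together yields that $\TSFun{\sigma}(\tau)$ is CPER-valued whenever $\tau$ is, which, by the universal property of $\TSClose{\sigma}$, yields the result.
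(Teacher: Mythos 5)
Your proposal is correct and follows essentially the same route as the paper's (one-line) proof: induction via the universal property of the closure operator $\TSClose{\sigma}$, a case split over the connective clauses, the criss-crossed functionality condition plus extensionality to align $\mathcal{B}(\AT{\IFst{M_0}})$ with $\mathcal{B}(\AT{\IFst{M_1}})$ in the dependent cases, and the commutation lemmas for the later modality in the $\CloLtr{\tau}$ case (the paper cites Theorem~\ref{thm:later-cartesian}, of which your Theorem~\ref{thm:later-commutes-with-implication} is a consequence). One small quibble: computational respect for $\CloLtr{\tau}$ is not ``built into the definition'' (its member relation is not $\ValClo$-wrapped); it requires the same push-under-$\Later{\kappa}{\phi}$ argument you give for symmetry and transitivity, using Theorem~\ref{thm:later-unit} to import $\ClosedApprox{M_0}{M_0'}$ underneath the modality.
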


\begin{proof}
  By the universal property of the closure operator, using
  Theorem~\ref{thm:later-cartesian}.
\end{proof}

\begin{lemma}[\CoqRef{Closure.Clo.type\_computational}]\label{thm:clo-type-computationality}
  If $\Of{\sigma}{\PTSObj}$ is type-computational, then so is its
  closure $\TSClose{\sigma}$.
\end{lemma}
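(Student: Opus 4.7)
The plan is to unfold the fixed-point equation exactly once, which reduces the claim to a trivial case analysis. Since $\TSClose{\sigma} = \mu(\TSFun{\sigma})$ is a \emph{fixed} point, we have the equality
\[
  \TSClose{\sigma}
  \;=\;
  \sigma \cup \ValClo{\Conns{\TSClose{\sigma}}}.
\]
Suppose $\Member{(\AT{A},\mathcal{A})}{\TSClose{\sigma}}$ and $\ClosedApprox{A}{A'}$; I must show $\Member{(\AT{A'},\mathcal{A})}{\TSClose{\sigma}}$, and I case-split on the disjunction above.

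In the first case, $\Member{(\AT{A},\mathcal{A})}{\sigma}$, I appeal directly to the hypothesis that $\sigma$ is type-computational to conclude $\Member{(\AT{A'},\mathcal{A})}{\sigma} \subseteq \TSClose{\sigma}$. In the second case, $\Member{(\AT{A},\mathcal{A})}{\ValClo{\Conns{\TSClose{\sigma}}}}$, unfolding the definition of $\ValClo{-}$ produces a value $\Of{\AT{A_v}}{\IVal}$ such that $\StepStar{A}{A_v}$ and $\Member{(\AT{A_v},\mathcal{A})}{\Conns{\TSClose{\sigma}}}$. The defining clause of $\ClosedApprox{A}{A'}$ applied to the value $A_v$ yields $\StepStar{A'}{A_v}$, so $(A',\mathcal{A})$ witnesses membership in $\ValClo{\Conns{\TSClose{\sigma}}}$ with the same canonical form and relation, and hence lies in $\TSClose{\sigma}$.

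Observe that this proof works uniformly regardless of which connective among $\CloVoid{-},\CloUnit{-},\CloBool{-},\CloNat{-},\CloProd{-},\CloFun{-},\CloKFun{-},\CloEq{-},\CloLtr{-},\CloIsect{-},\CloW{-}$ the canonical form $A_v$ belongs to: type-computationality is enforced structurally by the outermost application of $\ValClo{-}$ in the definition of $\TSFun{\sigma}$, and no connective-specific reasoning is required. The only semantic ingredient beyond unfolding is that $\ClosedApprox{-}{-}$ transmits convergence to a value, which is exactly how the approximation judgment is defined in the excerpt. Consequently no fixed-point induction is needed; this contrasts with Lemma~\ref{thm:clo-cper-valued}, which genuinely uses the least-fixed-point structure (and Theorem~\ref{thm:later-cartesian}) because the CPER property must be propagated through the $\CloLtr{-}$ clause. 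Here the main (very small) obstacle is purely bookkeeping: confirming that the fixed-point equation can be used as an equality in both directions, which is immediate from Knaster--Tarski once $\TSFun{\sigma}$ is known to be monotone.
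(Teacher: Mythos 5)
Your proof is correct, and it takes a genuinely more elementary route than the paper's. The paper proves this lemma ``by the universal property of the closure operator, using Theorem~\ref{thm:later-cartesian}'' --- that is, by fixed-point induction over $\mu\Parens*{\TSFun{\sigma}}$, with the later-modality machinery apparently entering in the inductive case for $\CloLtr{-}$. You instead observe that type-computationality is already enforced \emph{structurally} by the outermost $\ValClo{-}$ in $\TSFun{\sigma}(\tau)=\sigma\cup\ValClo{\Conns{\tau}}$: a single unrolling of the fixed-point equation, the hypothesis on $\sigma$, and the fact that $\ClosedApprox{A}{A'}$ transports evaluation to any value $\AT{A_v}$ from $\CT{A}$ to $\CT{A'}$ suffice, since the very same witness $\Parens*{\AT{A_v},\mathcal{A}}\in\Conns{\TSClose{\sigma}}$ serves for $\AT{A'}$. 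This is sound: the Kripke--Joyal clauses for $\lor$ and $\exists$ in this presheaf topos are pointwise, so the case split and witness extraction are unproblematic internally, and Knaster--Tarski does give the fixed-point equation in both directions once monotonicity is established. What your argument buys is the elimination of both the induction and any appeal to the properties of $\Later{\kappa}{-}$; what the paper's approach buys is uniformity with the neighbouring lemmas (extensionality and CPER-valuedness), which genuinely do require the inductive universal property and, in the CPER case, Theorem~\ref{thm:later-cartesian} to push symmetry and transitivity under $\Later{\kappa}{-}$ --- a contrast you correctly identify.
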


\begin{proof}
  By the universal property of the closure operator, using
  Theorem~\ref{thm:later-cartesian}.
\end{proof}
\fi%

\ifreport
  \begin{theorem}[\CoqRef{Closure.Clo.monotonicity}]
    For any candidate type system $\Of{\sigma}{\PTSObj}$, the function
    $\Of{\TSFun{\sigma}}{\PTSObj\to\PTSObj}$ is monotone.
  \end{theorem}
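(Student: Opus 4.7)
The plan is to unfold the definition of monotonicity and reduce the claim to a clause-by-clause verification on the disjuncts comprising $\Conns{\tau}$. Concretely, fix $\tau_1, \tau_2 \in \PTSObj$ with $\tau_1 \subseteq \tau_2$; since $\TSFun{\sigma}(\tau) = \sigma \cup \ValClo{\Conns{\tau}}$ and $\sigma$ does not depend on $\tau$, and since $\ValClo{-}$ is visibly monotone (it only quantifies positively over its input relation), it suffices to show that each connective clause is monotone, i.e.\ $\mathfrak{Clo}_X(\tau_1) \subseteq \mathfrak{Clo}_X(\tau_2)$ for each $X \in \{\mathsf{Void}, \mathsf{Unit}, \mathsf{Bool}, \mathsf{Nat}, \Pi, \Sigma, W, \mathsf{Eq}, \blacktriangleright, \cap, \mathsf{KFun}\}$.

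The clauses $\CloVoid{\tau}, \CloUnit{\tau}, \CloBool{\tau}, \CloNat{\tau}$ are entirely independent of $\tau$, so their monotonicity is immediate. For the remaining clauses, I would observe that $\tau$ appears only \emph{positively}: each occurrence is inside an existential witness condition of the form $(\AT{A}, \mathcal{A}) \in \tau$ (possibly conjoined under $\forall$, $\land$, or $\Later{\kappa}{-}$), never in the antecedent of an implication ranging over $\tau$. So assuming the witnessing relations $\mathcal{A}, \mathcal{B}, \ldots$ that validate membership of $(\AT{A}, \mathcal{X})$ in $\mathfrak{Clo}_X(\tau_1)$, the very same witnesses validate membership in $\mathfrak{Clo}_X(\tau_2)$ by direct application of $\tau_1 \subseteq \tau_2$. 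For $\CloLtr{\tau}$ one additionally uses that the modality $\Later{\kappa}{-}$ is monotone on the internal poset of propositions (which follows from its Kripke–Joyal forcing clause); for $\CloIsect{\tau}$ and $\CloKFun{\tau}$ one uses that $\forall \kappa$ preserves implications.

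Thus the verification is a finite and essentially symbolic case analysis. The only point requiring a moment of thought is the $\CloLtr{\tau}$ clause, where we must see that enlarging $\tau$ preserves $\Later{\kappa}{((\AT{A}, \mathcal{A}) \in \tau)}$; but this is nothing more than the monotonicity of $\Later{\kappa}{-}$ applied to the hypothesis $\tau_1 \subseteq \tau_2$ (internally, $\phi \Rightarrow \psi$ entails $\Later{\kappa}{\phi} \Rightarrow \Later{\kappa}{\psi}$, which is a consequence of the properties of $\Later{\kappa}{-}$ listed in Section~\ref{sec:semantic-universe}). There is no genuine obstacle: the definitions in Figure~\ref{fig:monotone-operator} were arranged precisely so that $\tau$ occurs only in strictly positive positions, which makes monotonicity hold by inspection.
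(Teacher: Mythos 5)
Your proposal is correct and follows the same route as the paper, which simply argues by cases on the closure clauses, each of which is monotone; you have merely spelled out the details (strict positivity of $\tau$ in each clause, monotonicity of $\ValClo{-}$, and monotonicity of $\Later{\kappa}{-}$, which the paper establishes separately as a corollary of the fact that the later modality commutes with conjunction). No gap.
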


  \begin{proof}
    By case on the type closure clauses, which are themselves monotone.
  \end{proof}

  \begin{corollary}[\CoqRef{Closure.Clo.t}, \CoqRef{Closure.Clo.roll}]
    By the Knaster-Tarski theorem, the function $\TSFun{\sigma}$ has a
    least fixed point $\mu\Parens*{\TSFun{\sigma}}$.
  \end{corollary}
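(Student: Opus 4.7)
The plan is to invoke the Knaster--Tarski theorem directly in the internal logic of $\ClkTopos$, since it is a constructively valid statement about monotone endomaps on complete lattices. First I would observe that $\PTSObj = \Pow{\ITm{0}\times\Rel{\ITm{0}}}$ is, by construction, a powerobject in the topos $\ClkTopos$, and hence a complete Heyting algebra: arbitrary meets and joins are given pointwise by set-theoretic intersection and union of sub-presheaves, and the order is inclusion. This gives exactly the ambient complete-lattice structure that Knaster--Tarski requires.

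Next I would appeal to the preceding theorem, which establishes that $\TSFun{\sigma} : \PTSObj \to \PTSObj$ is monotone with respect to this order. Knaster--Tarski then provides the least fixed point explicitly as
\[
  \mu\Parens*{\TSFun{\sigma}} \;\triangleq\; \bigcap\, \SetCompr{\tau \in \PTSObj}{\TSFun{\sigma}(\tau) \subseteq \tau},
\]
i.e.\ the intersection of all $\TSFun{\sigma}$-prefixed-points. A short verification shows that this intersection is itself a prefixed point (by monotonicity: $\TSFun{\sigma}$ applied to the intersection lies below $\TSFun{\sigma}(\tau)$ for each prefixed point $\tau$, hence below $\tau$), and therefore that $\TSFun{\sigma}$ applied to it is also prefixed, yielding equality and thus a fixed point; leastness is immediate from the definition.

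The one subtlety I would call attention to is that this entire argument has to be read in the internal language of $\ClkTopos$ rather than in classical set theory. Fortunately the proof of Knaster--Tarski sketched above is fully constructive --- it uses only powerobjects, intersections of subobjects, and monotonicity --- so it goes through unchanged in any topos, and in particular in $\ClkTopos$. There is no real obstacle; the corollary is essentially a bookkeeping step that packages the monotonicity theorem together with the complete-lattice structure of $\PTSObj$ to produce $\mu(\TSFun{\sigma})$, which is then named $\TSClose{\sigma}$ for use in subsequent developments.
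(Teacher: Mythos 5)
Your proposal is correct and matches the paper's intent exactly: the paper treats this as an immediate consequence of the monotonicity theorem together with the observation (made just before Figure~\ref{fig:monotone-operator}) that Knaster--Tarski holds for monotone operators on complete lattices in any topos, $\PTSObj$ being a powerobject. Your explicit construction of $\mu\Parens*{\TSFun{\sigma}}$ as the intersection of prefixed points and your remark on internal/constructive validity are exactly the details the paper leaves implicit.
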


  We will write $\Of{\TSClose{-}}{\PTSObj\to\PTSObj}$ for the operator
  that takes $\Of{\sigma}{\PTSObj}$ to the fixed point
  $\mu\Parens*{\TSFun{\sigma}}$.
\fi

\subsection{The full universe hierarchy}

\newcommand\Spine[1]{\nu_{#1}}

The next step in the construction is to build up the universe
hierarchy. Following~\citet{allen:1987:thesis}, we define the
``spine'' of the universe hierarchy as a sequence of type systems
$\Of{\Spine{}}{\PTSObj^\Nat}$ that contains at each level only types
which evaluate to universes:
\begin{align*}
  \Spine{0} &= \bot
  \\
  \Spine{n+1} &=
  \ValClo{
    \SetCompr{
      \Parens*{\AT{\ITyUniv{i}},\mathcal{U}}
    }{
      i\leq n
      \land
      \mathcal{U}\equiv
      \SetCompr{
        \Parens*{\AT{A_0}, \AT{A_1}}
      }{
        \WithTS{\TSClose{\Spine{i}}}{
          \SimType{A_0}{A_1}
        }
      }
    }
  }
\end{align*}

The sequence above is well-defined by complete induction on the
index.
\ifreport%
\begin{lemma}[\CoqRef{Tower.Spine.monotonicity}]\label{thm:spine-monotonicity}
  If $\IsLEQ{i}{j}$, then $\IsApprox{\Spine{i}}{\Spine{j}}$.
\end{lemma}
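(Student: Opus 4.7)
The plan is to proceed by induction on $j$, reducing to the single-step statement that $\Spine{n}$ is contained in $\Spine{n+1}$ for every $n$; the general statement then follows by transitivity of $\sqsubseteq$ together with the trivial base case $i = j$.

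For the single-step claim, the case $n = 0$ is immediate since $\Spine{0} = \bot$ is the empty candidate type system. For $n \geq 1$, I would unfold both sides of the definition and compare them term by term: $\Spine{n}$ is the value closure of the set of pairs $\Parens*{\AT{\ITyUniv{i}}, \mathcal{U}_i}$ with $i \leq n-1$, while $\Spine{n+1}$ is the value closure of the same shape of set but with $i \leq n$. The crucial point is that the inner relations $\mathcal{U}_i$ are defined using $\TSClose{\Spine{i}}$, and in both definitions the relevant $\Spine{i}$ (for $i \leq n-1$) have already been fixed at earlier stages of the well-founded recursion. Hence the entries indexed by $i \leq n-1$ are literally the same pairs on both sides, and $\Spine{n+1}$ only adds a new layer at index $i = n$; containment follows by applying monotonicity of the value closure operator $\ValClo{-}$ to the inclusion of the defining sets.

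I expect the substantive work to be purely notational: carefully observing that $\Spine{-}$ is defined by complete induction so that $\Spine{i}$ genuinely means the same object across different stages, and that the set-comprehension on the right-hand side is monotone in the range of $i$ once this is established. There is no interaction with the later modality or with the subtler aspects of $\TSClose{-}$ here, so no use of Theorems~\ref{thm:later-cartesian} or related principles is required.

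The only mild obstacle, if any, lies in matching the formal meaning of $\IsApprox{-}{-}$ on candidate type systems (likely set-theoretic inclusion in $\PTSObj$) with the inclusion used in the argument, and in making the induction on $j$ interact smoothly with the recursive definition. Once this is set up, the proof is essentially a single line appealing to monotonicity of set-builder comprehension and of $\ValClo{-}$.
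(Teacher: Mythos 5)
Your proposal is correct and matches the paper, whose entire proof is the one-line remark ``By induction on $i$'': the substance in both cases is the observation that the defining comprehension for $\Spine{n+1}$ is monotone in the bound on the universe index (the paired relations $\mathcal{U}_i$ being fixed by the earlier, already-defined stages $\TSClose{\Spine{i}}$) together with monotonicity of $\ValClo{-}$. Whether one phrases the induction on $i$, on $j$, or as a single-step inclusion chained by transitivity is immaterial.
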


\begin{proof}
  By induction on $i$.
\end{proof}

\begin{lemma}[\CoqRef{Tower.Spine.extensionality}]\label{thm:spine-extensionality}
  Every spine level $\nu_i$ is extensional in the sense that it is the
  graph of a partial function $\ITm{0}\rightharpoonup{}\Rel{\ITm{0}}$.
\end{lemma}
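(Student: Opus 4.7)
The plan is to prove extensionality by induction on $i$. The base case $i=0$ is immediate because $\Spine{0}=\bot$ is empty, hence vacuously the graph of a partial function.

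For the successor case, I would unfold the definition of $\Spine{n+1} = \ValClo{S}$ where
\[
  S = \SetCompr{\Parens*{\AT{\ITyUniv{i}},\mathcal{U}_i}}{i\leq n},
\]
and $\mathcal{U}_i \equiv \SetCompr{\Parens*{\AT{A_0},\AT{A_1}}}{\WithTS{\TSClose{\Spine{i}}}{\SimType{A_0}{A_1}}}$. The first observation is that the pre-closure set $S$ is itself the graph of a partial function on $\ITm{0}$: its first projections are the canonical forms $\ITyUniv{i}$, which are pairwise distinct for different $i$ (as constructors of the inductive definition of $\ITm{0}$), and for each $i$ the relation $\mathcal{U}_i$ is literally uniquely determined by the defining comprehension. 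Note that this step does not require the inductive hypothesis, since $\mathcal{U}_i$ is a specific set regardless of whether $\TSClose{\Spine{i}}$ happens to be extensional.

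Next I would lift partial functionality through $\ValClo{-}$. Suppose $\Parens*{\AT{A},\mathcal{A}_0}$ and $\Parens*{\AT{A},\mathcal{A}_1}$ both lie in $\ValClo{S}$; then there exist values $\AT{A_v^0},\AT{A_v^1}$ with $\StepStar{A}{A_v^j}$ and $\Member{\Parens*{\AT{A_v^j},\mathcal{A}_j}}{S}$. By determinacy of the operational semantics $\Of{\HIBox{\Step{-}{-}}}{\Pow{\ITm{0}\times\ITm{0}}}$, the value $\AT{A_v}$ reached by $\AT{A}$ is unique, so $\AT{A_v^0}=\AT{A_v^1}$. Applying functionality of $S$ then yields $\mathcal{A}_0=\mathcal{A}_1$, which is exactly the required extensionality for $\Spine{n+1}$.

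Overall I expect no real obstacle: the proof hinges entirely on (i) the syntactic distinctness of the universe constructors $\AT{\ITyUniv{i}}$ for different $i$, and (ii) the determinacy of $\Step{-}{-}$, both of which are properties of the internal inductive syntax and transition system in $\ClkTopos$. The mild subtlety worth flagging is that although the definition of $\Spine{n+1}$ refers recursively to lower $\Spine{i}$ through $\TSClose{\Spine{i}}$, the extensionality argument does not need to invoke the inductive hypothesis on those lower levels, since $\mathcal{U}_i$ is fixed by its defining comprehension irrespective of whether $\TSClose{\Spine{i}}$ is functional. The strong induction is thus essentially bookkeeping to ensure the definition of the sequence $\Spine{-}$ is well-posed.
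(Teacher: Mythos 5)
Your proof is correct and matches the paper's approach: the paper's own proof is simply ``by case on $i$,'' and your observation that the inductive hypothesis is never needed (so that the argument is really a case analysis on $\Spine{0}$ versus $\Spine{n+1}$) is exactly why the paper phrases it that way. The details you supply — distinctness of the constructors $\AT{\ITyUniv{i}}$ and determinacy of evaluation to lift functionality through $\ValClo{-}$ — are the right ones, left implicit in the paper.
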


\begin{proof}
  By case on $i$.
\end{proof}

\begin{lemma}[\CoqRef{Tower.Spine.type\_computational}]\label{thm:spine-type-computationality}
  Every spine level $\nu_i$ is type-computational.
\end{lemma}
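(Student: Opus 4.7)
The plan is to proceed by case analysis on the index $i$, with essentially no induction needed because each spine level is defined as a value closure. For $i = 0$, we have $\Spine{0} = \bot$, which is vacuously type-computational since there are no pairs in it. For $i = n+1$, the definition gives $\Spine{n+1} = \ValClo{X}$ where $X$ is the relation of universe/extensional-equality pairs $(\AT{\ITyUniv{i}}, \mathcal{U})$ for $i \le n$.

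The key observation is a general fact about the value closure operator on candidate type systems: for any $\Of{X}{\PTSObj}$, the candidate type system $\ValClo{X}$ is type-computational. To see this, suppose $\Member{(\AT{A}, \mathcal{A})}{\ValClo{X}}$ and $\ClosedApprox{A}{A'}$. By the definition of $\ValClo{-}$, there exists $\Of{\AT{A_v}}{\IVal}$ with $\StepStar{A}{A_v}$ and $\Member{(\AT{A_v}, \mathcal{A})}{X}$. The definition of $\HIBox{\ClosedApprox{-}{-}}$ for closed programs says precisely that any value reached by $A$ is also reached by $A'$, so $\StepStar{A'}{A_v}$. Folding this back through the definition of the value closure yields $\Member{(\AT{A'}, \mathcal{A})}{\ValClo{X}}$, as required.

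Applying this general lemma to $\Spine{n+1}$ immediately gives the result; the fact that the underlying relation $X$ only contains universe types is irrelevant for type-computationality, which is a purely operational condition. There is no real obstacle here: the whole content is the interaction between the two definitions, which were designed to match exactly for this reason. (In effect, closing types under backward evaluation is the standard recipe for making a relation respect computational approximation, and this lemma is the instance of that recipe for the spine.)
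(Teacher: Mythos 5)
Your proof is correct and takes the same approach as the paper, which simply states ``By case on $i$''; you have filled in the details (vacuity at $\Spine{0}=\bot$, and the general fact that $\ValClo{-}$ produces type-computational candidate type systems because $\ClosedApprox{A}{A'}$ transfers reachable values from $A$ to $A'$) in exactly the way the paper intends.
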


\begin{proof}
  By case on $i$.
\end{proof}

\begin{lemma}[\CoqRef{Tower.Spine.cper\_valued}]\label{thm:spine-cper-valued}
  Every spine is valued in CPERs.
\end{lemma}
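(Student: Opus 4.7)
The plan is to proceed by case analysis on $i$, exactly mirroring the structure used for the extensionality and type-computationality lemmas (\ref{thm:spine-extensionality}, \ref{thm:spine-type-computationality}). The base case $i=0$ is immediate: $\Spine{0} = \bot$ is empty, so the property holds vacuously.

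For the successor case $i = n+1$, I pick an arbitrary element $(\AT{A}, \mathcal{U}) \in \Spine{n+1}$. By the $\ValClo{(-)}$ wrapper in the definition of the spine, we have $\StepStar{A}{\ITyUniv{j}}$ for some $j \leq n$, and $\mathcal{U}$ is the universe relation $\{(\AT{A_0}, \AT{A_1}) \mid \WithTS{\TSClose{\Spine{j}}}{\SimType{A_0}{A_1}}\}$. I must show that $\mathcal{U}$ is symmetric, transitive, and respects approximation on the left.

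Symmetry is immediate from the obviously symmetric shape of $\SimType{}{}$. For transitivity, suppose $(\AT{A_0},\AT{A_1}) \in \mathcal{U}$ witnessed by $\mathcal{A}$ and $(\AT{A_1},\AT{A_2}) \in \mathcal{U}$ witnessed by $\mathcal{A}'$, so that $(\AT{A_1}, \mathcal{A}), (\AT{A_1}, \mathcal{A}') \in \TSClose{\Spine{j}}$. Here is the step that carries the weight of the argument: by Lemma~\ref{thm:spine-extensionality} together with the fact that $\Spine{j}$ contains only types that evaluate to universes (by construction), the hypothesis of Lemma~\ref{thm:clo-extensionality} is satisfied, so $\TSClose{\Spine{j}}$ is extensional and $\mathcal{A} = \mathcal{A}'$. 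Then $(\AT{A_0}, \mathcal{A})$ and $(\AT{A_2}, \mathcal{A})$ both lie in $\TSClose{\Spine{j}}$, giving $(\AT{A_0}, \AT{A_2}) \in \mathcal{U}$.

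For the computationality clause, suppose $(\AT{A_0}, \AT{A_1}) \in \mathcal{U}$ witnessed by $\mathcal{A}$ and $\ClosedApprox{A_0}{A_0'}$. By Lemma~\ref{thm:clo-type-computationality}, $\TSClose{\Spine{j}}$ is type-computational, so $(\AT{A_0'}, \mathcal{A}) \in \TSClose{\Spine{j}}$, and hence $(\AT{A_0'}, \AT{A_1}) \in \mathcal{U}$. The only real obstacle in the entire argument is transitivity, because it is the unique step that requires extensionality of the lower closed type system rather than a purely formal manipulation — everything else unfolds directly from the definition of $\mathcal{U}$ and the already-established properties of $\TSClose{-}$.
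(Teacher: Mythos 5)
Your proof is correct and takes essentially the same route as the paper, which simply says ``by induction on $i$, using Lemmas~\ref{thm:clo-extensionality}, \ref{thm:clo-type-computationality}, \ref{thm:spine-extensionality} and \ref{thm:spine-type-computationality}'' --- exactly the four facts you invoke (extensionality of $\TSClose{\Spine{j}}$ for transitivity, type-computationality of $\TSClose{\Spine{j}}$ for the approximation clause). Your write-up is just a fleshed-out version of the paper's one-line argument, and your identification of transitivity as the only step genuinely requiring extensionality of the lower closed system is on point.
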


\begin{proof}
  By induction on $i$, using
  Lemmas~\ref{thm:clo-extensionality},~\ref{thm:clo-type-computationality},~\ref{thm:spine-extensionality}
  and Theorem~\ref{thm:spine-type-computationality}.
\end{proof}

\fi%
We are now equipped to define a new sequence of type systems
which is at each level closed under all the ordinary type formers as
well as smaller universes:
\[
  \Define{\TS_n}{
    \TSClose{\Spine{n}}
  }
\]

\ifreport%
\begin{lemma}[\CoqRef{Tower.monotonicity}]\label{thm:tower-monotonicity}
  If $\IsLEQ{i}{j}$, then $\IsApprox{\TS_i}{\TS_j}$.
\end{lemma}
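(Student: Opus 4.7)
The plan is to reduce the statement to monotonicity of $\TSClose{-}$ as a function of its seed, and then combine this with the already-established monotonicity of the spine (Lemma~\ref{thm:spine-monotonicity}). Unpacking the definition, we have $\TS_i = \TSClose{\Spine{i}} = \mu\Parens*{\TSFun{\Spine{i}}}$, so it suffices to prove that whenever $\Spine{i}\sqsubseteq\Spine{j}$ as candidate type systems, we have $\mu\Parens*{\TSFun{\Spine{i}}}\sqsubseteq\mu\Parens*{\TSFun{\Spine{j}}}$.

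First I would establish a general lemma: the assignment $\sigma\mapsto\mu\Parens*{\TSFun{\sigma}}$ is monotone in $\sigma$. Inspecting the definition in Figure~\ref{fig:monotone-operator}, one sees that the seed $\sigma$ enters the operator only through the union on the left, $\TSFun{\sigma}\Parens*{\tau} = \sigma\cup\ValClo{\Conns{\tau}}$. Consequently, if $\sigma_1\sqsubseteq\sigma_2$ then $\TSFun{\sigma_1}\Parens*{\tau}\sqsubseteq\TSFun{\sigma_2}\Parens*{\tau}$ for every $\tau$. The standard Knaster--Tarski argument then applies: since $\TSFun{\sigma_2}$ fixes $\mu\Parens*{\TSFun{\sigma_2}}$, we have $\TSFun{\sigma_1}\Parens*{\mu\TSFun{\sigma_2}}\sqsubseteq\TSFun{\sigma_2}\Parens*{\mu\TSFun{\sigma_2}}=\mu\Parens*{\TSFun{\sigma_2}}$, exhibiting $\mu\Parens*{\TSFun{\sigma_2}}$ as a prefixed point of $\TSFun{\sigma_1}$. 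By the least-prefixed-point characterization of $\mu\Parens*{\TSFun{\sigma_1}}$, this yields $\mu\Parens*{\TSFun{\sigma_1}}\sqsubseteq\mu\Parens*{\TSFun{\sigma_2}}$.

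With this general monotonicity in hand, the proof of the lemma is immediate. Assuming $\IsLEQ{i}{j}$, Lemma~\ref{thm:spine-monotonicity} gives $\Spine{i}\sqsubseteq\Spine{j}$, and then the previous paragraph gives $\TS_i = \TSClose{\Spine{i}} \sqsubseteq \TSClose{\Spine{j}} = \TS_j$, as required. I expect no real obstacle: the only subtlety is verifying in the topos $\ClkTopos$ that the Knaster--Tarski argument goes through with the intended ordering on $\PTSObj$, but since $\PTSObj=\Pow{\ITm{0}\times\Rel{\ITm{0}}}$ is a powerobject it is a complete lattice internally, and the standard argument is entirely intuitionistic.
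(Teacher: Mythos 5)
Your proposal is correct and matches the paper's proof, which simply cites ``the universal property of the closure operator and Lemma~\ref{thm:spine-monotonicity}''; your Knaster--Tarski argument is exactly the expansion of that universal (least-prefixed-point) property. No gaps.
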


\begin{proof}
  By the universal property of the closure operator and
  Lemma~\ref{thm:spine-monotonicity}.
\end{proof}

\begin{theorem}[\CoqRef{Tower.extensionality},\CoqRef{Tower.type\_computational},\\ \CoqRef{Tower.cper\_valued}]\label{thm:tower-type-system}
  Each candidate type system $\tau_i$ is in fact a type system.
\end{theorem}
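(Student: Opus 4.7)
The plan is to reduce the theorem to three already-established preservation results for the closure operator $\TSClose{-}$ and apply them separately. Since being a type system amounts to the conjunction of extensionality, computational PER-valuedness, and type-computationality, I would verify each of these three clauses for $\tau_i = \TSClose{\Spine{i}}$ independently, matching the three Coq references cited.

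Before invoking the closure-preservation lemmas, I would first verify the side condition common to two of them, namely that $\Spine{i}$ contains only types whose first components evaluate to universes. By the recursive definition of the spine this is immediate by induction on $i$: $\Spine{0}=\bot$ is vacuously fine, and $\Spine{n+1}$ is obtained as $\ValClo{-}$ of a set of pairs $(\AT{\ITyUniv{j}}, \mathcal{U})$ with $j \leq n$, so every type it catalogues evaluates to some $\AT{\ITyUniv{j}}$ by construction.

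With the universe-evaluation hypothesis in hand, extensionality of $\tau_i$ follows from Lemma~\ref{thm:clo-extensionality} applied to Lemma~\ref{thm:spine-extensionality}; type-computationality follows from Lemma~\ref{thm:clo-type-computationality} applied to Theorem~\ref{thm:spine-type-computationality}; and CPER-valuedness follows from Lemma~\ref{thm:clo-cper-valued}, using Lemma~\ref{thm:spine-cper-valued}, the extensionality just verified, and the universe-evaluation side condition.

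The main obstacle here is not analytical but bookkeeping: the real mathematical work was done in the closure-preservation lemmas, which have to juggle the $\Later{\kappa}{}$ modality via Theorem~\ref{thm:later-cartesian}. At the level of $\tau_i$ itself the theorem is simply three invocations of the universal property of $\TSClose{-}$ against the pre-established structural properties of $\Spine{i}$.
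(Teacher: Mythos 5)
Your proposal matches the paper's proof almost exactly: both reduce the theorem to the three closure-preservation lemmas (Lemmas~\ref{thm:clo-extensionality}, \ref{thm:clo-type-computationality}, \ref{thm:clo-cper-valued}) fed by the corresponding structural properties of the spine (Lemmas~\ref{thm:spine-extensionality}, \ref{thm:spine-type-computationality}, \ref{thm:spine-cper-valued}), together with the observation that $\Spine{i}$ only contains types evaluating to universes. The only caveat worth flagging: in the CPER-valuedness step you cite ``the extensionality just verified,'' but Lemma~\ref{thm:clo-cper-valued} takes as hypothesis extensionality of $\Spine{i}$ (i.e.\ Lemma~\ref{thm:spine-extensionality}), not of $\tau_i$; the phrasing could mislead, though the intended input is available either way.
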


\begin{proof}
  $\tau_i$ is extensional immediately from
  Lemma~\ref{thm:clo-extensionality} and the fact that the spine
  $\Spine{i}$ contains only types that evaluate to universes. It is
  type-computational by Lemmas~\ref{thm:clo-type-computationality}
  and~\ref{thm:spine-type-computationality}. It is CPER-valued by
  Lemmas~\ref{thm:clo-cper-valued} and~\ref{thm:spine-cper-valued}.
\end{proof}
\fi%

Finally, we can capture the entire countable hierarchy in a single
type system $\TSOmega$, which is the join of the entire sequence:
\[
  \Define{\TSOmega}{
    \bigvee_{i:\Nat}\TS_i
  }
\]

When we explain the meaning of judgments, it will always be done with
respect to this maximal type system.

\begin{theorem}[$\TSOmega$ type system]\label{thm:tauomega-type-system}
  The ultimate candidate type system $\TSOmega$ is in fact a type
  system.
\end{theorem}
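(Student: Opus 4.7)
The plan is to verify the three defining properties of a type system---extensionality, computational PER-valuedness, and type-computationality---for $\TSOmega$ by reducing each to the corresponding property of some finite level $\TS_i$, combined with the monotonicity of the tower.

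The key observation is that membership $\Member{(\AT{A},\mathcal{A})}{\TSOmega}$ unfolds (by the join) to the existence of some $\Of{i}{\Nat}$ with $\Member{(\AT{A},\mathcal{A})}{\TS_i}$. For extensionality, suppose $\Member{(\AT{A},\mathcal{A}_0)}{\TSOmega}$ and $\Member{(\AT{A},\mathcal{A}_1)}{\TSOmega}$, witnessed by indices $i_0$ and $i_1$. Setting $j = \max(i_0,i_1)$ and applying Lemma~\ref{thm:tower-monotonicity}, we have both pairs in $\TS_j$, which is extensional by Theorem~\ref{thm:tower-type-system}; hence $\mathcal{A}_0 \equiv \mathcal{A}_1$. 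For CPER-valuedness, whenever $\Member{(\AT{A},\mathcal{A})}{\TSOmega}$ we pick a witnessing index $i$ and invoke the CPER-valuedness of $\TS_i$ to conclude that $\mathcal{A}$ is a computational PER. For type-computationality, given $\Member{(\AT{A},\mathcal{A})}{\TSOmega}$ and $\ClosedApprox{A}{A'}$, we again select $i$ with $\Member{(\AT{A},\mathcal{A})}{\TS_i}$, apply the type-computationality of $\TS_i$ to obtain $\Member{(\AT{A'},\mathcal{A})}{\TS_i}$, and finally re-inject along $\TS_i \subseteq \TSOmega$.

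There is no real obstacle here; the only mild subtlety is that the argument critically uses the fact that the join is indexed by a directed poset (so that two witnessing indices can be combined into a single larger one), which is precisely what monotonicity of the tower provides. The rest is a routine application of the already-established properties of the finite approximants $\TS_i$.
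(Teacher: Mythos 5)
Your proof is correct and is precisely the argument the paper intends: the paper states this theorem without an explicit proof, but the two ingredients you use---Lemma~\ref{thm:tower-monotonicity} (monotonicity of the tower, giving directedness of the join) and Theorem~\ref{thm:tower-type-system} (each $\TS_i$ is a type system)---are exactly the lemmas the paper establishes immediately beforehand for this purpose. Your identification of the one nontrivial point, that extensionality of the join requires combining two witnessing indices into a common upper bound, is the right observation.
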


\subsection{Meaning explanation}\label{sec:meaning-explanation}

In this section, we give a mathematical meaning explanation to the
formal judgments of \ClockCTT{}:

\begin{enumerate}
\item Functional equality of elements
  $\FHypEqMem{\Delta}{\Gamma}{M_0}{M_1}{A}$ means that in clock
  context $\CT{\Delta}$ and variable context $\CT{\Gamma}$,
  $\CT{M_0}$ and $\CT{M_1}$ are equal elements of type $\CT{A}$. This
  form of judgment requires that
  $\CT{\Gamma},\CT{M_0},\CT{M_1},\CT{A}$ mention only clocks from
  $\CT{\Delta}$, and that $\CT{M_0},\CT{M_1},\CT{A}$ mention only
  variables from $\Gamma$.
\item Untyped open conversion $\FOpenConv{\Delta}{\Psi}{M_0}{M_1}$ means
  that $\CT{M_0}$ and $\CT{M_1}$ are Kleene equivalent in all their
  instantiations. This form of judgment requires that
  $\CT{M_0},\CT{M_1}$ mention only clocks from $\CT{\Delta}$ and
  variables from $\CT{\Psi}$.
\end{enumerate}

\paragraph{The meaning of judgments}
We interpret each formal judgment $\mathcal{J}$ as a proposition
$\Of{\SemJdg{\mathcal{J}}}{\Omega}$ in $\ClkTopos$.
\[
  \begin{array}[t]{l}
    \SemJdg{\FHypEqMem{\Delta}{\Gamma}{M_0}{M_1}{A}}
    \triangleq%
    \\[6pt]
    \quad
    \begin{array}[t]{l}
      \forall\Of{\varrho}{\ClkObj^{\Dom{\Delta}}}.
      \\[6pt]
      \quad
      \WithTS{\TSOmega}{
        \IsCtx{
          \Sem{\Gamma}\varrho
        }
      }
      \\[6pt]
      \quad
      \Rightarrow
      \WithTS{\TSOmega}{
        \HypSimType{
          \Sem{\Gamma}\varrho
        }{
          \Sem{A_0}\varrho
        }{
          \Sem{A_1}\varrho
        }
      }
      \\[6pt]
      \quad
      \Rightarrow
      \WithTS{\TSOmega}{
        \HypSimMem{
          \Sem{\Gamma}\varrho
        }{
          \Sem{M_0}\varrho
        }{
          \Sem{M_1}\varrho
        }{
          \Sem{A}\varrho
        }
      }
    \end{array}
    \\\\
    \SemJdg{\FOpenConv{\Delta}{\Psi}{M_0}{M_1}}
    \triangleq%
    \forall\Of{\varrho}{\ClkObj^{\Dom{\Delta}}}.\
    \OpenSq{\Dom{\CT{\Psi}}}{
      \Sem{M_0}\varrho%
    }{
      \Sem{M_1}\varrho%
    }
  \end{array}
\]

Observe that the usual presuppositions of the equality judgment
(context validity and type functionality) are taken as
\emph{assumptions}: the principle can be summarized as ``garbage in,
garbage out''. Dually, we could have chosen to regard them as
consequences, which would lead to a slightly different collection of
validated rules.

\paragraph{Canonicity at base type}
Write $\Of{\mathbf{2}}{\ClkTopos}$ for the boolean object in our
semantic framework which has two global elements
$\Of{\mathbf{2}_0,\mathbf{2}_1}{\mathbf{2}}$. Define an embedding
$\lfloor-\rfloor_{\mathbf{2}}$ from this object into our formal term
language as follows:
\begin{align*}
  \lfloor\mathbf{2}_0\rfloor_{\mathbf{2}} &= \CT{\Tt}
  \\
  \lfloor\mathbf{2}_1\rfloor_{\mathbf{2}} &= \CT{\Ff}
\end{align*}

Now we can state the canonicity theorem for \ClockCTT{}.%

\begin{theorem}[\CoqRef{Canonicity.canonicity}]\label{thm:canonicity}
  For any closed expression $\CT{M}$ such that
  $\SemJdg{\FHypEqMem{\cdot}{\cdot}{M}{M}{\TyBool}}$, there exists some
  $\Member{b}{\mathbf{2}}$ such that
  $\SemJdg{\FOpenConv{\cdot}{\cdot}{M}{\lfloor b\rfloor_{\mathbf{2}}}}$.
\end{theorem}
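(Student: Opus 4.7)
The plan is to extract from the hypothesis a concrete value to which $\Sem{M}$ evaluates, read off the corresponding global Boolean from it, and check the conversion by determinacy of the operational semantics. Everything takes place relative to the topos $\ClkTopos$, but because the clock context is empty the semantic judgments degenerate to statements about global elements of $\ITm{0}$, which makes the externalization of the required existential straightforward.

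First, I would unfold $\SemJdg{\FHypEqMem{\cdot}{\cdot}{M}{M}{\TyBool}}$: since the clock context is empty, the quantification over $\varrho$ is vacuous, and the assumed context validity and type functionality are trivial. What remains is $\WithTS{\TSOmega}{\SimMem{\Sem{M}}{\Sem{M}}{\AT{\ITyBool}}}$. Because $\TSOmega$ is a type system (Theorem~\ref{thm:tauomega-type-system}), it is extensional, so there is a \emph{unique} relation $\mathcal{B}$ assigned to $\AT{\ITyBool}$, and by inspection of the defining clause $\CloBool{\cdot}$ in $\TSFun{\cdot}$ we have $\mathcal{B}\equiv\ValClo{\{(\AT{\ITt},\AT{\ITt}),(\AT{\IFf},\AT{\IFf})\}}$.

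Membership of $(\Sem{M},\Sem{M})$ in $\ValClo{\{(\AT{\ITt},\AT{\ITt}),(\AT{\IFf},\AT{\IFf})\}}$ unfolds to the existence of a value $\AT{M_v}$ with $\StepStar{\Sem{M}}{M_v}$ and $(M_v,M_v)\in\{(\AT{\ITt},\AT{\ITt}),(\AT{\IFf},\AT{\IFf})\}$. By the determinacy of the transition system, the value $M_v$ is a global element of $\IVal$, and the disjunction at the level of global elements of $\ITm{0}$ can be decided externally because $\ITt$ and $\IFf$ are distinct constructors of the inductive family $\ITm{-}$. I would then define $b\in\mathbf{2}$ to be $\mathbf{2}_0$ in the first case and $\mathbf{2}_1$ in the second, so that $\Sem{\lfloor b\rfloor_{\mathbf{2}}}$ is literally $M_v$ as a global element of $\ITm{0}$.

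It remains to verify $\SemJdg{\FOpenConv{\cdot}{\cdot}{M}{\lfloor b\rfloor_{\mathbf{2}}}}$, which with empty clock and variable contexts simplifies to $\ClosedSq{\Sem{M}}{\Sem{\lfloor b\rfloor_{\mathbf{2}}}}$. Both sides step to the same value $M_v$: the left by construction, and the right because $\Sem{\lfloor b\rfloor_{\mathbf{2}}}$ is already a value equal to $M_v$. Each of the two approximation directions then reduces to determinacy of $\HIBox{\Step{-}{-}}$ together with reflexivity of $\HIBox{\StepStar{-}{-}}$. The only step requiring genuine care is the externalization of the existential witnessing the value, which is what licenses the case split defining $b$; everything else is by unfolding and inversion on the closure operator, so I expect the main subtlety to lie in being precise about the interaction between the internal language of $\ClkTopos$ and the external statement of the theorem.
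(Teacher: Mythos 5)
Your proposal is correct and follows essentially the approach of the formalized proof: discharge the (trivial) presuppositions, use extensionality of $\TSOmega$ to pin down the relation assigned to the elaboration of $\CT{\TyBool}$ as the evaluation closure of $\Braces*{\Parens{\AT{\ITt},\AT{\ITt}},\Parens{\AT{\IFf},\AT{\IFf}}}$, extract the value, case split, and check the Kleene equivalence by determinacy. The one imprecise point is your justification for the case split: the theorem is an \emph{internal} statement of $\ClkTopos$ (the existential over $\mathbf{2}$ lives in the topos logic), so no external decision is needed there --- internal disjunction elimination on the membership in the two-element relation suffices. Conversely, if you do want the external corollary (a metatheoretic function reading back the boolean), ``$\AT{\ITt}$ and $\AT{\IFf}$ are distinct constructors'' is not by itself the right licence, since Kripke--Joyal semantics of a $\forall\exists$ statement does not automatically yield an external choice of witness; the paper obtains it from the \emph{uniqueness} of $b$ (via determinacy of evaluation) together with the axiom of unique choice, which is the subtlety you correctly sensed but left unresolved.
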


\begin{corollary}
  The type theory \ClockCTT{} is consistent in the sense that there is
  no inhabitant of $\CT{\TyVoid}$.
\end{corollary}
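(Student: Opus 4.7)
The plan is to derive an immediate contradiction from the assumption by pinpointing the PER assigned to $\AT{\ITyVoid}$ in the maximal type system $\TSOmega$. Suppose for contradiction that $\CT{M}$ is a closed term with $\SemJdg{\FHypEqMem{\cdot}{\cdot}{M}{M}{\TyVoid}}$. First I would unfold this semantic judgment: since both the clock context and the variable context are empty, the quantifier over the clock environment $\varrho$ is trivial, the context validity and type functionality presuppositions are discharged for free (the empty context is valid, and $\AT{\ITyVoid}$ is related to itself in every reasonable type system), and what remains is $\WithTS{\TSOmega}{\SimMem{\Sem{M}}{\Sem{M}}{\AT{\ITyVoid}}}$. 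By definition of $\SimMem{-}{-}{-}$, this produces some $\Of{\mathcal{A}}{\Rel{\ITm{0}}}$ with $\Member{(\AT{\ITyVoid},\mathcal{A})}{\TSOmega}$ and $\Member{(\Sem{M},\Sem{M})}{\mathcal{A}}$.

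Next I would pin $\mathcal{A}$ down to be the empty relation. The clause $\CloVoid{\tau}$ in the monotone operator $\TSFun{\sigma}$ contributes the pair $(\AT{\ITyVoid},\varnothing)$ to every closure, and because $\AT{\ITyVoid}$ is a value the application of $\ValClo{-}$ preserves this pair. Hence already at the bottom of the tower one has $\Member{(\AT{\ITyVoid},\varnothing)}{\TSClose{\Spine{0}}} = \TS_{0} \subseteq \TSOmega$. Invoking Theorem~\ref{thm:tauomega-type-system}, $\TSOmega$ is a genuine type system and therefore extensional, i.e.\ the map $\AT{A}\mapsto\mathcal{A}$ is single-valued on the domain of $\TSOmega$. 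This forces $\mathcal{A} = \varnothing$, and then $\Member{(\Sem{M},\Sem{M})}{\varnothing}$ is absurd. The consequence that $\CT{\TyVoid}$ has no inhabitants follows directly.

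The only real obstacle is the extensionality step: one must verify that no other clause of $\TSFun{\sigma}$ can place a second pair $(\AT{\ITyVoid},\mathcal{A}')$ with $\mathcal{A}' \neq \varnothing$ into some $\TS_i$, and that the join $\TSOmega = \bigvee_i \TS_i$ inherits extensionality (via Lemma~\ref{thm:tower-monotonicity} together with the fact that each $\TS_i$ is extensional). Both observations are immediate from the syntactic disjointness of the head-symbols recognized by the different $\mathfrak{Conn}$-clauses, combined with the determinacy of evaluation used inside $\ValClo{-}$. I do not expect to need the canonicity theorem itself for this corollary; the argument runs entirely through the construction of $\TSOmega$.
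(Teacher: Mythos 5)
Your argument is correct. The paper itself offers no proof of this corollary---it is stated immediately after the canonicity theorem, implicitly as a consequence of the same semantic development---and your direct route is the natural one: unfold the judgment at the empty contexts, observe that $\CloVoid{\tau}$ (together with the outer $\ValClo{-}$, which preserves the pair since $\AT{\ITyVoid}$ is a value) places $\Parens*{\AT{\ITyVoid},\varnothing}$ into $\TS_0\subseteq\TSOmega$, and conclude by extensionality of $\TSOmega$ that the witnessing relation must be $\varnothing$. Your explicit remark that canonicity is not needed is apt; indeed the formal term language has no eliminator for $\CT{\TyVoid}$, so there is no obvious way to reduce consistency to canonicity at $\CT{\TyBool}$, and the emptiness-of-the-PER argument is the one that actually works. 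Two small points: your closing paragraph re-litigates the extensionality of $\TSOmega$, but this is exactly the content of Theorem~\ref{thm:tauomega-type-system}, which you already invoke, so nothing further needs checking there; and when you say the presuppositions are ``discharged for free,'' it is worth being explicit that they occur as antecedents of the interpreted judgment, so you must \emph{establish} them (validity of the empty context is trivial, and $\SimType{\ITyVoid}{\ITyVoid}$ holds by the same $\CloVoid{}$ clause) in order to extract the membership conclusion---which is precisely what you do.
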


\ifreport
  Theorem~\ref{thm:canonicity} is not immediately as strong as one would hope,
  but it implies a strong external result.  Unfolding the $\forall\exists$
  statement of Theorem~\ref{thm:canonicity}, it is easy to see that at each
  individual world there \emph{externally} exists a real boolean which has the
  desired property. To see that there is constructively a way to choose such a
  boolean externally (which is not automatically implied by the Kripke-Joyal
  semantics of $\forall\exists$ statements), it suffices to make the following
  observations.

  In what follows, we will write $\FormalTerm$ for the object of formal terms in
  $\ClkTopos$.

  \begin{enumerate}

    \item Writing $\SemSquares{\TyBool}$ for the subobject
      $\SetCompr{\Of{M}{\FormalTerm}}{\SemJdg{\FHypEqMem{\cdot}{\cdot}{M}{M}{\TyBool}}}$,
      Theorem~\ref{thm:canonicity} states the following:
      \[
        \ClkTopos\models
        \forall\Member{M}{\SemSquares{\TyBool}}.\
        \exists\Of{b}{\mathbf{2}}.\
        \SemJdg{\FOpenConv{\cdot}{\cdot}{M}{\lfloor b\rfloor_{\mathbf{2}}}}
      \]

    \item Observe that internally, the boolean $b$ is uniquely determined. This
      follows from the fact that $\lfloor b\rfloor_{\mathbf{2}}$ is a value, and
      from the determinacy of the evaluation relation.

    \item Therefore, we can strengthen the above to the following:
      \[
        \ClkTopos\models
        \forall\Member{M}{\SemSquares{\TyBool}}.\
        \exists!\Of{b}{\mathbf{2}}.\
        \SemJdg{\FOpenConv{\cdot}{\cdot}{M}{\lfloor b\rfloor_{\mathbf{2}}}}
      \]

    \item By the axiom of unique choice (which holds in every topos), the above
      is equivalent to the following:
      \[
        \ClkTopos\models
        \exists\Of{F}{{\mathbf{2}}^{\SemSquares{\TyBool}}}.\
        \forall\Member{M}{\SemSquares{\TyBool}}.\
        \SemJdg{\FOpenConv{\cdot}{\cdot}{M}{\lfloor F(M)\rfloor_{\mathbf{2}}}}
      \]

    \item Unfolding this existential in the Kripke-Joyal semantics, choosing any
      world $\mathbf{U}$, we can exhibit externally a section of the presheaf exponential
      ${\mathbf{2}}^{\SemSquares{\TyBool}}(\mathbf{U})$. Examining the construction of the
      presheaf exponential, this gives us a metatheoretic function to read back,
      from any definable formal term $\CT{M}$ which satisfies the typing judgment,
      the exact metatheoretic boolean it evaluates to.
  \end{enumerate}

  This can be thought of as an \emph{admissible statement} about the topos logic:
  from a formal term $\CT{M}$ and a proof that it is an element of type
  $\CT{\TyBool}$, we can extract an external boolean which has the desired
  property.
\fi

\subsection{Validated rules}

\ifreport%
We have validated the following rules for \ClockCTT{} in our Coq
formalization.
\begin{mathparpagebreakable}
  \inferrule[\CoqRef{Conversion.symm}]{
    \FOpenConv{\Delta}{\Psi}{M_0}{M_1}
  }{
    \FOpenConv{\Delta}{\Psi}{M_1}{M_0}
  }
  \and
  \inferrule[\CoqRef{Conversion.Trans}]{
    \FOpenConv{\Delta}{\Psi}{M_0}{M_1}
    \\
    \FOpenConv{\Delta}{\Psi}{M_1}{M_2}
  }{
    \FOpenConv{\Delta}{\Psi}{M_0}{M_2}
  }
  \and
  \inferrule[\CoqRef{General.weakening}]{
    \FHypEqMem{\Delta}{\Gamma}{M_0}{M_1}{A}
  }{
    \FHypEqMem{\Delta}{\Gamma,x:B}{M_0}{M_1}{A}
  }
  \and
  \inferrule[\CoqRef{General.hypothesis}]{
  }{
    \FHypIsMem{\Delta}{\Gamma,x:\alpha}{x}{\alpha}
  }
  \and
  \inferrule[\CoqRef{General.conv\_mem}]{
    \FHypEqMem{\Delta}{\Gamma}{M_{01}}{M_1}{\alpha}
    \\
    \IsEq{\pi(\CT{\Gamma})}{\CT{\Psi}}
    \\
    \FOpenConv{\Delta}{\Psi}{M_{00}}{M_{01}}
  }{
    \FHypEqMem{\Delta}{\Gamma}{M_{00}}{M_1}{\alpha}
  }
  \and
  \inferrule[\CoqRef{General.conv\_ty}]{
    \FHypEqMem{\Delta}{\Gamma}{M_0}{M_1}{A_1}
    \\
    \IsEq{\pi(\CT{\Gamma})}{\CT{\Psi}}
    \\
    \FOpenConv{\Delta}{\Psi}{A_0}{A_1}
  }{
    \FHypEqMem{\Delta}{\Gamma}{M_0}{M_1}{A_0}
  }
  \and
  \inferrule[\CoqRef{General.eq\_symm}]{
    \FHypEqMem{\Delta}{\Gamma}{M_0}{M_1}{A}
  }{
    \FHypEqMem{\Delta}{\Gamma}{M_1}{M_0}{A}
  }
  \and
  \inferrule[\CoqRef{General.eq\_trans}]{
    \FHypEqMem{\Delta}{\Gamma}{M_1}{M_2}{A}
    \\
    \FHypEqMem{\Delta}{\Gamma}{M_0}{M_1}{A}
  }{
    \FHypEqMem{\Delta}{\Gamma}{M_0}{M_2}{A}
  }
  \and
  \inferrule[\CoqRef{General.replace\_ty}]{
    \FHypEqMem{\Delta}{\Gamma}{A_0}{A_1}{\TyUniv{i}}
    \\
    \FHypEqMem{\Delta}{\Gamma}{M_0}{M_1}{A_0}
  }{
    \FHypEqMem{\Delta}{\Gamma}{M_0}{M_1}{A_1}
  }
  \and
  \inferrule[\CoqRef{General.univ\_formation}]{
    \Parens*{i<j}
  }{
    \FHypIsMem{\Delta}{\Gamma}{\TyUniv{i}}{\TyUniv{j}}
  }
  \and
  \inferrule[\CoqRef{Unit.ax\_equality}]{
  }{
    \FHypIsMem{\Delta}{\Gamma}{\bigstar}{\TyUnit}
  }
  \and
  \inferrule[\CoqRef{Bool.univ\_eq}]{
  }{
    \FHypIsMem{\Delta}{\Gamma}{\TyBool}{\TyUniv{i}}
  }
  \and
  \inferrule[\CoqRef{Bool.tt\_equality}]{
  }{
    \FHypIsMem{\Delta}{\Gamma}{\Tt}{\TyBool}
  }
  \and
  \inferrule[\CoqRef{Bool.ff\_equality}]{
  }{
    \FHypIsMem{\Delta}{\Gamma}{\Ff}{\TyBool}
  }
  \and
  \inferrule[\CoqRef{Prod.univ\_eq}]{
    \FHypEqMem{\Delta}{\Gamma}{A_0}{A_1}{\TyUniv{i}}
    \\
    \FHypEqMem{\Delta}{\Gamma,x:A_0}{B_0}{B_1}{\TyUniv{i}}
  }{
    \FHypEqMem{\Delta}{\Gamma}{
      \DSum{x}{A_0}{B_0}
    }{
      \DSum{x}{A_1}{B_1}
    }{
      \TyUniv{i}
    }
  }
  \and
  \inferrule[\CoqRef{Prod.intro}]{
    \FHypIsMem{\Delta}{\Gamma}{A}{\TyUniv{i}}
    \\
    \FHypIsMem{\Delta}{\Gamma,x:A}{B}{\TyUniv{i}}
    \\\\
    \FHypEqMem{\Delta}{\Gamma}{M_{00}}{M_{10}}{A}
    \\
    \FHypEqMem{\Delta}{\Gamma}{M_{01}}{M_{11}}{[M_{00}/x]B}
  }{
    \FHypEqMem{\Delta}{\Gamma}{
      \Pair{M_{00}}{M_{01}}
    }{
      \Pair{M_{10}}{M_{11}}
    }{
      \DSum{x}{A}{B}
    }
  }
  \and
  \inferrule[\CoqRef{Arr.univ\_eq}]{
    \FHypEqMem{\Delta}{\Gamma}{A_0}{A_1}{\TyUniv{i}}
    \\
    \FHypEqMem{\Delta}{\Gamma,x:A_0}{B_0}{B_1}{\TyUniv{i}}
  }{
    \FHypEqMem{\Delta}{\Gamma}{\DProd{x}{A_0}{B_0}}{\DProd{x}{A_1}{B_1}}{\TyUniv{i}}
  }
  \and
  \inferrule[\CoqRef{Arr.intro}]{
    \FHypIsMem{\Delta}{\Gamma}{A}{\TyUniv{i}}
    \\
    \FHypIsMem{\Delta}{\Gamma,x:A}{B}{\TyUniv{i}}
    \\\\
    \FHypEqMem{\Delta}{\Gamma,x:A}{M_0}{M_1}{B}
  }{
    \FHypEqMem{\Delta}{\Gamma}{\Lam{x}{M_0}}{\Lam{x}{M_1}}{\DProd{x}{A}{B}}
  }
  \and
  \inferrule[\CoqRef{Arr.elim}]{
    \FHypIsMem{\Delta}{\Gamma}{A}{\TyUniv{i}}
    \\
    \FHypIsMem{\Delta}{\Gamma,x:A}{B}{\TyUniv{i}}
    \\\\
    \FHypEqMem{\Delta}{\Gamma}{M_0}{M_1}{\DProd{x}{A}{B}}
    \\
    \FHypEqMem{\Delta}{\Gamma}{N_0}{N_1}{A}
  }{
    \FHypEqMem{\Delta}{\Gamma}{M_0(N_0)}{M_1(N_1)}{[N_0/x]B}
  }
  \and
  \inferrule[\CoqRef{KArr.univ\_eq}]{
    \FHypEqMem{\Delta,k}{\Gamma}{A_0}{A_1}{\TyUniv{i}}
  }{
    \FHypEqMem{\Delta}{\Gamma}{\ClkProd{k}{A_0}}{\ClkProd{k}{A_1}}{\TyUniv{i}}
  }
  \and
  \inferrule[\CoqRef{KArr.intro}]{
    \FHypEqMem{\Delta,k}{\Gamma}{A}{A}{\TyUniv{i}}
    \\
    \FHypEqMem{\Delta,k}{\Gamma}{M_0}{M_1}{A}
  }{
    \FHypEqMem{\Delta}{\Gamma}{\Lam{k}{M_0}}{\Lam{k}{M_1}}{\ClkProd{k}{A}}
  }
  \and
  \inferrule[\CoqRef{KArr.elim}]{
    \FHypEqMem{\Delta,k',k}{\Gamma}{A}{A}{\TyUniv{i}}
    \\
    \FHypEqMem{\Delta,k'}{\Gamma}{M_0}{M_1}{\ClkProd{k}{A}}
  }{
    \FHypEqMem{\Delta,k'}{\Gamma}{M_0(k')}{M_1(k')}{[k'/k]A}
  }
  \and
  \inferrule[\CoqRef{Isect.univ\_eq}]{
    \FHypEqMem{\Delta,k}{\Gamma}{A_0}{A_1}{\TyUniv{i}}
  }{
    \FHypEqMem{\Delta}{\Gamma}{\ClkIsect{k}{A_0}}{\ClkIsect{k}{A_1}}{\TyUniv{i}}
  }
  \and
  \inferrule[\CoqRef{Isect.intro}]{
    \FHypEqMem{\Delta,k}{\Gamma}{M_0}{M_1}{A}
    \\
    \FHypIsMem{\Delta,k}{\Gamma}{A}{\TyUniv{i}}
  }{
    \FHypEqMem{\Delta}{\Gamma}{M_0}{M_1}{\ClkIsect{k}{A}}
  }
  \and
  \inferrule[\CoqRef{Isect.irrelevance}]{
    \FHypIsMem{\Delta}{\Gamma}{A}{\TyUniv{i}}
    \\
    \Parens*{\CT{k}\notin\CT{\Delta}}
  }{
    \FHypEqMem{\Delta}{\Gamma}{A}{\ClkIsect{k}{A}}{\TyUniv{i}}
  }
  \and
  \inferrule[\CoqRef{Isect.preserves\_sigma}]{
    \FHypEqMem{\Delta,k}{\Gamma}{A_0}{A_1}{\TyUniv{i}}
    \\
    \FHypEqMem{\Delta,k}{\Gamma}{B_0}{B_1}{\TyUniv{i}}
  }{
    \FHypEqMem{\Delta}{\Gamma}{
      \ClkIsect{k}{\Parens*{\DSum{x}{A_0}{B_0}}}
    }{
      \DSum{x}{\ClkIsect{k}{A_0}}{\ClkIsect{k}{B_0}}
    }{
      \TyUniv{i}
    }
  }
  \and
  \inferrule[\CoqRef{Later.univ\_eq}]{
    \FHypEqMem{\Delta,k}{\Gamma}{A_0}{A_1}{\CttLater{k}{\TyUniv{i}}}
  }{
    \FHypEqMem{\Delta,k}{\Gamma}{\CttLater{k}{A_0}}{\CttLater{k}{A_1}}{\TyUniv{i}}
  }
  \and
  \inferrule[\CoqRef{Later.intro}]{
    \FHypEqMem{\Delta,k}{\Gamma}{M_0}{M_1}{A}
    \\
    \FHypIsMem{\Delta,k}{\Gamma}{A}{\TyUniv{i}}
  }{
    \FHypEqMem{\Delta,k}{\Gamma}{M_0}{M_1}{\CttLater{k}{A}}
  }
  \and
  \inferrule[\CoqRef{Later.force}]{
    \FHypEqMem{\Delta}{\Gamma}{\ClkIsect{k}{A_0}}{\ClkIsect{k}{A_1}}{\TyUniv{i}}
  }{
    \FHypEqMem{\Delta}{\Gamma}{\ClkIsect{k}{\CttLater{k}{A_0}}}{\ClkIsect{k}{A_1}}{\TyUniv{i}}
  }
  \and
  \inferrule[\CoqRef{Later.preserves\_pi}]{
    \FHypEqMem{\Delta}{\Gamma}{A_0}{A_1}{\TyUniv{i}}
    \\
    \FHypEqMem{\Delta}{\Gamma,x:A}{B_0}{B_1}{\CttLater{k}{\TyUniv{i}}}
  }{
    \FHypEqMem{\Delta}{\Gamma}{\CttLater{\kappa}{\Parens*{\DProd{x}{A_0}{B_0}}}}{\DProd{x}{\CttLater{k}{A_1}}{\CttLater{k}{B_1}}}{\TyUniv{i}}
  }
  \and
  \inferrule[\CoqRef{Later.preserves\_sigma}]{
    \FHypEqMem{\Delta}{\Gamma}{A_0}{A_1}{\TyUniv{i}}
    \\
    \FHypEqMem{\Delta}{\Gamma,x:A}{B_0}{B_1}{\CttLater{k}{\TyUniv{i}}}
  }{
    \FHypEqMem{\Delta}{\Gamma}{\CttLater{\kappa}{\Parens*{\DSum{x}{A_0}{B_0}}}}{\DSum{x}{\CttLater{k}{A_1}}{\CttLater{k}{B_1}}}{\TyUniv{i}}
  }
  \and
  \inferrule[\CoqRef{Later.induction}]{
    \FHypEqMem{\Delta,k}{\Gamma,x:\CttLater{k}{A}}{M_0}{M_1}{A}
  }{
    \FHypEqMem{\Delta,k}{\Gamma}{\Fix{x}{M_0}}{\Fix{x}{M_1}}{A}
  }
\end{mathparpagebreakable}

\else%
In Figure~\ref{fig:validated-rules}, we present a small selection of
the rules which we have validated in our Coq formalization of
\ClockCTT{}; we omit most of the standard rules of ordinary
extensional type theory, which are also valid in our semantics.
\begin{figure*}
  \input{figures/validated-rules-short}
  \caption{A selection of the rules which we have proved correct for our
    type theory.}\label{fig:validated-rules}
\end{figure*}
\fi%

\subsection{Examples: revisiting streams}\label{sec:revisiting-streams}

Using these rules, we can derive some typing lemmas for guarded
streams and coinductive sequences of bits.
\begin{align*}
  \ADefine{
    \CT{\Stream}
  }{
    \CT{
      \Lam{k}{
        \Fix{A}{
            \TyBool
          \times
          \CttLater{k}{A}
        }
      }
    }
  }
  \\
  \ADefine{
    \CT{\Sequence}
  }{
    \CT{
      \ClkIsect{k}{
        \Stream\ k
      }
    }
  }
  \\
  \ADefine{
    \CT{\mathtt{ones}}
  }{
    \CT{
      \Fix{x}{
        \Pair{\Tt}{x}
      }
    }
  }
\end{align*}

\begin{mathparpagebreakable}
  \inferrule[\CoqRef{Examples.BitStream\_wf}]{}{
    \FHypIsMem{\Delta}{\Gamma}{\Stream}{\ClkProd{k}{\TyUniv{i}}}
  }
  \and
  \inferrule[\CoqRef{Examples.BitSeq\_wf}]{}{
    \FHypIsMem{\Delta}{\Gamma}{\Sequence}{\TyUniv{i}}
  }
  \and
  \inferrule[\CoqRef{Examples.BitStream\_unfold}]{}{
    \FHypEqMem{\Delta,k}{\Gamma}{\Stream\ k}{\TyBool\times\CttLater{k}{\Stream\ k}}{\TyUniv{i}}
  }
  \and
  \inferrule[\CoqRef{Examples.BitSeq\_unfold}]{}{
    \FHypEqMem{\Delta}{\Gamma}{\Sequence}{\TyBool\times\Sequence}{\TyUniv{i}}
  }
  \and
  \inferrule[\CoqRef{Examples.Ones\_wf\_guarded}]{}{
    \FHypIsMem{\Delta,k}{\Gamma}{\mathtt{ones}}{\Stream\ k}
  }
  \and
  \inferrule[\CoqRef{Examples.Ones\_wf\_infinite}]{}{
    \FHypIsMem{\Delta}{\Gamma}{\mathtt{ones}}{\Sequence}
  }
\end{mathparpagebreakable}


\section{Survey of Related Work}

\subsection{Guarded Dependent Type Theory}

The standard model of guarded recursion without clocks is the
\emph{topos of trees} ${\Psh{\omega}}$, the presheaves on the poset of
natural numbers regarded as a
category~\citep{birkedal-mogelberg-schwinghammer-stovring:2011}. This
topos can be regarded as a denotational model for a variant of
Martin-L\"of's extensional type theory equipped with the
$\ObjLater{}{}$ modality. By indexing this topos over a category of
clock contexts ${\Delta}$, it is possible to develop a model of
extensional type theory with clock quantification called
\GDTT~\citep{bizjak-grathwohl-grathwohl-clouston-mogelberg-birkedal:2016,bizjak-mogelberg:2015}. In
order to justify a crucial \emph{clock irrelevance} principle, it is
necessary to index universes in clock contexts, i.e.\
$\mathcal{U}_\Delta$.

In the dependent setting, some difficulties arise when devising a
\emph{syntax} for the semantic type theory of this indexed
category. In order to make sense of the ``delayed application''
operator $\circledast$ in the context of dependent function types, it
was necessary to introduce a notion of \emph{delayed substitution}
$\IsEq{\xi}{[\overrightarrow{x\leftarrow e}]}$ which pervades the term
language, introducing term formers like ${\triangleright^k\xi.A}$
and ${\mathsf{next}^k\xi.e}$. On the bright side, delayed
application can be defined in terms of delayed substitution.

However, the equational theory for delayed substitutions is fairly
sophisticated, and an operational (computational) interpretation of
\GDTT{} has not yet been proposed at the time this article was
written; as such, a canonicity theorem for this system is still
forthcoming.

\subsection{Orthogonality and clock irrelevance}

In a more recent development~\citep{bizjak-mogelberg:2017}, a
denotational model of \GDTT{} has been developed that differs from
that of \citet{bizjak-mogelberg:2015} in a few crucial ways.

\paragraph{Unified base category}

The fibered topos presentation of the \citet{bizjak-mogelberg:2015}
work has been replaced with a presheaf topos over a single unified
base category, discovered independently from the unified base category
which we introduce in Section~\ref{sec:semantic-universe}. Taking
presheaves over this unified base category simplifies the model
significantly, and also makes available the standard solution to the
substitution coherence problem for (denotational) presheaf models of
dependent type theory.\footnote{This is to use an alternative
  construction of the slice categories $\Slice{\Psh{\mathbb{C}}}{X}$,
  as the presheaves on the total category of $X$.}

The proposed base category of \citet{bizjak-mogelberg:2017} differs
from ours mainly in that they allow empty worlds, whereas we restrict
our base category to those worlds which contain at least a single
clock.

\paragraph{Orthogonality}

\citeauthor{bizjak-mogelberg:2017} define a presheaf of clocks
$\mathcal{C}$ which is the same as our object of clocks $\ClkObj$
which we introduce in Section~\ref{sec:semantic-universe}; then, the
clock quantifier is represented in the internal language of their
presheaf topos as a dependent product over $\mathcal{C}$, i.e.\
$\prod_{x:\mathcal{C}}A(x)$.

Defined in this way, the clock quantifier cannot be a priori
parametric with respect to clocks / time objects; therefore, in order
to validate the clock irrelevance axiom, the authors have identified
an orthogonality condition on objects, which in essence closes
the internal language of the presheaf topos under just those types
which are compatible with the irrelevance principle for the clock
quantifier.

Unfortunately, the subtopos of time-orthogonal objects does not
contain the standard Hofmann-Streicher universes, because universes
necessarily classify types that depend on clocks in an essential
way. In order to resolve this problem, the standard presheaf-theoretic
universe $\mathcal{U}$ is replaced with a family of universes
$\mathcal{U}_\Delta$ for each clock context $\Delta$; each universe
$\mathcal{U}_\Delta$ classifies the types which may depend only on the
clocks in $\Delta$.

\paragraph{Discussion}

Temporarily abstracting away from the differences between a
denotational account of \GDTT{} and our operational account of type
theory, we can briefly summarize the difference between our approaches
to clock quantification and irrelevance.

The approach of \citet{bizjak-mogelberg:2017} is in essence to define clock
quantification as a dependent (cartesian) product, and then restrict the
available semantic constructions to precisely those which treat clocks
parametrically; then, within this subcategory, the clock quantifier can itself
be regarded as a parametric quantifier (because all counterexamples have been
muted).

Our approach is instead to define clock quantifiers which \emph{intrinsically}
behave in the desired way, rather than starting with only a proof-relevant
quantifier and ruling out observations of its non-parametric character using a
global orthogonality condition. To that end, we have defined two separate clock
quantifiers which decompose the two disjoint uses of $\forall\kappa$
from \GDTT{}:

\begin{enumerate}

  \item A parametric quantifier $\CT{\ClkIsect{k}{A}}$ for expressing that a
    program exhibits a behavior relative to all clocks simultaneously.
    Semantically, this is an intersection, though we expect that a more refined
    perspective will arise as we explore other kinds of model where the
    intersection may not be available.

  \item A non-parametric quantifier $\CT{\ClkProd{k}{A}}$ for internalizing a
    family of objects which varies in a clock; semantically this is the
    cartesian product of a clock-indexed family of types (i.e.\ the right
    adjoint to weakening). \emph{A priori} there is no need for this quantifier
    to behave parametrically, as this is neither demanded nor desired when
    forming families of objects.

\end{enumerate}

In this way, we have managed to avoid imposing any global orthogonality
condition on the objects of our semantic model, leading to a smoother treatment
of universes that avoids indexing in clock contexts.

\subsection{Guarded Cubical Type Theory}

One way to achieve a decidable typing judgment for \GDTT{} is to adopt
an intensional equality, and replace various \emph{judgmental}
principles with propositional axioms (such as the unfolding rule for
$\mathtt{fix}$, as well as several other principles having to do with
identity types which are validated in extensional \GDTT{}). However,
such axioms are disruptive to the computational character of type
theory.

A more refined and well-behaved version of this idea can be found in
Guarded Cubical Type Theory (\GCTT{})
by~\citet{birkedal-bizjak-clouston-grathwohl-spitters-vezzosi:2016},
where $\mathtt{fix}$ is actually exhibited as a
higher-dimensional term, a \emph{line} or \emph{path} between a formal
fixed point and its one-step unfolding.

\GCTT{} currently supports only a single clock, but it is plausible
that it could be extended in the same way as \GDTT{} extends the
internal type theory of the topos of trees.
Although \GCTT{} does not at the time of writing have a decidable typing
result, nor a strong normalization theorem, we are confident that
these can be achieved in the future in light of the intensional
judgmental equality and the restricted unfoldings of fixed points.

\subsection{Clocked Type Theory}
Recently, an alternative to \GDTT{} called \emph{Clocked Type Theory}
(\CloTT) has been proposed, which enjoys a computational
interpretation with a canonicity
result~\citep{bahr-grathwohl-mogelberg:2017}; it is plausible that
Clocked Type Theory shall have a decidable typing relation.
Notably, Clocked Type Theory does not validate any clock irrelevance
rule; the authors propose to address this in a \emph{cubical} version
of \CloTT{} by adding a special path axiom which realizes this
principle, by analogy with the technique used in \GCTT{} to account
for restricted unfoldings of fixed points. In the presence of this
axiom, canonicity for \CloTT{} can still be made to hold in the
context which contains only a single clock.

\paragraph{Discussion}

Clocked Type Theory looks like a promising path toward a well-behaved intrinsic
account of guarded recursion with clocks. In the present paper, our efforts
have been focused exclusively on developing the behavioral account of guarded
type theory in the style of Martin-L\"of's meaning explanation, in which
programs can be regarded as existing separately from their types; here, general
recursive programs can be written and shown to be (causal, productive, total)
in a semantic sense, using the type theory as a program logic.

We perceive, however, that virtue lies in pursuing the intrinsic path,
especially as far as implementability are concerned. The calculus developed
in~\citet{bahr-grathwohl-mogelberg:2017} (and more recently, the ideas
contained in~\citet{clouston-mannaa-mogelberg-pitts-spitters:2018}) are likely
to provide the basis for a syntactic account of guarded recursion which is
sound for our model, but closer to implementation.

\subsection{Sized Types and size quantifiers}

Our decomposition of the quantifier $\forall\kappa$ from \GDTT{} into a
parametric part $\CT{\ClkIsect{k}{A}}$ and a non-parametric part
$\CT{\ClkProd{k}{A}}$ mirrors the state of affairs in the literature on sized
types, which is another account of type-based guarded
recursion~\citep{abel-vezzosi-winterhalter:2017}.

\section{Perspective and Future Work}\label{sec:conclusion}

We have developed and formalized a computational account of guarded
dependent type theory with clocks, enjoying several desirable
characteristics not found together in other existing models:
computational canonicity, clock irrelevance and ordinary universes.
We have made the following contributions toward a simpler, more
computational account of guarded dependent type theory:

\paragraph{Implementation, proof theory, and syntax}

We have not yet tackled the project of developing an ergonomic proof theory for
\ClockCTT{} which can be used to interact with the semantics presented here.
The natural deduction style rules which we have given here are, while
convenient for paper presentations, not what one would use in a serious
implementation. To build a proof theory for \ClockCTT{}, we must negotiate new
forms of judgment with decidable presupposition.

Therefore, while we have indeed developed a programming language for guarded
type theory with clocks that omits explicit syntax for delayed substitutions,
this should be understood in terms of the conceptual order of semantics and
proof theory which is endemic in computational type theory. In particular,
while our programming language and type theory has no need for such a
construct, in a proof language for \ClockCTT{} it would be necessary to account
for the syntactic structure of the later modality's elimination; we anticipate
that ideas from~\citet{bahr-grathwohl-mogelberg:2017}
and~\citet{clouston-mannaa-mogelberg-pitts-spitters:2018} will be highly
relevant.

\paragraph{Application to denotational semantics}
In the future, we are interested in extending our work to a
denotational account of guarded dependent type theory with clocks
which uses the ordinary non-indexed presheaf-topos-theoretic
universe. While our results have been developed in the context of
computational type theory and operational semantics, we believe that
the insight which enabled us to combine clock irrelevance with
ordinary universes is more broadly applicable.


\appendix

\section{Semantic Universe}

In this appendix, we give some further details of the semantic
universe $\ClkTopos$.

\subsection{Internal Logic and Kripke-Joyal Semantics}

Using a tool called Kripke-Joyal semantics (a topos-theoretic
generalization of Beth/Kripke-forcing) it is possible to interpret
statements in the internal language of $\ClkTopos$ into ordinary,
external mathematical language.
We will write forcing clauses $\Forces{\mathbf{U}}{\phi(\alpha)}$ meaning
that at world $\Of{\mathbf{U}}{\CLK}$, the predicate ${\phi}$ holds of the
element $\Of{\alpha}{X(\mathbf{U})}$. The forcing clauses for the
predicates of our internal logic are summarized in
Figure~\ref{fig:internal-logic-forcing-clauses}.

\begin{figure*}
  \begin{minipage}[c]{1.0\textwidth}
  \begin{gather*}
    \JdgDecl{\Forces{\mathbf{U}}{\phi(\alpha)}}{
      \IsSubobject{\phi}{X}{\ClkTopos},
      \Member{\alpha}{X(\mathbf{U})}
    }
    \\[6pt]
    \begin{aligned}
      \AIsEq{
        \Forces{\mathbf{U}}{\phi(\alpha)\lor\psi(\alpha)}
      }{
        \Forces{\mathbf{U}}{\phi(\alpha)}
        \lor
        \Forces{\mathbf{U}}{\psi(\alpha)}
      }
      \\
      \AIsEq{
        \Forces{\mathbf{U}}{\phi(\alpha)\land\psi(\alpha)}
      }{
        \Forces{\mathbf{U}}{\phi(\alpha)}\land\Forces{\mathbf{U}}{\psi(\alpha)}
      }
      \\
      \AIsEq{
        \Forces{\mathbf{U}}{\phi(\alpha)\Rightarrow\psi(\alpha)}
      }{
        \forall\Of{\rho}{\mathbf{V}\to\mathbf{U}}.\
        \Forces{\mathbf{V}}{\phi(\rho^*\alpha)}
        \Rightarrow
        \Forces{\mathbf{V}}{\psi(\rho^*\alpha)}
      }
      \\
      \AIsEq{
        \Forces{\mathbf{U}}{\forall y:Y.\ \phi(\alpha,y)}
      }{
        \forall\Of{\rho}{\mathbf{V}\to\mathbf{U}}.\
        \forall\Member{\beta}{Y(\mathbf{V})}.\
        \Forces{\mathbf{V}}{\phi(\rho^*\alpha,\beta)}
      }
      \\
      \AIsEq{
        \Forces{\mathbf{U}}{\exists y:Y.\ \phi(\alpha,y)}
      }{
        \exists\Member{\beta}{Y(\mathbf{U})}.\
        \Forces{\mathbf{U}}{\phi(\alpha,\beta)}
      }
      \\
      \AIsEq{
        \Forces{\mathbf{U}}{\Later{\kappa}{\phi(\alpha)}}
      }{
        \begin{cases}
          \top &{\normalcolor\textbf{if}\ \ \IsEq{\partial_U(\kappa)}{0}}
          \\
          \Forces{
            \mathbf{U}[\kappa\mapsto n]
          }{
            \phi({[\kappa\pluseq1]}^*\alpha)
          } &{\normalcolor\textbf{if} \ \ \IsEq{\partial_U(\kappa)}{n+1}}
        \end{cases}
      }
    \end{aligned}
  \end{gather*}
\end{minipage}

  \caption{Forcing clauses for the internal logic of $\ClkTopos$.}\label{fig:internal-logic-forcing-clauses}
\end{figure*}

It will simplify many of our proofs to formalize some proof techniques
for establishing that a formula headed by multiple universal
quantifiers is valid in $\ClkTopos$, i.e.\ true at each world.

\begin{lemma}\label{lem:kj-forall}
  To show that a formula $\forall\Of{y}{Y}.\ \phi(\alpha,y)$ is true
  for all worlds $\mathbf{U}$ and elements
  $\Member{\alpha}{X\Parens{\mathbf{U}}}$ in $\ClkTopos$, it suffices to
  establish externally the following statement:
  \[
    \forall\Of{\mathbf{U}}{\CLK}.\
    \forall\Member{\alpha}{X\Parens{\mathbf{U}}}.\
    \forall\Member{\beta}{Y\Parens{\mathbf{U}}}.\
    \Forces{\mathbf{U}}{\phi(\alpha,\beta)}
  \]
\end{lemma}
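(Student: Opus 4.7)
The plan is to unfold the Kripke-Joyal forcing clause for $\forall$ and observe that the restriction along an arbitrary morphism $\rho$ is absorbed by the hypothesis being stated uniformly over all worlds. Concretely, by the forcing clause given in Figure~\ref{fig:internal-logic-forcing-clauses}, the assertion that $\forall\Of{y}{Y}.\ \phi(\alpha,y)$ holds at every world $\mathbf{U}$ and every $\Member{\alpha}{X(\mathbf{U})}$ expands externally to
\[
  \forall\Of{\mathbf{U}}{\CLK}.\ \forall\Member{\alpha}{X(\mathbf{U})}.\ \forall\Of{\rho}{\mathbf{V}\to\mathbf{U}}.\ \forall\Member{\beta}{Y(\mathbf{V})}.\ \Forces{\mathbf{V}}{\phi(\rho^*\alpha,\beta)}.
\]

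First I would fix arbitrary $\Of{\mathbf{U}}{\CLK}$, $\Member{\alpha}{X(\mathbf{U})}$, $\Of{\rho}{\mathbf{V}\to\mathbf{U}}$, and $\Member{\beta}{Y(\mathbf{V})}$. Setting $\alpha' \triangleq \rho^*\alpha \in X(\mathbf{V})$, the goal becomes $\Forces{\mathbf{V}}{\phi(\alpha',\beta)}$. This follows immediately by instantiating the hypothesis at the world $\mathbf{V}$ with the elements $\alpha'$ and $\beta$.

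Since the argument is a direct unfolding plus one instantiation, there is no real obstacle: the lemma is just the observation that the external hypothesis is already parametric in the world, so the restriction $\rho^*$ in the forcing clause is absorbed automatically when the quantifier on the left ranges over all worlds. The minor point worth flagging is that one must silently use functoriality of $X$ to know that $\rho^*\alpha$ is a well-defined element of $X(\mathbf{V})$, but this is simply part of $X$ being a presheaf.
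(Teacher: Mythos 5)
Your proposal is correct and matches the paper's proof essentially verbatim: both unfold the Kripke--Joyal clause for the universal quantifier, fix a restriction $\rho\colon\mathbf{V}\to\mathbf{U}$ and an element of $Y(\mathbf{V})$, and discharge the goal by instantiating the uniform external hypothesis at the world $\mathbf{V}$ with the restricted element $\rho^*\alpha$. Nothing to add.
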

\begin{proof}
  Fixing a world $\mathbf{U}$ and an element
  $\Member{\alpha}{X\Parens{\mathbf{U}}}$, our original formula unfolds to
  the following in the Kripke-Joyal semantics:
  \[
    \forall\Of{\mathbf{V}}{\CLK}.\
    \forall\Of{\rho}{\mathbf{V}\to\mathbf{U}}.\
    \forall\Member{\beta}{Y\Parens{\mathbf{V}}}.\
    \Forces{\mathbf{V}}{\phi(\rho^*\alpha,\beta)}
  \]

  Fix $\Of{\mathbf{V}}{\CLK}$, $\Of{\rho}{\mathbf{V}\to\mathbf{U}}$ and
  $\Member{\beta}{Y\Parens{\mathbf{V}}}$. By instantiating our assumption
  with $\mathbf{V}$, $\rho^*\alpha$ and $\beta$, we have
  $\Forces{\mathbf{V}}{\phi(\rho^*\alpha,\beta)}$.
\end{proof}

\begin{lemma}\label{lem:kj-multi-forall}
  To show that a formula
  $\forall\overrightarrow{y_i:Y_i}.\
  \phi\Parens{\overrightarrow{y_i},\alpha}$ is true at all worlds
  $\mathbf{U}$ and elements $\Member{\alpha}{X\Parens{\mathbf{U}}}$, it suffices
  to establish the following external statement:
  \[
    \forall\Of{\mathbf{U}}{\CLK}.\
    \forall\overrightarrow{\Member{y_i}{y_i\Parens{\mathbf{U}}}}.\
    \Forces{\mathbf{U}}{\phi\Parens{\overrightarrow{y_i},\alpha}}
  \]
\end{lemma}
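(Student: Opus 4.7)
The plan is to proceed by induction on the length $n$ of the quantifier prefix $\overrightarrow{y_i : Y_i}$, using Lemma~\ref{lem:kj-forall} as the base single-variable instance to peel off the outermost quantifier at each step.

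For the base case $n = 0$, there are no quantifiers to handle, and the goal reduces to showing $\Forces{\mathbf{U}}{\phi(\alpha)}$ for every $\Of{\mathbf{U}}{\CLK}$ and $\Member{\alpha}{X(\mathbf{U})}$, which is exactly the external hypothesis. For the inductive step, the formula has the form $\forall\Of{y_1}{Y_1}.\ \psi(y_1,\alpha)$ where $\psi(y_1,\alpha) \equiv \forall\Of{y_2}{Y_2}.\ \cdots\forall\Of{y_n}{Y_n}.\ \phi(\overrightarrow{y_i},\alpha)$. First I would apply Lemma~\ref{lem:kj-forall} with parameter sheaf $X$ and quantifier sheaf $Y_1$, reducing the goal to showing externally that $\Forces{\mathbf{U}}{\psi(y_1,\alpha)}$ holds for every $\Of{\mathbf{U}}{\CLK}$, $\Member{\alpha}{X(\mathbf{U})}$, and $\Member{y_1}{Y_1(\mathbf{U})}$. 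Then I would invoke the induction hypothesis, instantiated with the enlarged parameter sheaf $X' \triangleq X \times Y_1$ and the pair $\alpha' \triangleq (\alpha, y_1) \in X'(\mathbf{U})$, to reduce that to the promised external statement with quantifier prefix $\overrightarrow{y_i : Y_i}_{i \geq 2}$. Composing these reductions recovers precisely the hypothesis of the lemma.

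There is no serious obstacle here; the result is essentially formal bookkeeping, since Lemma~\ref{lem:kj-forall} already absorbs all of the Kripke-Joyal machinery (in particular, the restriction along arbitrary $\Of{\rho}{\mathbf{V}\to\mathbf{U}}$ that appears in the forcing clause for $\forall$). The only mild subtlety worth being explicit about is that at each inductive step the ``parameter'' variable $\alpha$ of the statement must be allowed to range over an arbitrary sheaf $X$ rather than a fixed one, so that the induction hypothesis can be applied with $X \times Y_1$ in place of $X$; this is why the lemma is formulated with $\alpha$ as a bound parameter rather than a fixed element. With this generality, the induction goes through without any additional analysis of restrictions, and the proof is essentially a single application of Lemma~\ref{lem:kj-forall} iterated $n$ times.
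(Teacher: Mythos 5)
Your proof is correct, but it takes a genuinely different route from the paper's. The paper dispatches the lemma in one step: it observes that the internally quantified formula $\forall\overrightarrow{y_i:Y_i}.\ \phi\Parens{\overrightarrow{y_i},\alpha}$ is logically equivalent to a formula with a \emph{single} quantifier over the product object $\prod_i Y_i$ (recovering the components via the projections $\pi_i$), and then applies Lemma~\ref{lem:kj-forall} exactly once. Your proof instead peels the quantifiers off one at a time by induction on the length of the prefix, applying Lemma~\ref{lem:kj-forall} at each step. The step that makes your induction go through — and which you correctly flag as the only real subtlety — is that the parameter object must be enlarged from $X$ to $X\times Y_1$ so that the already-introduced variable $y_1$ can be absorbed into the parameter position of the induction hypothesis; this is the iterated, "currying" counterpart of the paper's one-shot product collapse. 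Both arguments are sound and both ultimately rest on the same fact, namely that forcing a universal over a product agrees with iterated forcing. The paper's version is shorter and avoids any induction; yours is closer to how one would mechanize the lemma (e.g.\ in Coq, where a recursion over a telescope of quantifiers is the natural formulation) and makes explicit the generalization of the statement needed for the recursive call.
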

\begin{proof}
  Observe that our original formula is logically equivalent to the
  following one with only a single quantifier:
  \[
    \textstyle%
    \forall\Of{y}{\prod_i Y_i}.\
    \phi\Parens{
      \overrightarrow{
        \pi_i(y)
      },
      \alpha
    }
  \]

  Therefore, our goal follows from Lemma~\ref{lem:kj-forall}.
\end{proof}

\begin{lemma}\label{lem:kj-forall-implies}
  To show that a formula
  $\forall\overrightarrow{y_i:Y_i}.\
  \overrightarrow{\phi_j\Parens{\overrightarrow{y_i},\alpha}}\Rightarrow\psi\Parens{\overrightarrow{y_i},\alpha}$
  is true at all worlds $\mathbf{U}$ and elements
  $\Member{\alpha}{X\Parens{\mathbf{U}}}$, it suffices to establish the
  following external statement:
  \[
    \forall\Of{\mathbf{U}}{\CLK}.\
    \forall\overrightarrow{\Member{y_i}{Y_i\Parens{\mathbf{U}}}}.\
    \overrightarrow{
      \Forces{\mathbf{U}}{\phi\Parens{\overrightarrow{y_i},\alpha}}
    }
    \Rightarrow
    \Forces{\mathbf{U}}{\psi\Parens{\overrightarrow{y_i},\alpha}}
  \]
\end{lemma}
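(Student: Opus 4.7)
The plan is to reduce this lemma to the previous two by unfolding the Kripke-Joyal clauses for the universal quantifiers and the implication, and then exploiting the fact that the external hypothesis is universally quantified over \emph{all} worlds rather than just refinements of a particular $\mathbf{U}$.

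First I would apply the observation from Lemma~\ref{lem:kj-multi-forall}: by currying, the vector of universal quantifiers $\forall\overrightarrow{y_i:Y_i}$ can be replaced by a single universal quantifier $\forall y : \prod_i Y_i$, and similarly the finite conjunction $\overrightarrow{\phi_j}$ can be replaced by a single conjoined formula $\Phi$. So without loss of generality, the formula to be validated has the shape $\forall y : Y.\ \Phi(y,\alpha) \Rightarrow \psi(y,\alpha)$. By Lemma~\ref{lem:kj-forall} it suffices to show, for each world $\mathbf{U}$, each $\alpha \in X(\mathbf{U})$, and each $\beta \in Y(\mathbf{U})$, that $\Forces{\mathbf{U}}{\Phi(\alpha,\beta)\Rightarrow\psi(\alpha,\beta)}$.

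Unfolding the forcing clause for implication, this amounts to: for every restriction $\rho:\mathbf{V}\to\mathbf{U}$, whenever $\Forces{\mathbf{V}}{\Phi(\rho^*\alpha,\rho^*\beta)}$, also $\Forces{\mathbf{V}}{\psi(\rho^*\alpha,\rho^*\beta)}$. The key move is now to instantiate the external hypothesis not at $\mathbf{U}$ but at the restricted world $\mathbf{V}$, with the restricted elements $\rho^*\alpha \in X(\mathbf{V})$ and $\rho^*\beta \in Y(\mathbf{V})$. Since the external hypothesis is stated uniformly in $\mathbf{U}$, this application is legal and directly yields the required implication at $\mathbf{V}$.

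Finally, I would recover the conjunction of the individual $\phi_j$ clauses by noting that $\Forces{\mathbf{V}}{\phi_1 \land \cdots \land \phi_k}$ is equivalent to the conjunction of $\Forces{\mathbf{V}}{\phi_j}$, which is the form demanded by the external hypothesis. No step here is genuinely difficult; the only thing to watch is the bookkeeping with the restriction maps $\rho$ and the fact that the forcing clause for $\Rightarrow$ (unlike those for $\land$, $\lor$, $\exists$) quantifies over all further worlds --- which is exactly what makes the stability under restriction in the external statement necessary and sufficient. Hence the mild obstacle, if any, is simply recognizing that the uniformity of the external hypothesis in the world variable $\mathbf{U}$ absorbs the extra restriction quantifier introduced by the implication.
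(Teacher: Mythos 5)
Your proof is correct, and it reaches the conclusion by a mildly different route than the paper. The paper dispatches the implication in one line by observing that $\phi\Rightarrow\psi$ is internally equivalent to a universal quantification over a subobject comprehension, $\forall\Of{x}{\SetCompr{x:\mathbf{1}}{\phi}}.\ \psi$, so that the whole formula becomes a string of universal quantifiers and Lemma~\ref{lem:kj-forall} applies verbatim. You instead unfold the forcing clause for $\Rightarrow$ explicitly and re-run the core move of Lemma~\ref{lem:kj-forall} by hand: the clause introduces a fresh restriction $\rho:\mathbf{V}\to\mathbf{U}$, and because the external hypothesis is quantified over \emph{all} worlds, you may instantiate it at $\mathbf{V}$ with the restricted elements $\rho^*\alpha$, $\rho^*\beta$, which absorbs the extra quantifier. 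The two arguments are the same phenomenon seen from two angles; the paper's is shorter and reuses the prior lemma wholesale, while yours makes explicit \emph{why} the reduction is sound (the uniformity in the world variable is what cancels the monotonicity quantifier in the implication clause), which is arguably more illuminating. Your handling of the vector of hypotheses via conjunction and the locality of the forcing clause for $\land$ is also fine, since currying $(\phi_1\land\cdots\land\phi_k)\Rightarrow\psi$ against the nested implications is an intuitionistic equivalence. The only bookkeeping point worth flagging is that the external statement as printed leaves $\alpha$ free; your instantiation at $\rho^*\alpha$ tacitly reads it as quantified over $\Member{\alpha}{X(\mathbf{U})}$ as well, which is the intended reading.
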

\begin{proof}
  Observe that any implication $\phi\Rightarrow\psi$ in the internal
  logic can be equivalently written as a universal quantification over
  a subobject comprehension
  $\forall\Of{x}{\SetCompr{x:\mathbf{1}}{\phi}}.\ \psi$ Therefore, our
  lemma follows from Lemma~\ref{lem:kj-forall}.
\end{proof}

\subsection{Semantic Lemmas}
\begin{theorem}[Local clock]\label{thm:local-clock}
  The formula $\exists\Of{\kappa}{\ClkObj}.\ \top$ is true in the
  internal logic of $\ClkTopos$.
\end{theorem}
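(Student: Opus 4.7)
My plan is to reduce the statement directly via the Kripke-Joyal forcing clause for the existential quantifier and exhibit a concrete witness in the fibre of $\ClkObj$ at an arbitrary world. To establish $\Forces{\mathbf{U}}{\exists\Of{\kappa}{\ClkObj}.\ \top}$ for every $\Of{\mathbf{U}}{\CLK}$, it suffices by the standard forcing clause for $\exists$ to produce some $\Member{\kappa}{\ClkObj(\mathbf{U})}$; the body $\top$ imposes no further obligation.

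The next step is to compute this fibre. Writing $\mathbf{U} = (U,\partial_U)$ with $\Of{U}{\FINPlus}$, the definition $\Define{\ClkObj}{\ell^*\mathcal{N}}$ unfolds to $\ClkObj(\mathbf{U}) = \mathcal{N}(\ell(\mathbf{U})) = \mathcal{N}(U)$. Since $\mathcal{N}$ is the representable presheaf $\Yoneda(\bullet^1)$, this set is precisely the hom-set of $\FINPlus$-morphisms from $U$ to $\bullet^1$. By the construction of $\FINPlus$ at the beginning of Section~\ref{sec:semantic-universe}, every object is of the form $\bullet^n$ with $n > 0$, so there are $n$ distinct projections $\bullet^n \to \bullet^1$, any of which may be chosen as the witness $\kappa$.

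The main feature to emphasise is that there is essentially no obstacle here: the result is a direct consequence of the deliberate exclusion of empty cardinalities from $\FINPlus$. As noted in the discussion following the definition of $\ClkTopos$, this exclusion is precisely the design choice made to accommodate the clock irrelevance principle. Had we instead taken the base to be $\FIN$, the statement would fail at the empty world $\bullet^0$, so this short proof also serves as a sanity check on the choice of base category.
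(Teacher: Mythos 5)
Your proposal is correct and follows essentially the same route as the paper: unfold the Kripke-Joyal clause for $\exists$ at an arbitrary world $\mathbf{U}$, identify $\ClkObj(\mathbf{U})$ with the hom-set $U\to\bullet^1$ in $\FINPlus$, and observe it is non-empty precisely because $\FINPlus$ excludes the empty product. The extra remark about why $\FIN$ would fail at $\bullet^0$ is a nice sanity check but not a divergence in method.
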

\begin{proof}
  It suffices to validate this formula at each world $\mathbf{U}$,
  i.e.\ to establish
  $\Forces{\mathbf{U}}{\exists\Of{\kappa}{\ClkObj}.\ \top}$, which is to
  say (externally) that
  $\exists\Of{\kappa}{\ClkObj\Parens{\mathbf{U}}}.\ \top$. This reduces to
  showing that the hom set $U\to\bullet^1$ in $\FINPlus$ is non-empty,
  which is true because $\FINPlus$ is a category of \emph{non-empty} finite
  products.
\end{proof}

Note that Theorem~\ref{thm:local-clock} does \emph{not} entail the
existence of a global element of $\ClkObj$ (i.e.\ a morphism
$\mathbf{1}\to\ClkObj$). In our development, we have no need for a
global clock; we only require that a clock ``merely exists'' according
to the existential quantifier of the topos logic.

\begin{corollary}[Clock irrelevance]\label{cor:clock-irrelevance}
  The formula
  $\forall\Of{\phi}{\Omega}.\
  \IsEq{\phi}{\forall\Of{\kappa}{\ClkObj}.\ \phi}$ holds in the
  internal logic.
\end{corollary}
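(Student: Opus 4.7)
The plan is to prove the two directions of the biconditional separately, using Theorem~\ref{thm:local-clock} (the ``local clock'' principle) as the sole non-trivial ingredient. Since $\phi$ does not depend on the bound variable $\kappa$, the forward direction $\phi \Rightarrow \forall\Of{\kappa}{\ClkObj}.\ \phi$ is immediate from the usual introduction rule for $\forall$ in the internal logic: from a proof of $\phi$ in any context, we obtain a proof in the extended context where $\kappa$ is a fresh assumption, and then abstract.

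For the reverse direction $(\forall\Of{\kappa}{\ClkObj}.\ \phi) \Rightarrow \phi$, I would reason inside the internal logic: assume $\forall\Of{\kappa}{\ClkObj}.\ \phi$; by Theorem~\ref{thm:local-clock}, there merely exists some $\Of{\kappa_0}{\ClkObj}$; instantiating the universal at $\kappa_0$ yields $\phi$. The subtlety here is that $\exists$ in topos logic is only the ``mere existence'', so one should discharge it by elimination into a proposition — which is fine, because the conclusion $\phi$ is itself a proposition (indeed, we are working in $\Omega$), so the elimination is well-typed and parametric in the chosen witness.

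If a more mechanical verification is desired, one can descend to the Kripke–Joyal forcing semantics using Lemma~\ref{lem:kj-forall-implies}: it suffices to check, for each world $\mathbf{U}$ and each $\Member{\phi}{\Omega(\mathbf{U})}$, that $\Forces{\mathbf{U}}{\forall\Of{\kappa}{\ClkObj}.\ \phi}$ implies $\Forces{\mathbf{U}}{\phi}$. Unfolding the forcing clause for $\forall$, the hypothesis says that for every $\Of{\rho}{\mathbf{V}\to\mathbf{U}}$ and every $\Member{\kappa}{\ClkObj(\mathbf{V})}$ we have $\Forces{\mathbf{V}}{\rho^*\phi}$. Taking $\rho$ to be the identity on $\mathbf{U}$ and choosing any $\Member{\kappa_0}{\ClkObj(\mathbf{U})}$ — which exists by the same reasoning as in the proof of Theorem~\ref{thm:local-clock}, since $\mathbf{U}$ has at least one clock name — gives $\Forces{\mathbf{U}}{\phi}$.

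The only real obstacle is the reverse direction, and even there the heart of the matter is just that $\FINPlus$ contains no empty object, so $\ClkObj(\mathbf{U})$ is inhabited at every world; this is precisely the design decision (recorded earlier in the discussion of the base category) of ruling out empty clock contexts, and it is exactly what Theorem~\ref{thm:local-clock} internalizes. No further machinery is needed.
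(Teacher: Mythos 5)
Your proof is correct and takes essentially the same route as the paper: the forward direction is trivial, and the reverse direction eliminates the mere existential from Theorem~\ref{thm:local-clock} into the propositional goal $\phi$ and then instantiates the universal quantifier. The supplementary Kripke--Joyal unfolding you sketch is a valid (if redundant) externalization of the same argument.
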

\begin{proof}
  We will reason internally: fix $\Of{\phi}{\Omega}$. By propositional
  extensionality we need to show that
  $\phi\Rightarrow\ClkForall{\kappa}{\phi}$ and
  $\ClkForall{\kappa}{\phi}\Rightarrow\phi$. The first direction is
  trivial; for the second direction, observe that from
  Theorem~\ref{thm:local-clock}, using the elimination rule for the
  existential quantifier, we may fix a clock $\Of{\kappa_0}{\ClkObj}$;
  using this clock, by the elimination rule of the universal
  quantifier, we have our goal $\phi$.
\end{proof}

\begin{theorem}\label{thm:forcing}
  We can delete a later modality from under an appropriate
  quantification, in the sense that the following formula is true in
  the internal logic:
  \[
    \forall\Of{\phi}{\Omega^\ClkObj}.\
    \Parens*{\ClkForall{\kappa}{\Later{\kappa}{\phi(\kappa)}}}
    \Rightarrow
    \ClkForall{\kappa}{\phi(\kappa)}
  \]
\end{theorem}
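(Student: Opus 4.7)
My plan is to externalize the claim via Kripke--Joyal forcing. By Lemma~\ref{lem:kj-forall-implies}, it suffices to fix a world $\mathbf{U}$ and a predicate $\Of{\phi}{\Omega^\ClkObj(\mathbf{U})}$, assume the hypothesis $\Forces{\mathbf{U}}{\ClkForall{\kappa}{\Later{\kappa}{\phi(\kappa)}}}$, and derive the conclusion $\Forces{\mathbf{U}}{\ClkForall{\kappa}{\phi(\kappa)}}$. Unfolding this conclusion via Kripke--Joyal, I fix an arbitrary morphism $\rho\colon\mathbf{W}\to\mathbf{U}$ and a clock $\Of{\kappa}{\ClkObj(\mathbf{W})}$, reducing the task to establishing $\Forces{\mathbf{W}}{(\rho^*\phi)(\kappa)}$.

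The central trick is a calibrated ``delay and diagonalize'' construction. Writing $n := \partial_W(\kappa)$, I adjoin to $\mathbf{W}$ a fresh clock $\kappa_f$ at time $n+1$, forming a world $\mathbf{W}^+$ together with the canonical forgetful morphism $\iota\colon\mathbf{W}^+\to\mathbf{W}$ in $\CLK$. Instantiating the hypothesis at the composite $\rho\circ\iota\colon\mathbf{W}^+\to\mathbf{U}$ with the specific clock $\kappa_f\in\ClkObj(\mathbf{W}^+)$ yields $\Forces{\mathbf{W}^+}{\Later{\kappa_f}{\phi(\kappa_f)}}$; since $\partial_{W^+}(\kappa_f) = n+1 > 0$, the forcing clause for $\Later{\kappa_f}$ unfolds this to $\Forces{\mathbf{W}^+[\kappa_f\mapsto n]}{\phi(\kappa_f)}$.

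To transport this back to $\kappa$ at $\mathbf{W}$, I construct a diagonal morphism $\delta\colon\mathbf{W}\to\mathbf{W}^+[\kappa_f\mapsto n]$ whose underlying $\FINPlus$-component is the tuple of projections $(\pi_1,\dots,\pi_{|W|},\pi_j)$, where $\kappa = \pi_j$; concretely, $\delta$ duplicates the $\kappa$-slot into the fresh $(|W|+1)$-st slot. The $\CLK$-compatibility condition for $\delta$ reduces, on the fresh slot, to $\partial_W(\kappa) = n \leq n = \partial_{W^+[\kappa_f\mapsto n]}(\kappa_f)$, and is immediate on the original slots. By construction $\delta^*\kappa_f = \kappa$, and $\iota\circ\delta = \mathrm{id}_\mathbf{W}$, so the full composite $\rho\circ\iota\circ[\kappa_f\pluseq 1]\circ\delta$ collapses back to $\rho$. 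By naturality of the subobject $\phi\hookrightarrow\ClkObj$ in the presheaf topos, pulling the established $\Forces{\mathbf{W}^+[\kappa_f\mapsto n]}{\phi(\kappa_f)}$ back along $\delta$ yields $\Forces{\mathbf{W}}{(\rho^*\phi)(\kappa)}$, as required.

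The main obstacle is discovering the correct time calibration: introducing the fresh clock at time exactly $\partial_W(\kappa) + 1$ ensures that the single tick consumed by $\Later{\kappa_f}$ leaves the residual time at $\kappa_f$ precisely equal to $\partial_W(\kappa)$, which is exactly the slack the diagonal $\delta$ needs in order to satisfy the $\CLK$ morphism condition at the fresh slot. Once this alignment is found, the rest of the proof is a routine naturality argument in the ambient presheaf topos.
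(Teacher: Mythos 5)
Your proof is correct, and it establishes the theorem by the same essential mechanism as the paper—introducing a fresh clock with a time assignment shifted by one, then transporting along a diagonal that identifies the fresh clock with the original one. The difference is purely in the bookkeeping of the Kripke--Joyal unfolding. The paper invokes the \emph{alternative} forcing clause (vi$'$) for the universal quantifier (\citet[p.~305]{maclane-moerdijk:1992}), which rewrites $\Forces{\mathbf{U}}{\ClkForall{\kappa}{\psi(\kappa)}}$ as a family of assertions at the ``generic'' extensions $(U+1,[\partial_U,n])$ indexed by $n$; the diagonal transport is then absorbed into the choice of representative world and never appears explicitly. You instead apply the standard clause (vi), fixing an arbitrary $\rho\colon\mathbf{W}\to\mathbf{U}$ and $\Member{\kappa}{\ClkObj(\mathbf{W})}$, and then recover the needed instance by explicitly constructing the fresh-clock world $\mathbf{W}^+$ and the diagonal $\delta$, verifying the $\CLK$-morphism inequality $n\leq n$ by hand. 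Both routes are sound; the paper's is tighter because the generic-world clause makes the diagonal step invisible, while yours makes the geometric content of the argument explicit, which may be preferable for readers who are not fluent with the less common (vi$'$) clause.

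One small slip: you write $\iota\circ\delta=\mathrm{id}_\mathbf{W}$, but $\iota\colon\mathbf{W}^+\to\mathbf{W}$ and $\delta\colon\mathbf{W}\to\mathbf{W}^+[\kappa_f\mapsto n]$ are not directly composable since they have different (co)domains; the correct identity, which you do use in the very next clause, is $\iota\circ[\kappa_f\pluseq 1]\circ\delta=\mathrm{id}_\mathbf{W}$, valid because $[\kappa_f\pluseq 1]$ is the identity at the $\FINPlus$ level and the time inequalities all hold. Since your composite calculation proceeds correctly with the full chain $\rho\circ\iota\circ[\kappa_f\pluseq 1]\circ\delta=\rho$, this is merely a notational typo rather than a gap.
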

\begin{proof}
  We will establish this principle using the Kripke-Joyal semantics;
  using Lemma~\ref{lem:kj-forall-implies}, we fix a world $\mathbf{U}$ and
  a predicate $\Member{\phi}{{\Omega^\ClkObj}\Parens{\mathbf{U}}}$ such
  that
  $\Forces{\mathbf{U}}{\ClkForall{\kappa}{\Later{\kappa}{\phi(\kappa)}}}$,
  to show $\Forces{\mathbf{U}}{\ClkForall{\kappa}{\phi(\kappa)}}$.

  Observe that our goal is equivalent to the following external
  statement, writing $\pi_1[n],\pi_2[n]$ for the projections of
  $\mathbf{U}$ and $(1,[n])$, respectively, from the extended world
  $(U+1,[\partial_U,n])$:\footnote{This is a special case of the
    ``alternative'' forcing clause (vi$'$) for the universal
    quantifier in Kripke-Joyal semantics, as given in \citet[p.\
    305]{maclane-moerdijk:1992}.}
  \begin{equation}\label{eq:thm:forcing:goal}
    \forall\Member{n}{\omega}.\
    \Forces{
      (U+1,[\partial_U,n])
    }{
      \Parens{\pi_1[n]}^*\phi(\pi_2[n])
    }
    \tag{G1}
  \end{equation}

  In the same way, our premise can be rewritten as follows:
  \begin{equation}\label{eq:thm:forcing:premise}
    \forall\Member{n}{\omega}.\
    \Forces{
      (U+1,[\partial_U,n])
    }{
      \Later{\Parens{\pi_2[n]}}{\Parens{\pi_1[n]}^*\phi(\pi_2[n])}
    }
    \tag{H1}
  \end{equation}

  To establish~(\ref{eq:thm:forcing:goal}), fix $\Member{m}{\omega}$;
  our goal now becomes:
  \begin{equation}\label{eq:thm:forcing:goal:massaged}
    \Forces{(U+1,[\partial_U,m])}{\Parens{\pi_1[m]}^*\phi(\pi_2[m])}
    \tag{G2}
  \end{equation}

  Next instantiate~(\ref{eq:thm:forcing:premise}) with
  $\IsEq{n}{m+1}$, yielding:
  \begin{equation}\label{eq:thm:forcing:premise:massaged}
    \Forces{
      (U+1,[\partial_U,m+1])
    }{
      \Later{\Parens{\pi_2[m+1]}}{\Parens{\pi_1[m+1]}^*\phi(\pi_2[m+1])}
    }
    \tag{H2}
  \end{equation}

  Using the forcing clause for the later modality, we see that
  (\ref{eq:thm:forcing:premise:massaged}) is actually the same as the
  goal~(\ref{eq:thm:forcing:goal:massaged}).
\end{proof}

\begin{theorem}\label{thm:later-unit}
  We have the following unit law in the internal logic:
  \[
    \ClkForall{\kappa}{
      \forall\Of{\phi}{\Omega}.\
      \phi\Rightarrow\Later{\kappa}{\phi}
      }
  \]
\end{theorem}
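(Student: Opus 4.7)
The plan is to establish this by descending into the Kripke-Joyal semantics of $\ClkTopos$ and doing a case analysis on the time assignment of $\kappa$. Using Lemma~\ref{lem:kj-forall-implies}, it suffices to fix externally an arbitrary world $\mathbf{U} = (U, \partial_U)$, a clock name $\Member{\kappa}{\ClkObj(\mathbf{U})}$, and a proposition $\Member{\phi}{\Omega(\mathbf{U})}$, and then show that $\Forces{\mathbf{U}}{\phi}$ implies $\Forces{\mathbf{U}}{\Later{\kappa}{\phi}}$.

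Next I would split on the value $\partial_U(\kappa)$. In the case that $\partial_U(\kappa) = 0$, the forcing clause for the later modality immediately yields $\Forces{\mathbf{U}}{\Later{\kappa}{\phi}}$ without requiring the hypothesis at all. In the case that $\partial_U(\kappa) = n+1$ for some $n$, the forcing clause reduces the goal to $\Forces{\mathbf{U}[\kappa\mapsto n]}{[\kappa\pluseq 1]^*\phi}$, where $[\kappa\pluseq 1]$ is the morphism $\mathbf{U}[\kappa\mapsto n] \to \mathbf{U}$ in $\CLK$ that increments the clock back to its original value.

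To discharge this goal, I would appeal to the basic monotonicity (stability under restriction) of subobjects in a presheaf topos: because $\Forces{\mathbf{U}}{\phi}$ holds by hypothesis, and $[\kappa\pluseq 1]$ is a morphism into $\mathbf{U}$, the restriction $[\kappa\pluseq 1]^*\phi$ is inhabited at the domain $\mathbf{U}[\kappa\mapsto n]$. This closes both cases.

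There is no real obstacle here; the content of the lemma is essentially a packaged-up instance of the stability of forcing under restriction, combined with the fact that the forcing clause for $\Later{\kappa}$ was cooked up precisely so that increasing the clock's clock-setting by one (equivalently, restricting along $[\kappa\pluseq 1]$) witnesses the unit. The only genuine bookkeeping is verifying that our application of the forall-elimination principle is of the right shape, which Lemma~\ref{lem:kj-forall-implies} handles.
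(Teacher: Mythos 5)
Your proposal is correct and matches the paper's proof essentially step for step: reduce via Lemma~\ref{lem:kj-forall-implies} to an external forcing statement, case on $\partial_U(\kappa)$, handle the zero case trivially, and in the successor case discharge the goal by reindexing (restricting) the hypothesis $\Forces{\mathbf{U}}{\phi}$ along $[\kappa\pluseq1]$. No gaps; nothing further to add.
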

\begin{proof}
  By Lemma~\ref{lem:kj-forall-implies}, it suffices to fix a world
  $\mathbf{U}$ and elements $\Member{\kappa}{\ClkObj\Parens{\mathbf{U}}}$,
  $\Member{\phi}{\Omega\Parens{\mathbf{U}}}$ such that
  $\Forces{\mathbf{U}}{\phi}$. We need to show that
  $\Forces{\mathbf{U}}{\Later{\kappa}{\phi}}$. Proceed by case on
  $\partial_U(\kappa)$:
  \begin{proofcases}
  \item[$\IsEq{\partial_U(\kappa)}{0}$]
    Immediate.

  \item[$\IsEq{\partial_U(\kappa)}{n+1}$] We need to show that
    $\Forces{\mathbf{U}[\kappa\mapsto
      n]}{\Squares{\kappa\pluseq1}^*\phi}$; this follows by reindexing
    our assumption that $\Forces{\mathbf{U}}{\phi}$.
  \end{proofcases}
\end{proof}

\begin{theorem}\label{thm:later-cartesian}
  The later modality commutes with conjunction:
  \[
    \ClkForall{\kappa}{
      \forall\Of{\phi,\psi}{\Omega}.\
      \IsEq{
        \Later{\kappa}{
          \Parens*{\phi\land\psi}
        }
      }{
        \Parens*{
          \Later{\kappa}{\phi}
          \land
          \Later{\kappa}{\psi}
        }
      }
    }
  \]
\end{theorem}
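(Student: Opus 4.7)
The plan is to verify the equality via the Kripke--Joyal forcing semantics described in Figure~\ref{fig:internal-logic-forcing-clauses}, using propositional extensionality internally and then doing an external case analysis on the time assignment to the clock in question.

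First, I would use Lemma~\ref{lem:kj-multi-forall} to reduce the goal to: for every world $\mathbf{U}$, every $\kappa \in \ClkObj(\mathbf{U})$, and every $\phi,\psi \in \Omega(\mathbf{U})$,
\[
  \Forces{\mathbf{U}}{\Later{\kappa}{(\phi\land\psi)}}
  \iff
  \Forces{\mathbf{U}}{\Later{\kappa}{\phi} \land \Later{\kappa}{\psi}}.
\]
Since the forcing clause for $\land$ is pointwise (it decomposes $\Forces{\mathbf{V}}{\alpha\land\beta}$ into $\Forces{\mathbf{V}}{\alpha}$ and $\Forces{\mathbf{V}}{\beta}$ at the same world $\mathbf{V}$), the right-hand side unfolds to the conjunction of two separate forcing statements about $\Later{\kappa}{\phi}$ and $\Later{\kappa}{\psi}$ at $\mathbf{U}$.

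Next, I would do a case split on $\partial_U(\kappa)$. If $\partial_U(\kappa) = 0$, the forcing clause for $\Later{\kappa}{-}$ gives $\top$ for each of $\Later{\kappa}{(\phi\land\psi)}$, $\Later{\kappa}{\phi}$, and $\Later{\kappa}{\psi}$, so both sides are trivially true. If $\partial_U(\kappa) = n+1$, the forcing clause reduces each $\Later{\kappa}{-}$ to the corresponding forcing statement at the reindexed world $\mathbf{U}[\kappa\mapsto n]$ along $[\kappa\pluseq 1]$. Unfolding on the left gives $\Forces{\mathbf{U}[\kappa\mapsto n]}{[\kappa\pluseq 1]^*\phi \land [\kappa\pluseq 1]^*\psi}$, which (again by the pointwise $\land$ clause, noting that reindexing of a subobject commutes with meets) is literally the conjunction $\Forces{\mathbf{U}[\kappa\mapsto n]}{[\kappa\pluseq 1]^*\phi}$ and $\Forces{\mathbf{U}[\kappa\mapsto n]}{[\kappa\pluseq 1]^*\psi}$, which is precisely the unfolding of the right-hand side.

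I do not expect any serious obstacle here: the proof is a direct unfolding of definitions once the equality is reduced to a biconditional of forcing statements, with the only mild subtlety being the justification that reindexing along $[\kappa\pluseq 1]$ commutes with conjunction of subobjects (which holds because pullback of subobjects preserves finite meets in any topos). Internally, one can alternatively phrase the argument more succinctly: propositional extensionality plus the two half-inequalities, each established by a trivial case analysis on $\partial_U(\kappa)$.
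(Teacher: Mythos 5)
Your proposal is correct and follows essentially the same route as the paper's proof: reduce the internal equality to a biimplication of forcing statements at each world, then case on $\partial_U(\kappa)$, with the zero case trivial and the successor case following because $[\kappa\pluseq1]^*$ commutes with $\land$. The only cosmetic difference is that the paper splits the equality into two separate implications (each handled via Lemma~\ref{lem:kj-forall-implies}) rather than treating the biconditional at once, and it leaves the compatibility of reindexing with conjunction implicit where you spell it out.
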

\begin{proof}
  It suffices to prove that each direction of this quantified equation
  is valid at all worlds:
  \begin{gather}
    \ClkForall{\kappa}{
      \forall\Of{\phi,\psi}{\Omega}.\
      \Later{\kappa}{
        \Parens*{\phi\land\psi}
      }
      \Rightarrow
      \Parens*{
        \Later{\kappa}{\phi}
        \land
        \Later{\kappa}{\psi}
      }
    }
    \tag{$\Rightarrow$}
    \\
    \ClkForall{\kappa}{
      \forall\Of{\phi,\psi}{\Omega}.\
      \Parens*{
        \Later{\kappa}{\phi}
        \land
        \Later{\kappa}{\psi}
      }
      \Rightarrow
      \Later{\kappa}{
        \Parens*{\phi\land\psi}
      }
    }
    \tag{$\Leftarrow$}
  \end{gather}

  ($\Rightarrow$) Using Lemma~\ref{lem:kj-forall-implies}, we fix a
  world $\mathbf{U}$ and elements $\Member{\kappa}{\ClkObj\Parens{\mathbf{U}}}$,
  $\Member{\phi,\psi}{\Omega\Parens{\mathbf{U}}}$ such that
  $\Forces{\mathbf{U}}{\Later{\kappa}{\Parens{\phi\land\psi}}}$. We need to
  show that
  $\Forces{\mathbf{U}}{\Later{\kappa}{\phi}\land\Later{\kappa}{\psi}}$. Proceed by
  case on $\partial_U(\kappa)$:
  \begin{proofcases}
  \item[$\IsEq{\partial_U(\kappa)}{0}$] Immediate.
  \item[$\IsEq{\partial_U(\kappa)}{n+1}$] Then our assumption is equal
    to
    $\Forces{\mathbf{U}[\kappa\mapsto
      n]}{\Squares{\kappa\pluseq1}^*\phi\land\Squares{\kappa\pluseq1}^*\psi}$,
    which is exactly the same as our goal.
  \end{proofcases}

  ($\Leftarrow$) This direction is analogous.
\end{proof}

\begin{corollary}\label{cor:later-monotone}
  The later modality is monotonic:
  \[
    \ClkForall{\kappa}{
      \forall\Of{\phi,\psi}{\Omega}.\
      \Parens{\phi\Rightarrow\psi}
      \Rightarrow
      \Later{\kappa}{\phi}
      \Rightarrow
      \Later{\kappa}{\psi}
    }
  \]
\end{corollary}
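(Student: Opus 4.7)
The plan is to derive monotonicity of $\Later{\kappa}$ as a purely propositional consequence of the fact that it preserves binary conjunction (Theorem~\ref{thm:later-cartesian}), without any reappearance of the Kripke--Joyal forcing clauses. The key observation is the standard Heyting-algebra fact: an implication $\phi\Rightarrow\psi$ is equivalent to the equation $\phi \Leftrightarrow \phi\land\psi$, so any operator that commutes with $\land$ automatically respects $\Rightarrow$.

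Concretely, I would reason inside the internal logic of $\ClkTopos$. Fix $\Of{\kappa}{\ClkObj}$ and $\Of{\phi,\psi}{\Omega}$, and assume both $\phi\Rightarrow\psi$ and $\Later{\kappa}{\phi}$. From $\phi\Rightarrow\psi$, propositional reasoning yields $\IsEq{\phi}{\phi\land\psi}$, so the assumption $\Later{\kappa}{\phi}$ may be rewritten as $\Later{\kappa}{(\phi\land\psi)}$. Invoking Theorem~\ref{thm:later-cartesian} (the $\Rightarrow$ direction suffices), this gives $\Later{\kappa}{\phi}\land\Later{\kappa}{\psi}$, whence the second conjunct $\Later{\kappa}{\psi}$ is exactly the desired conclusion.

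There is essentially no obstacle here: once Theorem~\ref{thm:later-cartesian} is in hand, the argument is a routine Heyting-algebra manipulation valid in the internal logic of any topos, and it avoids unfolding the case split on $\partial_U(\kappa)$ that was needed for the cartesianness theorem itself. The only minor point to state carefully is that the rewriting step uses propositional extensionality (i.e.\ that $\Omega$ is extensional as a Heyting algebra), which is of course available in any topos; alternatively, one can just apply $\Later{\kappa}{-}$ to the implication $\phi\Rightarrow(\phi\land\psi)$ obtained from the hypothesis, using functoriality induced by the cartesian structure of $\Later{\kappa}{-}$ on the finite meets.
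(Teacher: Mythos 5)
Your proposal is correct and matches the paper's own proof in essence: both derive monotonicity purely algebraically from Theorem~\ref{thm:later-cartesian} by using propositional extensionality to turn the hypothesis $\phi\Rightarrow\psi$ into the equation $\phi=\phi\land\psi$, and then splitting $\Later{\kappa}{\Parens{\phi\land\psi}}$ into $\Later{\kappa}{\phi}\land\Later{\kappa}{\psi}$. The only difference is cosmetic — the paper rewrites the assumption $\Later{\kappa}{\phi}$ as $\Later{\kappa}{\phi}\land\Later{\kappa}{\psi}$ whereas you rewrite under the modality first — so nothing further is needed.
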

\begin{proof}
  This is a well-known corollary of Theorem~\ref{thm:later-cartesian},
  following for purely algebraic reasons. Reasoning internally, fix
  $\Of{\kappa}{\ClkObj}$ and $\Of{\phi,\psi}{\Omega}$ such that
  $\phi\Rightarrow\psi$ and $\Later{\kappa}{\phi}$; we need to show
  $\Later{\kappa}{\psi}$.

  First, observe that
  $\IsEq{\Parens{\Later{\kappa}{\phi}\land\Later{\kappa}{\psi}}}{\Later{\kappa}{\phi}}$. To
  show that this is the case, by Theorem~\ref{thm:later-cartesian} it
  suffices to show that
  $\IsEq{\Later{\kappa}{\Parens{\phi\land\psi}}}{\Later{\kappa}{\phi}}$. This
  holds, because $\IsEq{\phi\land\psi}{\phi}$:
  $\phi\land\psi\Rightarrow\phi$ is trivial, and
  $\phi\Rightarrow\phi\land\psi$ follows from our assumption
  $\phi\Rightarrow\psi$.

  Returning to our main goal $\Later{\kappa}{\psi}$, using the above,
  we may replace our assumption $\Later{\kappa}{\phi}$ with
  $\Later{\kappa}{\phi}\land\Later{\kappa}{\psi}$, whence we have
  immediately $\Later{\kappa}{\psi}$.
\end{proof}

\begin{theorem}\label{thm:later-commutes-with-implication}
  The later modality commutes with implication:
  \[
    \ClkForall{\kappa}{
      \forall\Of{\phi,\psi}{\Omega}.\
      \IsEq{
        \Later{\kappa}{
          \Parens*{\phi\Rightarrow\psi}
        }
      }{
        \Parens*{
          \Later{\kappa}{\phi}
          \Rightarrow
          \Later{\kappa}{\psi}
        }
      }
    }
  \]
\end{theorem}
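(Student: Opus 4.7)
The plan is to establish both inclusions of the asserted equality in the internal Heyting algebra $\Omega$ of $\ClkTopos$.

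For the forward inclusion $\Later{\kappa}{(\phi \Rightarrow \psi)} \Rightarrow (\Later{\kappa}{\phi} \Rightarrow \Later{\kappa}{\psi})$, I will argue purely algebraically. By Heyting adjointness it suffices to show $\Later{\kappa}{(\phi \Rightarrow \psi)} \wedge \Later{\kappa}{\phi} \leq \Later{\kappa}{\psi}$; Theorem~\ref{thm:later-cartesian} rewrites the left-hand side as $\Later{\kappa}{((\phi \Rightarrow \psi) \wedge \phi)}$, and the internal modus ponens $(\phi \Rightarrow \psi) \wedge \phi \leq \psi$ together with the monotonicity of $\Later_\kappa$ from Corollary~\ref{cor:later-monotone} closes this direction.

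For the reverse inclusion I will unfold the Kripke--Joyal forcing clauses. Using Lemma~\ref{lem:kj-forall-implies}, I fix a world $\mathbf{U}$, a clock $\kappa \in \ClkObj(\mathbf{U})$, and propositions $\phi,\psi \in \Omega(\mathbf{U})$ with $\Forces{\mathbf{U}}{\Later{\kappa}{\phi} \Rightarrow \Later{\kappa}{\psi}}$, and aim to establish $\Forces{\mathbf{U}}{\Later{\kappa}{(\phi \Rightarrow \psi)}}$. The case $\partial_U(\kappa) = 0$ is immediate from the forcing clause for $\Later$. Otherwise $\partial_U(\kappa) = n+1$, and unfolding the forcing clauses for $\Later$ and $\Rightarrow$ reduces the goal to the following claim: for every $\sigma : \mathbf{V} \to \mathbf{U}$ factoring through $[\kappa\pluseq 1] : \mathbf{U}[\kappa \mapsto n] \to \mathbf{U}$ (equivalently, any $\sigma$ with $\partial_V(\sigma^*\kappa) \leq n$), the implication $\Forces{\mathbf{V}}{\sigma^*\phi} \Rightarrow \Forces{\mathbf{V}}{\sigma^*\psi}$ holds.

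My central tactic will be to lift each such $\sigma$ to an auxiliary morphism $\rho : \mathbf{W} \to \mathbf{U}$ with $\partial_W(\rho^*\kappa) = \partial_V(\sigma^*\kappa) + 1$, so that the forcing clause for $\Later$ applied to $\rho^*\Later{\kappa}{\chi}$ at $\mathbf{W}$ unfolds into $\chi$ at a world mapping onto $\mathbf{V}$. I will then invoke the unit principle (Theorem~\ref{thm:later-unit}) to pass from $\Forces{\mathbf{V}}{\sigma^*\phi}$ into the $\Later$-guarded form, apply the hypothesis at $\rho$, and unfold the result back down to $\Forces{\mathbf{V}}{\sigma^*\psi}$. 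When the clock $\sigma^*\kappa$ is not synchronized with any other clock of $\mathbf{U}$ at its time ceiling, the direct lift $\mathbf{W} = \mathbf{V}[\sigma^*\kappa \mapsto \partial_V(\sigma^*\kappa) + 1]$ (sharing its underlying $\FINPlus$-morphism with $\sigma$) is already a well-defined morphism of $\CLK$. The hard part will be the general synchronization case, which I expect to handle by extending $\mathbf{V}$ with a fresh clock name $\kappa^{*}$ at time $\partial_V(\sigma^*\kappa) + 1$ and rerouting $\rho^*\kappa$ to $\kappa^{*}$; verifying that the sieve-valued premise and conclusion transport correctly along the auxiliary $\CLK$-morphisms between $\mathbf{V}$, $\mathbf{W}$, and $\mathbf{W}[\kappa^{*}\mapsto \partial_V(\sigma^*\kappa)]$ will require careful bookkeeping of the Grothendieck-construction structure of $\CLK$.
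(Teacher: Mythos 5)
Your forward inclusion is exactly the paper's argument (Heyting adjunction, Theorem~\ref{thm:later-cartesian}, Corollary~\ref{cor:later-monotone}), and your reverse inclusion begins as the paper does: reduce along the Kripke--Joyal clauses to the problem of lifting $\rho\colon\mathbf{V}\to\mathbf{U}[\kappa\mapsto n]$ to a morphism from the time-incremented world $\mathbf{V}'=\mathbf{V}[\rho^*\kappa\mapsto\partial_V(\rho^*\kappa)+1]$ into $\mathbf{U}$. You are right that this lift can fail: if $\rho$ identifies $\kappa$ with another clock $\kappa'\in\mathcal{N}(U)$ and $\partial_V(\rho^*\kappa)=\partial_U(\kappa')$, then $\partial_{V'}(\rho^*\kappa')=\partial_V(\rho^*\kappa)+1>\partial_U(\kappa')$, so the underlying $\FINPlus$-map is not a $\CLK$-morphism $\mathbf{V}'\to\mathbf{U}$. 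The paper's justification (``$m_1+1\leq m_2+1$ iff $m_1\leq m_2$'') only checks the constraint at $\kappa$ itself and is silent about clocks that $\rho$ merges with $\kappa$; so this is a gap in the paper's own proof, not just a detail you deferred.

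Your proposed repair does not close that gap. Extending $\mathbf{V}$ with a fresh clock $\kappa^*$ does give a $\CLK$-morphism $\mathbf{W}\to\mathbf{U}$, and unfolding the later modality lands at $\mathbf{W}[\kappa^*\mapsto\partial_V(\rho^*\kappa)]$, a world with strictly more clocks than $\mathbf{V}$. To use the instantiated hypothesis you would then need $\phi$ to be forced \emph{there}, but you only have it forced at $\mathbf{V}$. The relevant $\CLK$-map between the two worlds is the one from $\mathbf{V}$ into $\mathbf{W}[\kappa^*\mapsto\partial_V(\rho^*\kappa)]$ that merges $\kappa^*$ with $\rho^*\kappa$; forcing facts restrict along such a map in the \emph{opposite} direction. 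No amount of Grothendieck bookkeeping reverses the variance, so the synchronizing case remains open. (Also note that in the non-synchronizing case the later modality at $\mathbf{W}$ unfolds exactly to the premise at $\mathbf{V}$, so the appeal to Theorem~\ref{thm:later-unit} is superfluous.) I would go further and test whether the reverse inclusion even holds in the synchronizing case: if $g=[\kappa\pluseq1]\circ\rho$ with $\rho$ as above, then $g$ is not of the form $h\circ[h^*\kappa\pluseq1]$ for any $h$ with positive $\kappa$-time, so the sieve inclusion $\Later{\kappa}{\phi}\subseteq\Later{\kappa}{\psi}$ carries no information about $g$; taking $\phi$ the sieve generated by $g$ and $\psi$ empty looks like a counterexample, which would mean the statement itself needs repair before either your proof or the paper's can be completed.
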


\begin{proof}
  As in Theorem~\ref{thm:later-cartesian}, it will be simplest to show
  that each direction of the quantified equation is valid at all
  worlds:
  \begin{gather}
    \ClkForall{\kappa}{
      \forall\Of{\phi,\psi}{\Omega}.\
      \Later{\kappa}{
        \Parens*{\phi\Rightarrow\psi}
      }
      \Rightarrow
      \Parens*{
        \Later{\kappa}{\phi}
        \Rightarrow
        \Later{\kappa}{\psi}
      }
    }
    \tag{$\Rightarrow$}
    \\
    \ClkForall{\kappa}{
      \forall\Of{\phi,\psi}{\Omega}.\
      \Parens*{
        \Later{\kappa}{\phi}
        \Rightarrow
        \Later{\kappa}{\psi}
      }
      \Rightarrow
      \Later{\kappa}{
        \Parens*{\phi\Rightarrow\psi}
      }
    }
    \tag{$\Leftarrow$}
  \end{gather}

  ($\Rightarrow$) We will reason algebraically:
  \begin{align*}
    &\Later{\kappa}{
      \Parens*{\phi\Rightarrow\psi}
    }
    \Rightarrow
    \Parens*{
      \Later{\kappa}{\phi}
      \Rightarrow
      \Later{\kappa}{\psi}
    }
    \\
    &\equiv
    \Later{\kappa}{
      \Parens*{\phi\Rightarrow\psi}
    }
    \land
    \Later{\kappa}{\phi}
    \Rightarrow
    \Later{\kappa}{\psi}
    \tag{${\land}\dashv{\Rightarrow}$}
    \\
    &\equiv
    \Later{\kappa}{
      \Parens*{
        \Parens*{\phi\Rightarrow\psi}
        \land
        \phi
      }
    }
    \Rightarrow
    \Later{\kappa}{\psi}
    \tag{Theorem~\ref{thm:later-cartesian}}
  \end{align*}

  Now, assuming
  $\Later{\kappa}{\Parens{\Parens{\phi\Rightarrow\psi}\land\phi}}$, we
  have to show $\Later{\kappa}{\psi}$. Observe that
  $\Parens{\Parens{\phi\Rightarrow\psi}\land\phi}\Rightarrow\psi$;
  therefore, by monotonicity (Corollary~\ref{cor:later-monotone}) we
  have $\Later{\kappa}{\psi}$, which was our goal.

  ($\Leftarrow$) We will reason externally through
  Lemma~\ref{lem:kj-forall-implies}; fixing a world $\mathbf{U}$ and
  elements $\Member{\kappa}{\ClkObj\Parens{\mathbf{U}}}$,
  $\Member{\phi,\psi}{\Omega\Parens{\mathbf{U}}}$ such that
  $\Forces{\mathbf{U}}{\Later{\kappa}{\phi}\Rightarrow\Later{\kappa}{\psi}}$, we need
  to show that
  $\Forces{\mathbf{U}}{\Later{\kappa}{\Parens{\phi\Rightarrow\psi}}}$. Proceed by case
  on $\partial_U(\kappa)$:
  \begin{proofcases}
  \item[$\IsEq{\partial_U(\kappa)}{0}$] Immediate.
  \item[$\IsEq{\partial_U(\kappa)}{n+1}$] Now we need to show:
    \[
      \Forces{
        \mathbf{U}[\kappa\mapsto n]
      }{
        \Squares{\kappa\pluseq1}^*\phi\Rightarrow\Squares{\kappa\pluseq1}^*\psi
      }
  \]

  Fix $\Of{\rho}{\mathbf{V}\to\mathbf{U}[\kappa\mapsto n]}$ such
  $\Forces{\mathbf{V}}{\rho^*\Squares{\kappa\pluseq1}^*\phi}$ to show that
  $\Forces{\mathbf{V}}{\rho^*\Squares{\kappa\pluseq1}^*\psi}$.
  Writing $\mathbf{V}'$ for
  $\mathbf{V}[\rho^*\kappa\mapsto \partial_V(\rho^*\kappa)+1]$,
  observe that we can form a map
  $\Of{\sigma}{\mathbf{V'}\to\mathbf{U}}$ such that the following
  diagram commutes:
  \[
    \begin{tikzcd}[sep=huge]
      \mathbf{V}
      \arrow[r,"\rho"]
      \arrow[d,swap,"\Squares{\rho^*\kappa\pluseq1}"]
      &
      \mathbf{U}[\kappa\mapsto n]
      \arrow[d,"\Squares{\kappa\pluseq1}"]
      \\
      \mathbf{V'}
      \arrow[r,swap,dashed,"\sigma"]
      &
      \mathbf{U}
    \end{tikzcd}
  \]
  As a map in $\FINPlus$, $\sigma$ is the same as $\rho$; to see that
  it is a map in $\CLK$, observe that $m_1 + 1\leq m_2 + 1$ iff
  $m_1\leq m_2$.
  Now, we have assumed
  $\Forces{\mathbf{U}}{\Later{\kappa}{\phi}\Rightarrow\Later{\kappa}{\psi}}$;
  instantiating this assumption at $\sigma$, we have the following
  external implication:
  \[
    \Forces{\mathbf{V}'}{\Later{\sigma^*\kappa}{\sigma^*\phi}}
    \Rightarrow
    \Forces{\mathbf{V}'}{\Later{\sigma^*\kappa}{\sigma^*\phi}}
  \]

  Observing that the action of $\sigma$ on $\kappa$ is the same as the
  action of $\rho$ on $\kappa$ (since $\ClkObj$ is oblivious to time
  assignments), we can unfold our implication further:
  \[
    \Forces{\mathbf{V}}{\Squares{\rho^*\kappa\pluseq1}^*\sigma^*\phi}
    \Rightarrow
    \Forces{\mathbf{V}}{\Squares{\rho^*\kappa\pluseq1}^*\sigma^*\psi}
  \]

  By the diagram above, we calculate the composition of reindexings:
  \[
    \Forces{\mathbf{V}}{\rho^*\Squares{\kappa\pluseq1}^*\phi}
    \Rightarrow
    \Forces{\mathbf{V}}{\rho^*\Squares{\kappa\pluseq1}^*\psi}
  \]

  But we have already assumed
  $\Forces{\mathbf{V}}{\rho^*\Squares{\kappa\pluseq1}^*\phi}$, and
  $\Forces{\mathbf{V}}{\rho^*\Squares{\kappa\pluseq1}^*\psi}$ is what we
  were trying to prove.

  \end{proofcases}
\end{proof}

\begin{theorem}[L\"ob induction]\label{thm:loeb-induction}
  We have the following \emph{L\"ob induction} principle for the later
  modality:
  \[
    \ClkForall{\kappa}{
      \forall\Of{\phi}{\Omega}.\
      \Parens*{
        \Later{\kappa}{\phi}\Rightarrow\phi
      }
      \Rightarrow \phi
    }
  \]
\end{theorem}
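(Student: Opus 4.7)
The plan is to externalize the statement via Lemma~\ref{lem:kj-forall-implies} and then proceed by strong induction on the natural number $n = \partial_U(\kappa)$. After externalization, the goal becomes the following: for every world $\mathbf{U}$, every $\Member{\kappa}{\ClkObj(\mathbf{U})}$ and every $\Member{\phi}{\Omega(\mathbf{U})}$, if $\Forces{\mathbf{U}}{\Later{\kappa}{\phi}\Rightarrow\phi}$ then $\Forces{\mathbf{U}}{\phi}$. The unifying strategy for both cases of the induction is the same: to establish $\Forces{\mathbf{U}}{\phi}$, it suffices to establish $\Forces{\mathbf{U}}{\Later{\kappa}{\phi}}$ and then discharge the implication hypothesis at the identity reindexing $\mathrm{id}_{\mathbf{U}}$.

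In the base case $\IsEq{\partial_U(\kappa)}{0}$, the forcing clause for $\Later{\kappa}{}$ makes $\Forces{\mathbf{U}}{\Later{\kappa}{\phi}}$ hold trivially, so the conclusion is immediate. In the inductive case $\IsEq{\partial_U(\kappa)}{m+1}$, unfolding the forcing clause reduces $\Forces{\mathbf{U}}{\Later{\kappa}{\phi}}$ to the obligation $\Forces{\mathbf{U}[\kappa\mapsto m]}{[\kappa\pluseq1]^*\phi}$. Since at this new world the time assigned to $\kappa$ has dropped from $m+1$ to $m$, the inductive hypothesis applies, provided we can supply the Löb premise for the reindexed predicate $[\kappa\pluseq1]^*\phi$; the required premise is obtained by reindexing the original hypothesis $\Forces{\mathbf{U}}{\Later{\kappa}{\phi}\Rightarrow\phi}$ along the morphism $\Of{[\kappa\pluseq1]}{\mathbf{U}[\kappa\mapsto m]\to\mathbf{U}}$.

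The main obstacle is a small but delicate bookkeeping step: one must verify that the reindexed implication simplifies to the shape expected by the inductive hypothesis, namely $\Later{\kappa}{[\kappa\pluseq1]^*\phi}\Rightarrow[\kappa\pluseq1]^*\phi$. This requires two facts: first, that $\kappa$ as an element of $\ClkObj=\ell^*\mathcal{N}$ is invariant under maps in $\CLK$ that only alter the time assignment, so reindexing $\kappa$ along $[\kappa\pluseq1]$ yields the same clock name; and second, that the forcing clause for $\Later{\kappa}{}$ commutes appropriately with the reindexing, so that $[\kappa\pluseq1]^*\Later{\kappa}{\phi}$ coincides with $\Later{\kappa}{[\kappa\pluseq1]^*\phi}$ at the new world. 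Once this bookkeeping is settled, the induction goes through cleanly, and the universal quantifiers over $\kappa$ and $\phi$ are absorbed uniformly by Lemma~\ref{lem:kj-forall-implies}.
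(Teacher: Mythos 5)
Your proposal is correct and follows essentially the same route as the paper's own proof: externalize via Lemma~\ref{lem:kj-forall-implies}, induct on $\partial_U(\kappa)$, dispatch the base case trivially and, in the inductive case, discharge the implication at the identity to reduce to $\Forces{\mathbf{U}[\kappa\mapsto m]}{[\kappa\pluseq1]^*\phi}$, supplying the L\"ob premise for the inductive hypothesis by reindexing along $[\kappa\pluseq1]$ and using $\IsEq{[\kappa\pluseq1]^*\kappa}{\kappa}$. The bookkeeping you flag is exactly the step the paper handles by noting that the reindexed hypothesis reads $\Later{[\kappa\pluseq1]^*\kappa}{[\kappa\pluseq1]^*\phi}\Rightarrow[\kappa\pluseq1]^*\phi$.
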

\begin{proof}
  By Lemma~\ref{lem:kj-forall-implies}, it suffices to show that for
  all $\mathbf{U}$ and
  $\Member{\kappa}{\ClkObj\Parens{\mathbf{U},\kappa}}$, we have the
  external proposition $P\Parens{\mathbf{U},\kappa}$, defined as follows:
  \[
    P\Parens{\mathbf{U,\kappa}} \triangleq
    \forall\Member{\phi}{\Omega\Parens{\mathbf{U}}}.\
    \Parens{
      \Forces{\mathbf{U}}{
        \Later{\kappa}{\phi}
        \Rightarrow
        \phi
      }
    }
    \Rightarrow
    \Forces{\mathbf{U}}{\phi}
  \]

  We proceed by induction on $\partial_U(\kappa)$; in what follows, we
  will write $\mathbf{U}_n$ for $\mathbf{U}[\kappa\mapsto n]$.
  \begin{proofcases}
  \item[$\IsEq{\partial_U(\kappa)}{0}$] We need to establish
    $P(\mathbf{U}_0,\kappa)$. Fix
    $\Member{\phi}{\Omega\Parens{\mathbf{U}_0}}$ such that
    $\Forces{\mathbf{U}_0}{\Later{\kappa}{\phi}\Rightarrow\phi}$, to
    show $\Forces{\mathbf{U}_0}{\phi}$. Instantiating our assumption
    with the identity morphism, it suffices to show that
    $\Forces{\mathbf{U}_0}{\Later{\kappa}{\phi}}$; but this is
    trivial, since the value of $\kappa$ is $0$.
  \item[$\IsEq{\partial_U(\kappa)}{n+1}$] Our induction hypothesis is
    $P(\mathbf{U}_n,\kappa)$, and we need to show
    $P(\mathbf{U}_{n+1},\kappa)$. Fix
    $\Member{\phi}{\Omega\Parens{\mathbf{U}_{n+1}}}$ such that
    $\Forces{\mathbf{U}_{n+1}}{\Later{\kappa}{\phi}\Rightarrow\phi}$,
    to show $\Forces{\mathbf{U}_{n+1}}{\phi}$. Instantiating this
    assumption with the identity morphism, it suffices to show
    $\Forces{\mathbf{U}_{n+1}}{\Later{\kappa}{\phi}}$, which is the
    same as
    $\Forces{\mathbf{U}_n}{\Squares{\kappa\pluseq1}^*\phi}$. To
    establish this, we instantiate our induction hypothesis with
    $\Squares{\kappa\pluseq1}^*\phi$, and it remains to show
    $\Forces{\mathbf{U}_n}{\Later{\kappa}{\Squares{\kappa\pluseq1}^*\phi}\Rightarrow\Squares{\kappa\pluseq1}^*\phi}$. We
    have assumed
    $\Forces{\mathbf{U}_{n+1}}{\Later{\kappa}{\phi}\Rightarrow\phi}$,
    so by reindexing we have
    $\Forces{\mathbf{U}_n}{\Later{\Squares{\kappa\pluseq1}^*\kappa}{\Squares{\kappa\pluseq1}^*\phi}\Rightarrow\Squares{\kappa\pluseq1}^*\phi}$. This
    is the same as our goal, because
    $\IsEq{\Squares{\kappa\pluseq1}^*\kappa}{\kappa}$.
  \end{proofcases}
\end{proof}

\begin{definition}[Totality]\label{def:totality}
  An object $\Of{X}{\ClkTopos}$ is called \emph{total} if its action
  on all restriction maps $[\kappa\pluseq{n}]$ is a surjection.\footnote{This
  is the analogous condition to the one described
  in~\citet{birkedal-mogelberg-schwinghammer-stovring:2011},
  generalized to the case of multiple clocks.}
\end{definition}

\begin{definition}[Inhabitedness]\label{def:inhabitedness}
  An object $\Of{X}{\ClkTopos}$ is called \emph{inhabited} when the
  formula $\exists x:X.\ \top$ is valid in the internal logic of
  $\ClkTopos$.
\end{definition}

The constant objects (such as $\Nat$) are all total; but note that an
object may be \emph{total} without being \emph{constant}: for
instance, the subobject classifier is total. In our development, we
have only needed the fact that $\Nat$ is total.

\begin{theorem}\label{thm:total-yank-existential}
  Suppose that an object $\Of{Y}{\ClkTopos}$ is total and inhabited
  (Definitions~\ref{def:totality},\ref{def:inhabitedness}). Then, if we
  \emph{later} have an element of $Y$ that satisfies $\phi$, we can
  also \emph{now} exhibit an element of $Y$ that \emph{later}
  satisfies $\phi$.
  \[
    \forall\Of{\kappa}{\ClkObj}.\
    \forall\Of{\phi}{\Omega^Y}.\
    \Later{\kappa}{
      \Parens*{
        \exists\Of{y}{Y}.\
        \phi(y)
      }
    }
    \Rightarrow
    \exists\Of{y}{Y}.\
    \Later{\kappa}{
      \phi(y)
    }
  \]
\end{theorem}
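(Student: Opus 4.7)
The plan is to establish this via the Kripke-Joyal forcing semantics of $\ClkTopos$, mirroring the style used to prove Theorems~\ref{thm:forcing} and~\ref{thm:loeb-induction}. By Lemma~\ref{lem:kj-forall-implies}, it suffices to fix a world $\mathbf{U}$, elements $\Member{\kappa}{\ClkObj(\mathbf{U})}$ and $\Member{\phi}{\Omega^Y(\mathbf{U})}$, assume $\Forces{\mathbf{U}}{\Later{\kappa}{\Parens*{\exists y:Y.\ \phi(y)}}}$, and exhibit externally some $\Member{y}{Y(\mathbf{U})}$ together with a proof that $\Forces{\mathbf{U}}{\Later{\kappa}{\phi(y)}}$. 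I would then split on the value $\partial_U(\kappa)$.

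In the case $\IsEq{\partial_U(\kappa)}{0}$, the goal $\Forces{\mathbf{U}}{\Later{\kappa}{\phi(y)}}$ holds vacuously for any candidate $y$, so the only remaining work is to produce some element $\Member{y}{Y(\mathbf{U})}$. This is where I would invoke the inhabitedness hypothesis (Definition~\ref{def:inhabitedness}): since $\exists y:Y.\ \top$ is valid in the internal logic, unfolding its forcing clause at $\mathbf{U}$ yields externally an element of $Y(\mathbf{U})$, which settles this case.

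In the case $\IsEq{\partial_U(\kappa)}{n+1}$, unfolding the forcing clause for $\Later{\kappa}{-}$ in the hypothesis gives
\[
  \Forces{\mathbf{U}[\kappa\mapsto n]}{\exists y:Y.\ [\kappa\pluseq1]^*\phi(y)},
\]
and unfolding the clause for $\exists$ produces externally an element $\Member{y'}{Y(\mathbf{U}[\kappa\mapsto n])}$ with $\Forces{\mathbf{U}[\kappa\mapsto n]}{[\kappa\pluseq1]^*\phi(y')}$. Now I would apply the totality hypothesis (Definition~\ref{def:totality}): since $Y$ is total, the action of the restriction $[\kappa\pluseq1]$ on $Y$ is surjective, so we may lift $y'$ to some $\Member{y}{Y(\mathbf{U})}$ with $[\kappa\pluseq1]^* y = y'$. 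Reading the forcing clause for $\Later{\kappa}{-}$ in the other direction, $\Forces{\mathbf{U}}{\Later{\kappa}{\phi(y)}}$ reduces precisely to $\Forces{\mathbf{U}[\kappa\mapsto n]}{[\kappa\pluseq1]^*\phi(y)}$, which by naturality of reindexing is the same as $\Forces{\mathbf{U}[\kappa\mapsto n]}{[\kappa\pluseq1]^*\phi(y')}$ --- exactly what we extracted.

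The principal subtlety --- and the reason the hypotheses on $Y$ are indispensable --- lies in the successor case: the Kripke-Joyal semantics of the inner $\exists$ only guarantees a witness at the \emph{restricted} world $\mathbf{U}[\kappa\mapsto n]$, whereas the outer $\exists$ in the goal demands a witness at $\mathbf{U}$ itself. Totality is precisely what lets us transport the witness back along $[\kappa\pluseq1]$, while inhabitedness covers the degenerate zero-clock case where no witness can be extracted from the (vacuous) hypothesis at all. Once both hypotheses are in hand, no further use of internal principles such as Theorem~\ref{thm:later-cartesian} is needed; the argument is purely a matter of unfolding forcing clauses and applying the presheaf structure of $Y$.
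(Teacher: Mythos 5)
Your proposal is correct and follows essentially the same route as the paper's proof: both reduce to Kripke-Joyal forcing via Lemma~\ref{lem:kj-forall-implies}, case on $\partial_U(\kappa)$, use inhabitedness to supply a vacuous witness at time zero, and use totality to lift the existential witness from $\mathbf{U}[\kappa\mapsto n]$ back along $[\kappa\pluseq1]$ in the successor case. Your closing remark correctly identifies the role of each hypothesis, matching the paper's argument exactly.
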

\begin{proof}
  Using Lemma~\ref{lem:kj-forall-implies}, fix a world $\mathbf{U}$ and a
  predicate $\Member{\phi}{\Omega^Y\Parens{\mathbf{U}}}$ such that
  $\Forces{\mathbf{U}}{\Later{\kappa}{\Parens{\exists\Of{y}{Y}.\
        \phi(y)}}}$; we need to show
  $\Forces{\mathbf{U}}{\exists\Of{y}{Y}.\ \Later{\kappa}{\phi(y)}}$.
  Proceed by case on $\partial_U(\kappa)$:
  \begin{proofcases}
  \item[$\IsEq{\partial_U(\kappa)}{0}$] Then it suffices to exhibit an
    arbitrary element of $Y$ at $\mathbf{U}$, since the predicate is
    trivial at this world. But we have already assumed $Y$ to be
    inhabited, so we are done.

  \item[$\IsEq{\partial_U(\kappa)}{n+1}$] In this case, our assumption
    amounts to the following external existential:
    \[
      \Forces{\mathbf{U}[\kappa\mapsto n]}{
        \exists\Of{y}{Y}.\
        \Squares{\kappa\pluseq1}^*\phi(y)
      }
    \]

    Unfolding the forcing clause for existential quantification, this
    means that we have an element
    $\Member{\alpha}{Y\Parens{\mathbf{U}[\kappa\mapsto n]}}$ such that
    the following holds:
    \[
      \Forces{\mathbf{U}[\kappa\mapsto n]}{
        \Squares{\kappa\pluseq1}^*\phi(\alpha)
      }
      \tag{H}
    \]

    Our goal was to show that
    $\Forces{\mathbf{U}}{\exists\Of{y}{Y}.\ \Later{\kappa}{\phi(y)}}$;
    because $Y$ is total, from $\alpha$ we can get an element
    $\Member{\beta}{Y\Parens{\mathbf{U}}}$ such that
    $\IsEq{\alpha}{\Squares{\kappa\pluseq1}^*\beta}$.

    Now it remains only to show that
    $\Forces{\mathbf{U}}{\Later{\kappa}{\phi(\beta)}}$; at this world, this
    is the same as to say that
    $\Forces{\mathbf{U}[\kappa\mapsto
      n]}{\Later{\kappa}{\Squares{\kappa\pluseq1}^*\phi\Parens{\Squares{\kappa\pluseq1}^*\beta}}}$. Because
    $\IsEq{\alpha}{\Squares{\kappa\pluseq1}^*\beta}$, this is the same
    as (H).
  \end{proofcases}
\end{proof}

\ifreport%
\else%
\section{Construction of \ClockCTT{}}\label{sec:appendix:attic}
In this appendix, we present the full versions of the definitions
which were truncated in the main body of the paper.

In Figures~\ref{fig:full-programs} and~\ref{fig:full-opsem}, we show
the unabridged inductive definition of programs $\ITm{n}$ and their
operational semantics; in Figure~\ref{fig:full-term-elab}, we give the
full definition of the program elaboration function $\Sem{-}$;
finally, in Figure~\ref{fig:full-monotone-operator} we give the full
definition of the monotone operator $\TSFun{\sigma}$ on candidate type
systems.

\begin{figure*}
  \begin{minipage}[c]{1.0\linewidth}
    \input{figures/programs}
  \end{minipage}
  \caption{The full inductive definition of the programs with $n$
    free variables
    $\Of{\ITm{n}}{\ClkTopos}$.}\label{fig:full-programs}
\end{figure*}

\begin{figure*}
  \begin{minipage}[c]{1.0\linewidth}
    \input{figures/opsem}
  \end{minipage}
  \caption{The full small-step operational semantics for closed
    \ClockCTT{} programs.}\label{fig:full-opsem}
\end{figure*}

\begin{figure*}
  \begin{minipage}[c]{1.0\linewidth}
    \input{figures/term-elab}
  \end{minipage}
  \caption{The full definition of the program elaboration function $\Sem{-}$.}\label{fig:full-term-elab}
\end{figure*}

\begin{figure*}
  \begin{minipage}[c]{1.0\linewidth}
    \input{figures/monotone-operator}
  \end{minipage}
  \caption{The full definition of the monotone operator on candidate
    type systems, $\TSFun{\sigma}$.}\label{fig:full-monotone-operator}
\end{figure*}

\fi%

\ifreport%
\else%
\section{Validated Rules}
Finally, we present the full set of inference rules for \ClockCTT{}
which we have verified in our Coq formalization.

\fi%



\nocite{atkey-mcbride:2013}
\bibliographystyle{plainnat}
\bibliography{references/refs}

\end{document}